\newcommand{\wt}[1]{\text{wt}({#1})}
\newcommand{\mc}[1]{\mathcal{#1}}
\newcommand{\nix}[1]{}
\newcommand{\ket}[1]{|#1\rangle}
\renewcommand{\algorithmicrequire}{\textbf{Input:}}
\renewcommand{\algorithmicensure}{\textbf{Output:}}
\newtheorem{theorem}{Theorem}
\newtheorem{corollary}[theorem]{Corollary}
\newtheorem{lemma}[theorem]{Lemma}
\begin{document}
%
\title{Decoding Algorithms for Hypergraph Subsystem Codes and Generalized Subsystem Surface Codes}
%
%
%
\author{Vinuta~V.~Gayatri,
        and~Pradeep~Kiran~Sarvepalli,~\IEEEmembership{Member,~IEEE}
        
\thanks{Vinuta V.~Gayatri is currently  with Madstreet Den Technologies Pvt Ltd., Bangalore, India.}
\thanks{Pradeep Kiran Sarvepalli is with Department of Electrical Engineering, Indian Institute of Technology Madras, India.}
}
\maketitle

\begin{abstract}
Topological subsystem codes can combine the advantages of both topological codes and subsystem codes.  Suchara et al. proposed a framework based on hypergraphs for construction of such codes. They also studied the performance  of some subsystem codes.  Later Bravyi et al. proposed a subsystem surface code. Building upon these works, we propose efficient decoding algorithms for large classes of  subsystem codes on hypergraphs and surfaces.  We also propose a construction of the subsystem surface codes that includes the code proposed by Bravyi et al. Our simulations for the subsystem code on the square octagon lattice resulted in a noise threshold of 1.75\%.  This is comparable to previous result of 2\% by Bombin et al. who used a different algorithm. 
\end{abstract}

\begin{IEEEkeywords}
quantum codes, topological codes, subsystem codes,   surface codes, decoding 
\end{IEEEkeywords}

%
\IEEEpeerreviewmaketitle

\section{Introduction}
\IEEEPARstart{S}{ubsystem} codes were proposed with a view to simplify quantum error correction procedures \cite{bacon06a,kribs05,kribs05b}. 
Subsystem codes have been studied extensively since their introduction \cite{aliferis06,Aly2006,poulin05,bacon06b,ps08,Bombin2009,Suchara2010,ps2012}.
To understand how they help recall that in a 
typical error correction cycle we need to i) measure  the syndrome, 
ii)  estimate the error  from the syndrome,    and iii) apply the  estimated error as correction. 
Subsystem codes can be beneficial in the first and last steps. 

Typically,  quantum codes require many body operators for measuring the syndrome.
Subsystem codes can potentially simplify the syndrome measurement process by breaking down the many qubit measurement into a simpler set of measurements, 
where each measurement involves a fewer number of qubits. The outcomes of these measurements are combined classically to obtain the 
measurement of the original operator.
Subsystem codes  also allow for a greater degree of freedom in the correction operator. 
In addition, we can tolerate errors while encoding and syndrome measurement. 

To fully leverage the advantages mentioned above we also need to develop efficient decoders for subsystem codes. 
Therefore this paper focusses on developing efficient decoders for certain classes of subsystem codes, specifically topological subsystem codes  (TSCs),  which
are suitable for fault tolerant quantum computing. 

The first construction of topological subsystem codes is due to Bombin \cite{Bombin2009}.
The stabilizer generators of these codes are local, unlike the Bacon-Shor code. 
(Large weight stabilizers are not preferred from a fault tolerance point of view.) 
They are derived from color codes and are also called  topological subsystem color codes (TSCCs).
In these codes, the syndrome measurements can be performed   by 
measuring  two qubit   operators.
Furthermore, each syndrome can be reconstructed from  $O(1)$ such local operators. 

Suchara et al.  \cite{Suchara2010} proposed a framework based on hypergraphs within which one could construct TSCCs and many other topological subsystem codes. 
They also  proposed  two step decoding algorithms for some specific subsystem codes.
Using the framework of \cite{Suchara2010} many new classes of topological subsystem codes were proposed in \cite{ps2012}.
However, no decoding algorithms were proposed for them. 

While many topological subsystem codes can be constructed using the hypergraph framework, not all TSCs can be  constructed within this framework. 
Bravyi et al. proposed a TSC outside this framework in \cite{Bravyi2012}; it was called the subsystem surface code (SSC). 
For this code, syndrome measurements require 3-qubit measurements. 
This was also suitable to a 2D implementation like the planar surface code. 
This code was built from a square lattice and it left open the question of a more general construction of such codes. 

In this paper we are concerned with the {\em problem of efficiently decoding topological subsystem codes from hypergraphs and generalized subsystem surface  codes}. 
Our paper builds on results in \cite{Suchara2010,Bravyi2012,ps2012}. First, we generalize the two step decoding algorithms of \cite{Suchara2010} to large classes of hypergraph subsystem codes.
We then propose new constructions of surface subsystem codes and develop efficient decoders for them. 
Along the way we also prove some structural results on the hypergraph subsystem codes and subsystem surface codes.

A detailed summary of our contributions is as follows:
\begin{compactenum}[(i)]
\item  For the cubic subsytem codes derived (from color codes) we 
propose a one step decoding algorithm which we call the colored matching algorithm. 
\item  We show that the TSCCs  can be decoded in a two step process, 
where the bit flip errors are corrected first and the phase flip errors are corrected by mapping them to a color code. 
This is built on  the observation in  \cite{Suchara2010} about the subsystem code on square octagon lattice.
We decode this code using this method and show that it has a threshold of 1.75\% which is comparable to the 
threshold of 2\% obtained by \cite{Bombin2012} who used a renormalization group decoder algorithm. 
\item We then study decoding algorithms for a class of quantum codes which generalize the five-squares subsystem code proposed in \cite{Suchara2010}.
Unlike the previous class of codes the decoding of these codes leads to a mapping on two copies of a surface code. 
We identify explictily the surface codes onto which the subsystem codes are mapped.

\item We propose a new construction for subsystem surface codes. 
For all these codes, syndrome  can be extracted by means of three-qubit measurements. 
Using this construction we obtain a family of subsytem codes with lower over head than the codes proposed in \cite{Bravyi2012}.

\item We show that the proposed subsystem surface codes can be decoded by mapping to a pair of surface codes.
This result generalizes the algorithm of \cite{Bravyi2012} to the proposed subsystem surface codes. 

\end{compactenum}

The rest of the work is organized as follows: In section \ref{sec:bg}, there are basic mathematical preliminaries and the constructions of 
subsystem codes required to understand the decoding algorithms we have proposed. In section \ref{sec:cssccodes}, we present two algorithms to decode cubic subsystem color codes. 
In sections \ref{sec:TSScs}, \ref{sec:NUR3HSSCs} and \ref{sec:SSSCs}, we  present different decoding algorithms for topological subsystem color codes, 
generalized five squares subsystem codes and subsystem surface codes, respectively. In section \ref{sec:results}, the simulation results for the subsystem code on the square octagon lattice 
are given. 
We conclude in section \ref{sec:conclusion}.

\section{Background}\label{sec:bg}

\subsection{Mathematical preliminaries}

In this section, we present a brief review of some graph theoretic concepts which we use in the discussions to follow.

A graph $\Gamma$  is an ordered pair ($\mathsf{V}(\Gamma)$, $\mathsf{E}(\Gamma)$) where $\mathsf{V}(\Gamma)$ is a set of vertices in $\Gamma$ and $\mathsf{E}(\Gamma)$ is a set of edges in $\Gamma$. 
If there are $m$ edges incident on a vertex $v$, then $v$ is said to have a degree of $m$. 
 A face in a graph is a region bounded by edges.

A path is a sequence of vertices $(v_1,v_2,\dots v_k)$ where  there exists an edge $(v_i,v_{i+1})$ for all $ i \in 1,2,\dots k-1$
and all vertices are distinct except possibly $v_1$ and $v_k$. A graph is called a connected graph if there exists a path between each pair of vertices. A closed path in a graph 
is called a cycle. That is, in the path $(v_1,v_2,\dots v_k)$ if $v_1 = v_k$, then it is a cycle. A cycle $\sigma$ is also represented by an ordered set of edges $(e_1,e_2,\dots e_k)$. 
Every vertex in a cycle has an even degree with respect to the edges in the cycle.

For a graph $\Gamma$ with a set of vertices $V(\Gamma)$ and set of edges $E(\Gamma)$, a matching 
 is a subgraph of $\Gamma$ such that there is at most one edge incident on each vertex $v \in V(\Gamma)$. In other words, no two edges in the matching 
 share a common vertex. Perfect matching is a matching where there is exactly one edge incident on every vertex. A weighted graph is a graph where each edge 
 is associated with a weight which is, in general, a positive number. Cost of a matching in a weighted graph is the sum of the weights of the edges of the matching. 
 Minimum weight matching is the matching such that it has the lowest cost among all the possible matchings.

The dual of a graph $\Gamma$ embedded on a surface is a graph which has a vertex for every face in $\Gamma$ and an edge between two vertices if the corresponding faces on $\Gamma$ are adjacent. 
The dual of $\Gamma$ is denoted as $\Gamma^*$. 

The medial graph of a graph $\Gamma$ is defined as a graph which has a vertex for each edge in $\Gamma$ and  an edge between two such vertices if the corresponding edges in $\Gamma$ are incident on the same vertex in $\Gamma$.
We denote the medial graph of
$\Gamma$ by $\Gamma_m$.

A hypergraph $\mathcal{H}$ is two-tuple $(\mathsf{V(}\mathcal{H}),\mathsf{E}(\mathcal{H}))$, where  $\mathsf{V(}\mathcal{H})$
 is the vertex set and $\mathsf{E}(\mathcal{H}) $ is the collection of edges. 
 An edge $e$ is a subset of $\mathsf{V(}\mathcal{H})$. 
 An edge with two vertices is called a simple edge and the one with more than two vertices is called a hyperedge. 
If $e=(u_1,u_2, \ldots, u_k)$ we say it is a rank-$k$ edge and $e$ is said to be incident on $u_i$ for $1\leq i\leq k$. 
 In this paper we  restrict our attention to 
edges of rank two or three. 

 A hypercycle  $\sigma$ is a collection of  edges in a hypergraph 
 such that every vertex in the support of the edges has an even degree with respect to the edges. 
 Figure \ref{example_hypergraph} shows cycles in a hypergraph.

\begin{figure}[h!]
\centering
\resizebox{0.1\textwidth}{!}{%
  \includegraphics{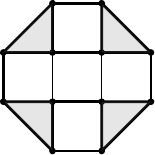}
}
\caption{A rank-$3$ hypergraph: The gray triangles are the hyperedges of the graph. 
The four inner rank-2 edges form a rank-2 cycle while the hyperedges along 
with the bold edges form a hypercycle.}
\label{example_hypergraph}
\end{figure}

We denote the vertices, edges of the graph $\Gamma$  by $\mathsf{V}(\Gamma)$, $\mathsf{E}(\Gamma)$ while the faces of the embedding of $\Gamma$ are denoted by $\mathsf{F}(\Gamma)$. 
Also $\mathsf{F}_c(\Gamma)$ and $\mathsf{E}_c(\Gamma)$ denote the set of all $c$-colored faces and $c$-colored edges respectively. If $\Gamma$ is a cubic graph such that the faces are 
3-colorable, then it is called a 2-colex. In a 2-colex, the edge colouring is induced by the faces.  An   edge connecting $c$-colored faces is also $c$-colored. In the dual graph $\Gamma^*$, 
the set of vertices corresponding to $c$-colored faces of $\Gamma$ is denoted by $\mathsf{V}_c(\Gamma^*)$.

\subsection{Stabilizer codes and subsystem codes}

\subsubsection{Stabilizer Codes}
We quickly review the stabilizer formalism, see \cite{Gottesman97,calderbank98,Gottesman} for more details. 
Denote by $\mathbb{P}_n$ the Pauli group on $n$ qubits.
 Let $S$ be a commutative subgroup of $ \mathbb{P}_n$, such that $-I\not\in S$.
The stabilizer code defined by $S$ is the joint +1-eigenspace of $S$; 
 Mathematically, it can be written as,
\[
S = \{ h \in \mathbb{P}_n \text{  }|\text{  } h\ket{\psi} = \ket{\psi} \mbox{ for all } \ket{\psi} \in Q \}
\]
$S$ is called the stabilizer of $Q$.
An $[[n,k]]$ stabilizer code encodes $k$ qubits into $n$ qubits and it
is completely characterized by $n-k$ independent stabilizer generators. 
The measurement outcome of the stabilizer generators is called the syndrome. 
 If there is a nontrivial syndrome, it indicates the presence of an error.

The centralizer of $S$, 
denoted  $C(S)$, is defined as 
\[
C(S) = \{ h \in \mathbb{P}_n \text{  }|\text{  } gh = hg \mbox{ for all } g \in S \}
\]

Since the elements of $S$ act trivially on the elements of $Q$,
errors in $\langle iI, S\rangle$ are harmless. 
If the error operator is not an element of $C(S)$, then the error can be detected. 
 An element of $C(S)$, but not an element of $\langle iI, S \rangle$, acts like a logical error and  cannot be detected.
(These relations are summarized in Fig.~\ref{st-gp-formalism}.)

The weight of a Pauli error $g$ is defined as the number of qubits on which it acts nontrivially and denoted $\wt{g}$. 
 An  $[[n,k]]$ quantum code is said to have distance $d$ where
 \begin{eqnarray}
 d=\min \{\wt{e} \mid e\in C(S)\setminus \langle iI, S\rangle  \}.
 \end{eqnarray}
 An $[[n,k,d]]$ code can detect all errors with weight up to $d-1$. 

Suppose an error $e\in \mathbb{P}_n$ occurs on the code space.  
When we measure the stabilizer generator $g\in S$, it  results in the syndrome $s \in \{0,1\}$
 where $ge =(-1)^s eg$.
 
\begin{figure*}[t!]
    \centering
    \begin{subfigure}[t]{0.5\textwidth}
        \centering
			\includegraphics{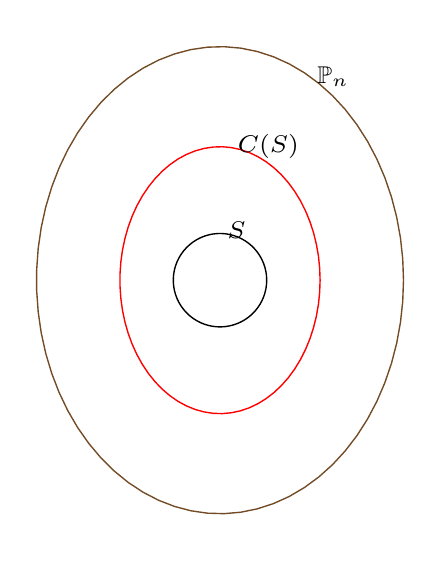}
\caption{Stabilizer codes}
    \end{subfigure}%
    ~ 
    \begin{subfigure}[t]{0.5\textwidth}
        \centering
			\includegraphics{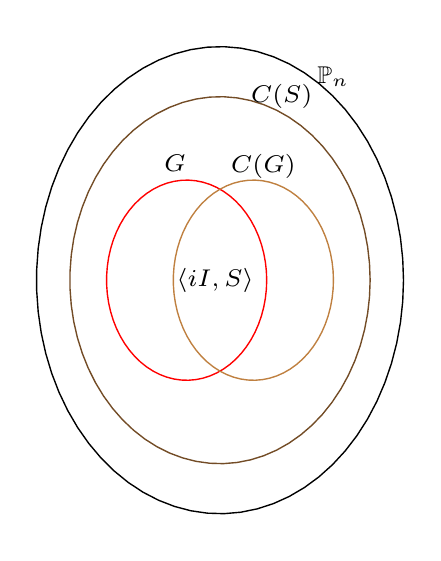}
\caption{Subsystem codes}
    \end{subfigure}
    \caption{(Color online) Relations among groups related to stabilizer codes and susbystem codes}
    \label{st-gp-formalism}
\end{figure*}

\subsubsection{Subsystem codes}
We review subsystem codes briefly, see \cite{bacon06a,Aly2006,poulin05} for an introduction.
Let $\mathcal{G}$ be a subgroup of the $n$ qubit Pauli group and $C(\mathcal{G})$ the centralizer of $\mathcal{G}$. 
Let $S$ be a (maximal) subgroup of $C(\mathcal{G})\cap \mathcal{G}$ such that $-I\not\in S$.
Then the subsystem code defined by $\mathcal{G}$ is +1-eigenspace of $S$.
The group $\mathcal{G}$ is said to be the gauge group of the subsystem code and elements of $\mathcal{G}$ gauge operators. 
Errors in $\mathcal{G}$ are considered to be harmless. 
Errors outside $C(S)$ are detectable  while errors  in $C(S)\setminus \mathcal{G}$ are  undetectable.
Suppose that $\mathcal{G} $ and  $S$ have has $2r+s$ and $s$ generators respectively. 
Then $\mathcal{G}$ defines an $[[n,k,r,d]]$ subsystem code that encodes $k=n-r-s$ qubits into $n$ qubits and has distance $d$ where
\begin{eqnarray}
d= \min \{\wt{e}\mid e\ \in C(S)\setminus \mathcal{G} \}
\end{eqnarray}
If $k=0$, then $d= \min \{\wt{e}\mid e\ \in  \mathcal{G}, e\neq \lambda I \}$.
We say the code has $r$ gauge qubits.
The subsystem code can detect all errors up to $d-1$ qubits.
(The relations between $\mathcal{G}$ and related groups are summarized in Fig.~\ref{st-gp-formalism}.)

\subsection{Topological codes}

In this paper we are interested in a class of quantum codes called topological codes \cite{kitaev03}, see also \cite{Bombin2013} for an introduction. 
Of particular relevance are the families of surface codes \cite{kitaev03} and color codes \cite{Bombin2006}. We review them briefly.

\noindent
\subsubsection{Surface codes.}
 Let $\Gamma$ be a graph embedded on a closed surface. %
Qubits are attached to the edges of $\Gamma$.
We define two types of Pauli operators as follows:
\begin{eqnarray}
A_v = \prod_{e \in \delta v} X_e \mbox{ and }
B_f = \prod_{e \in \partial f} Z_e, \label{eq:ktc-stab}
\end{eqnarray}
where $\delta v $ is the set of edges that are incident on the vertex $v$ and $\partial f $ is the set of edges that form the boundary of the face $f$. Operators of the type $A_v$ are called vertex operators while $B_f$
are called plaquette operators. 
We define the surface code on $\Gamma$ to be the stabilizer code whose stabilizer is given by 
\[
S(\Gamma) = \langle A_v,B_f | v \in V(\Gamma), f \in F(\Gamma) \rangle
\]
If $\Gamma$ has $n_v$ vertices and $n_f$  faces, then there are $n_v - 1$ and $n_f - 1$ independent 
vertex and plaquette operators in $S(\Gamma)$.
$A_v$ and $B_f$ generators. 
The surface code encodes $2g$ qubits into $n_e$ qubits, where $n_e$ is the number of edges and 
$g$ is the genus of the surface on which $\Gamma$ is embedded. 
So it is an $[[n_e,2g]]$ quantum code. 
For instance, if $\Gamma$ is embedded on a torus as in Fig.~\ref{surface}, it encodes two qubits corresponding to the two independent cycles of nontrivial homology. 

\begin{figure*}[t!]
    \centering
    \begin{subfigure}[t]{0.5\textwidth}
        \centering
			\includegraphics{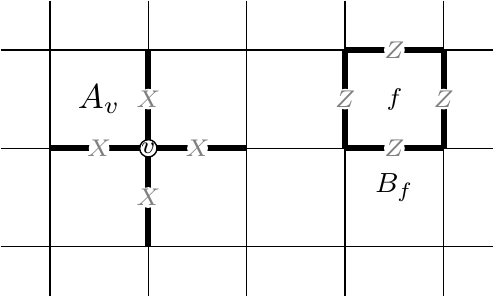}
\caption{Surface code on a square lattice. 
Qubits are on edges. Stabilizer generators $A_v$ and $B_f$
are defined as in Eq.~\eqref{eq:ktc-stab}.} \label{surface}
    \end{subfigure}%
    ~ 
    \begin{subfigure}[t]{0.5\textwidth}
        \centering
			\includegraphics{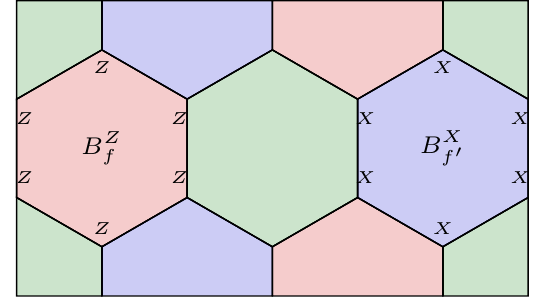}
\caption{(Color online) Color code on a hexagonal lattice. Qubits are on vertices. Stabilizer generators $B_f^X$ and $B_{f'}^Z$
are defined as in Eq.~\eqref{eq:stab_color}.} \label{honeycomb}
    \end{subfigure}
    \caption{Topological codes}
\end{figure*}

\medskip
\noindent
\subsubsection{Color codes.} A color  codes is defined  by a trivalent graph $\Gamma_2$ which is 3-face-colorable.
By convention, these colors are taken to be red, green and blue and labeled $r$, $g$ and $b$ respectively. Such a graph is also
3-edge-colorable.
The edge connecting two faces of the same color, say $r$, is also colored with $r$ and so on. In case of color codes, a qubit is attached to every vertex of the graph. There are two stabilizers associated with  each face of this lattice as given below : 
\begin{eqnarray}
B_f^X = \prod_{v \in f} X_v \mbox{  and }
B_f^Z = \prod_{v \in f} Z_v \label{eq:stab_color}
\end{eqnarray}
The stabilizer for the color code is generated by $B_f^X$, $B_f^Z$ for all $f\in\mathsf{F}(\Gamma_2)$.
The honeycomb lattice is an example of a color code, see Fig.~\ref{honeycomb}.
A method to obtain 2-colexes is given in Construction~\ref{alg:tcc-construction}.

\begin{construction}[h]
\caption{{\ensuremath{\mbox{Color codes from graphs \cite{Bombin07}}}}}\label{alg:tcc-construction}
\begin{algorithmic}
\STATE Given an arbitrary graph, color each face with $x \in \{r,g,b\}$ and split each edge into two edges. Color the faces formed after splitting the edges with $y \in \{r,g,b\} \backslash x$. Then change each vertex of degree $d$ to a face with $d$ edges and color them with $z \in \{r,g,b\} \backslash \{x,y\}$.
\end{algorithmic}
\end{construction}

Note that the 2-colex from Construction~\ref{alg:tcc-construction} has three kinds of faces: those obtained from faces, edges and vertices. They are called $f$-faces, $e$-faces and $v$-faces respectively. All the faces of a certain type are colored the same. 

\subsection{Subsystem codes from graphs and hypergraphs}
\label{ssec:hg-subsys-codes}

This subsection reviews topological subsystem codes based on cubic lattices and hypergraphs. Suchara et al. proposed a framework within which we can construct subsystem codes from cubic graphs and rank-3 hypergraphs \cite{Suchara2010}. A subsystem code is constructed from a trivalent  rank-3 hypergraph  $\mathcal{H}$ as follows. Qubits are placed on the vertices of $\mathcal{H}$.
We associate to each edge of the hypergraph a Pauli operator $K_e$ called edge operator
\begin{eqnarray}
K_e &=& \left\{ \begin{array}{lcl}g_ug_v &\text{ if } & (u,v) \in \mathsf{E}_2(\mathcal{H})\\
Z_uZ_vZ_w&\text{ if }&(u,v,w) \in \mathsf{E}_3(\mathcal{H})
\end{array}
\right.
\end{eqnarray}
 subject to the following constraints: 
\begin{eqnarray}
|e\cap e' |\leq 1 &\text{ and }& K_e K_{e'} = (-1)^{|e\cap e'|} K_{e'} K_{e}
\end{eqnarray}

From the hypergraph we derive a standard graph $\mathcal{\overline{H}}$ without hyperedges by promoting each hyperedge $(u,v,w) $ to three regular 
edges $(u,v)$, $(v,w)$ and $(w,v)$. This derived graph is used to define the gauge group $\mathcal{G}$ of a subsystem code. We define the link operators $\overline{K}_e$ as 
\begin{eqnarray}
\overline{K}_{u,v} &=& \left\{ \begin{array}{lcl}K_e &\text{ if } & (u,v) \in \mathsf{E}_2(\mathcal{H}) \\
Z_uZ_v&\text{ if }& (u,v) \subset (u,v,w) \in \mathsf{E}_3(\mathcal{H})
\end{array}\right.\\
\mathcal{G} &=& \langle \overline{K}_{u,v}\mid (u,v) \in \mathsf{E}_2(\mathcal{\overline{H}}) \rangle,
\end{eqnarray}
where $\mathcal{G}$ is the group generated by the link operators. 

The hypergraph also defines the centralizer of the gauge group, denoted by $C(\mathcal{G})$.
Let $\sigma$ be a hypercycle of $\mathcal{H}$. Then we define as
the cycle operator $W_\sigma $
\begin{eqnarray}
W_\sigma &=& \prod_{e\in \sigma } K_e\\
\text{Then }C(\mathcal{G}) &=& \langle W_\sigma\mid \sigma \text{ is a hypercycle of } \mathcal{H}\rangle
\end{eqnarray}
Using this framework we can construct many families of subsystem codes. We restrict our attention to the following 
 families of codes within this framework.

\subsubsection{Cubic subsystem color codes } \label{sssec:cubic-subsys}
Consider a cubic graph (that is also bipartite). We consider only the case when the graph $\Gamma$ is a 2-colex. 
Then $\Gamma$ is 3-face-colorable and  3-edge-colorable. We can assign the edge operator 
$K_e$ based on the color. Without loss of generality, we assume 
\begin{eqnarray}
K_e= \left\{\begin{array}{cl }X_uX_v & (u,v) \in \mathsf{E}_x(\Gamma)\\ 
Y_uY_v & (u,v) \in \mathsf{E}_y(\Gamma)\\ 
Z_uZ_v & (u,v) \in \mathsf{E}_z(\Gamma),
\end{array}\label{eq:cubic-link-ops}
\right.
\end{eqnarray}
where $x,y,z$ are three distinct colours. 
We have the following dependency among the edge operators of a cubic subsystem code. 
\begin{eqnarray}
\prod_e K_e = I\label{eq:cubic-gauge-dep}
\end{eqnarray}
As a consequence of the above assignment, a $c$-colored face is bounded by an alternating sequence of edges of color $\{x,y,z \}\setminus \{c\}$ where $c\in \{x,y,z \}$. Every cycle leads to a stabilizer. Therefore, the stabilizer associated to each  face is as follows. 
\begin{eqnarray}
B_f= \left\{\begin{array}{cl }\prod_{v\in f} X_v & f \in \mathsf{F}_x(\Gamma)\\ 
\prod_{v\in f} Y_v & f \in \mathsf{F}_y(\Gamma)\\ 
\prod_{v\in f} Z_v & f \in \mathsf{F}_z(\Gamma)
\end{array}\label{eq:cubic-stab-ops}
\right.
\end{eqnarray}
There are also additional stabilizer generators coming from homologically nontrivial cycles of $\Gamma$.  We call these codes  {\em cubic subsystem color codes} indicating that the underlying cubic graph is a 2-colex. We can easily show 
the following result about these codes. 
\begin{lemma}[Cubic subsystem color codes]\label{lm:cubic-params}
A 2-colex with $n$ vertices  gives rises to  an $[[n,0,n/2-1,2]]$ subsystem code. 
\end{lemma}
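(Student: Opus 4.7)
The plan is to verify the three nontrivial parameters $r = n/2-1$, $k=0$, and $d=2$ (the value $n$ is just the number of qubits, namely $|\mathsf{V}(\Gamma)|$). I will compute the number of independent generators of $\mathcal{G}$ and of its centre $S = \mathcal{G}\cap C(\mathcal{G})$, use the relation ``$\mathcal{G}$ has $2r+s$ independent generators'' together with $s = |S|$ to extract $r$, and then read off $k = n-r-s$. The distance is then handled by exhibiting a weight-$2$ element of $\mathcal{G}$ and ruling out any weight-$1$ element.

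First I count the gauge generators. Since $\Gamma$ is cubic, $|\mathsf{E}(\Gamma)| = 3n/2$, and $\mathcal{G}$ is generated by the $3n/2$ link operators $K_e$ in \eqref{eq:cubic-link-ops}. I claim that \eqref{eq:cubic-gauge-dep} is the only dependency among them: if $\prod_{e\in T} K_e = \pm I$ for some $T \subseteq \mathsf{E}(\Gamma)$, then at every vertex $v$ the product of the Paulis on the incident $T$-edges must be scalar, which (since the three incident edges carry three distinct colours and $X_v Y_v Z_v \propto I$) forces $|\delta v \cap T|\in\{0,3\}$. Connectedness of $\Gamma$ then propagates this locally: if some vertex has all three edges in $T$, so do all its neighbours, and so on, giving $T \in \{\emptyset,\mathsf{E}(\Gamma)\}$. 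Hence $\mathcal{G}$ has exactly $3n/2-1$ independent generators modulo $\pm I$.

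Next I identify $S$ with the cycle space of $\Gamma$. Using $K_e K_{e'} = (-1)^{|e\cap e'|} K_{e'} K_e$, an element $\prod_{e\in T} K_e$ lies in the centre iff for every edge $e'=(u,v)$ one has $|\delta u\cap T|+|\delta v\cap T|\equiv 0\pmod{2}$, which by connectedness forces $|\delta v\cap T|\bmod 2$ to be the same value at every vertex of $\Gamma$. Both parities occur: the even case is precisely the (binary) cycle space of $\Gamma$, while the odd case is the coset ``cycles $+\ \mathsf{E}(\Gamma)$'' (since every vertex has odd degree $3$ in $\mathsf{E}(\Gamma)$), and the two coincide as subgroups of $\mathcal{G}$ because $\mathsf{E}(\Gamma)$ represents the trivial element by the previous step. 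Thus $S$ is isomorphic to the cycle space of $\Gamma$, of dimension $|\mathsf{E}(\Gamma)|-|\mathsf{V}(\Gamma)|+1 = n/2+1$, so $s = n/2+1$, $2r = (3n/2-1)-s = n-2$, yielding $r=n/2-1$ and $k = n-r-s = 0$.

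For the distance, each $K_e$ is a nontrivial element of $\mathcal{G}$ of weight $2$, so $d\le 2$. For $d\ge 2$, suppose some $g = \prod_{e\in T} K_e$ acted as a single nontrivial Pauli on a vertex $v_0$ and trivially elsewhere. Then at every $u\neq v_0$ the Paulis from incident $T$-edges again multiply to a scalar, so $|\delta u\cap T|\in\{0,3\}$. Let $W = \{u\neq v_0:|\delta u\cap T|=3\}$; every $T$-edge has both endpoints in $W\cup\{v_0\}$, and every vertex of $W$ has all three neighbours in $W\cup\{v_0\}$, so by connectedness either $T=\emptyset$ or $W=\mathsf{V}(\Gamma)\setminus\{v_0\}$. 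In the latter case all three edges at $v_0$ lie in $T$, so $g$ acts trivially at $v_0$ as well, contradicting the weight-$1$ assumption. The trickiest step I expect is the centre computation, where one must reconcile the two parity cases through the single gauge dependency $\prod_e K_e = \pm I$ to obtain the clean identification with the cycle space.
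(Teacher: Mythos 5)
Your proof is correct and follows essentially the same route as the paper's: count the $3n/2-1$ independent link operators, identify the stabilizer $S$ with the cycle space of $\Gamma$ of dimension $n/2+1$ (the paper obtains the same number via Euler's formula applied to faces plus the $2g$ homologically nontrivial cycles), and then solve $2r+s=3n/2-1$ to get $r=n/2-1$, $k=n-r-s=0$ and $d=2$. You additionally justify steps the paper merely asserts --- that $\prod_e K_e=I$ is the only dependency among the link operators and that $\mathcal{G}$ contains no weight-one element --- which is a welcome strengthening rather than a departure in method.
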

\begin{proof}
In a 2-colex the number of edges is $e=3n/2$. Therefore the number of linearly independent gauge generators is $3n/2-1$.
The number of independent stabilizer generators is given by $s=f+2g-1$. Since $n-3n/2+f=2-2g$, we have $f=n/2+2-2g$
and $s= n/2+1$. Let $r$ be the number of gauge qubits. Then $3n/2-1=2r+s$, giving $r=n/2-1$ gauge qubits. 
The dimension of the subsystem code is  $k=n-s-r=0$. Fnally the distance is two since, $\min \wt{\mathcal{G}}=2$.
\end{proof}

\subsubsection{Topological subsystem color codes }

A method to construct topological subsystem codes \cite{Bombin2009} is to start with a 2-colex and perform the following sequence of operations. 
Expand every vertex into a triangle and split every edge into a pair of edges as shown in Fig.~\ref{fig:vertex-expansion}. We identify every triangle with a rank-3 edge. The resulting hypergraph is trivalent and 3-edge-colorable. We color the rank-3 edges one color say blue, 
and the rank-2 edges incident on every hyperedge red and green in an alternating fashion as shown in Fig.~\ref{fig:vertex-expansion}. 
An  example is shown in the Fig.~\ref{tsc2color}.
Then we can assign the link operators as in Eq.~\eqref{eq:cubic-link-ops}.

\begin{figure}[htb]
\centering
  \includegraphics{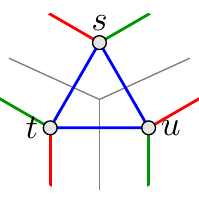}
\caption{(Color online) A vertex of the 2-colex along with its incident edges is shown (grey) and the subsequent vertex expansion shown in color. The rank-3 edge is always coloured blue. The rank-2 edges are coloured with red (solid) and green   in an alternating fashion going clockwise around a rank-3 edge.} \label{fig:vertex-expansion}
\end{figure}

\begin{figure}[h]
\centering
\includegraphics[width=80mm]{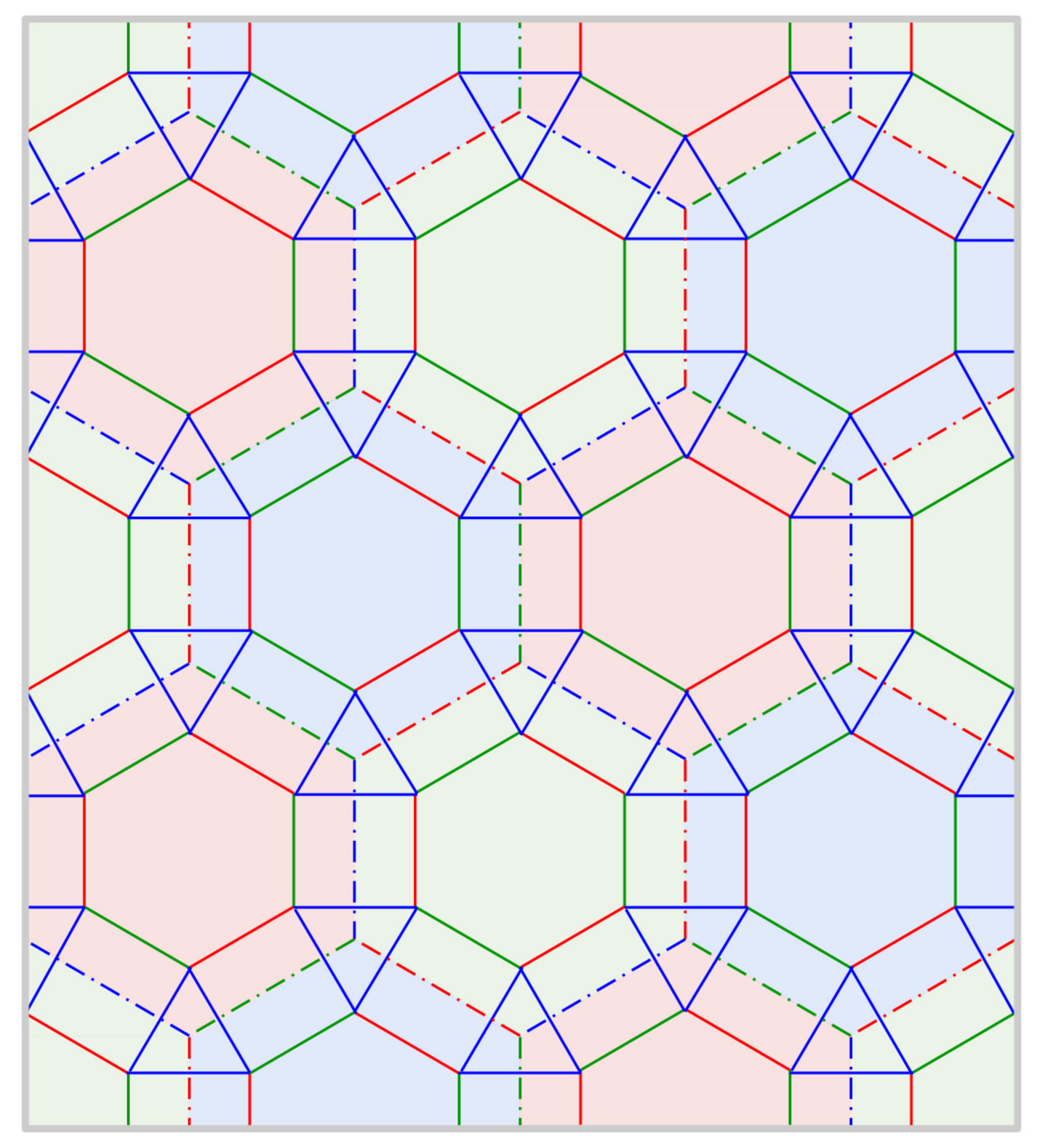}
\captionof{figure}{(Color online) Illustrating vertex expansion on the hexagonal lattice.}
\label{tsc2color}
\end{figure}

In the hypergraph generated from the 2-colex by vertex expansion, we can associate two independent cycles for each face:
i) one rank-2 cycle and ii) one hypercycle. Therefore we obtain two independent stabilizer generators for each face. The  cycles lead to the following stabilizer generators. 
\begin{eqnarray}
W_1^f   &=& \prod_{v\in f} Z_v \label{eq:rank2-stab}\\
W_2^f   &=& \prod_{(u_i,v_i,w_i) \in f} X_{u_i} Y_{v_i} Y_{w_i} \label{eq:rank3-stab}
\end{eqnarray}

The codes obtained from this construction have the following parameters. 
\begin{lemma}[Topological subsystem color codes \cite{Bombin2009}]\label{lm:tscc-params}
A 2-colex $\Gamma$, embedded on a surface of genus $g$, on vertex expansion leads to a $[[3n,2g,2n+2g-2,d\geq \ell]]$ subsystem code, where
$n$ is the number of vertices in the 2-colex and $\ell$ is the length of the smallest cycle of nontrivial homology in $\Gamma$. 
\end{lemma}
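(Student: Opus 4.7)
The plan is to extract the parameters $N$, $s$, $k$, and $d$ individually from the combinatorics of $\Gamma$ and the hypercycle structure of $\mathcal{H}$. Vertex expansion replaces each of the $n$ vertices of $\Gamma$ by a triangle of three qubits, so $N = 3n$; since $\Gamma$ is trivalent, $|\mathsf{E}(\Gamma)| = 3n/2$, and Euler's formula on the genus-$g$ surface yields $|\mathsf{F}(\Gamma)| = n/2 + 2 - 2g$.

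Next I would enumerate the stabilizer generators. Each face $f$ of $\Gamma$ produces two independent hypercycles in $\mathcal{H}$---the rank-2 boundary and a rank-3 hypercycle formed from the hyperedges at the original vertices of $\partial f$---giving the generators $W_1^f$ and $W_2^f$ of \eqref{eq:rank2-stab} and \eqref{eq:rank3-stab}. This produces $2|\mathsf{F}(\Gamma)| = n + 4 - 4g$ face generators. I would argue that they satisfy exactly two independent global relations: because every expanded vertex lies on exactly two faces of $\Gamma$, the product $\prod_f W_1^f$ collapses to $\prod_v Z_v^2 = I$, and a parallel parity bookkeeping of the $X, Y, Y$ decoration at each hyperedge (each hyperedge is simultaneously in the three faces meeting at its original vertex) produces a second dependency $\prod_f W_2^f = I$. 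No further linear relation survives because the supports of distinct $W_i^f$ are pinned to individual faces, so $s = n + 2 - 4g$.

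For $k$, I would use the homological content of $C(\mathcal{G})$: modulo the face stabilizers, the remaining cycle operators are parametrized by the first homology $H_1(\Sigma; \mathbb{F}_2)$ of the underlying surface $\Sigma$. Rank-2 and rank-3 hypercycle representatives of the same homology class give operators of different Pauli type that anticommute on their overlap, so each of the $\dim H_1(\Sigma; \mathbb{F}_2) = 2g$ independent classes contributes a full anticommuting pair $(\overline{X}, \overline{Z})$ of logical operators. Hence $k = 2g$ and $r = N - k - s = 3n - 2g - (n + 2 - 4g) = 2n + 2g - 2$.

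Finally, the distance bound follows from the same projection. Any $e \in C(S) \setminus \mathcal{G}$ is equivalent, modulo $\mathcal{G}$, to a cycle operator $W_\sigma$ whose image $\pi(\sigma) \subset \Gamma$ is homologically nontrivial and therefore has length at least $\ell$. Since each edge of $\pi(\sigma)$ contributes at least one qubit to $\text{supp}(W_\sigma)$, we obtain $\wt{W_\sigma} \geq \ell$, so $d \geq \ell$. The main obstacle in the whole plan is verifying the two global stabilizer dependencies honestly and making the hypercycle-to-homology correspondence precise; once these bookkeeping steps are in hand, all four parameters follow from Euler characteristic arithmetic.
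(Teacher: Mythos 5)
The paper does not actually prove this lemma---it is quoted from \cite{Bombin2009}---so your proposal is being measured against the standard argument rather than against a proof in the text. Your overall strategy (qubit count, generator count minus dependencies, homology for $k$, projection for $d$) is the right one and the arithmetic is internally consistent, but the two global stabilizer dependencies on which the value $s=n+2-4g$ rests are both false as stated, and you correctly flagged this as the load-bearing step. First, $\prod_f W_1^f \neq I$: after vertex expansion each qubit corresponds to a \emph{(vertex, face)} incidence of $\Gamma$, so the supports of the operators $W_1^f$ are pairwise disjoint and partition all $3n$ qubits (the paper relies on exactly this disjointness in the proof of Lemma~\ref{lm:tscc-x-errors}); hence $\prod_f W_1^f=\prod_v Z_v$, not $I$, and your premise that ``every expanded vertex lies on exactly two faces'' is not correct. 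Second, $\prod_f W_2^f\neq I$: each hyperedge lies in all three operators $W_2^f$ attached to the three faces meeting at its parent vertex, so every qubit receives exactly three Pauli factors drawn from $\{X,Y\}$, and a product of an odd number of such factors always has a nontrivial $X$-component and can never cancel to the identity. The two genuine relations instead mix the two generator types and are indexed by pairs of colors: a product $\prod_{f\in \mathsf{F}_c\cup\mathsf{F}_{c'}} W_2^f$ covers every vertex of $\Gamma$ an even number of times, reduces to a $Z$-type operator, and is then absorbed by a suitable product of $W_1^f$'s; the three color pairs yield exactly two independent relations. Without identifying these correctly, your derivation of $s$ (and hence of $r$) is circular, since the count of ``exactly two dependencies'' is forced only if one already knows the answer.

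Two smaller points. Your claim that rank-2 and rank-3 hypercycle representatives of the \emph{same} homology class anticommute is doubtful---as in the toric code, it is representatives of transverse classes that intersect an odd number of times; the conclusion $k=2g$ is right but the symplectic pairing needs to be set up more carefully. And in the distance bound, $d=\min \{\wt{e}\mid e \in C(S)\setminus \mathcal{G}\}$ is a minimum over \emph{dressed} logical operators: showing that the bare cycle-operator representative $W_\sigma$ of a nontrivial class has weight at least $\ell$ does not immediately bound $\wt{e}$ for an arbitrary $e$ equivalent to $W_\sigma$ modulo $\mathcal{G}$, since multiplication by gauge operators can shrink the support. That last step needs an argument that gauge multiplication cannot reduce the support below one qubit per edge of the projected homologically nontrivial cycle.
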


\subsubsection{Hypergraph subsystem codes}
Another class of subsystem codes  were proposed in \cite{ps2012} using the framework of Bravyi et al. 
They are also based on color codes.  The central idea behind the constructions of
\cite{ps2012} is to come up with a trivalent,  3-edge colorable hypergraph by promoting some of the edges of a 2-colex to rank-3 edges. On promoting some rank-2 edge to rank-3 edges the resulting graph is not necessarily trivalent, so additional rank-2 edges are added to make for those vertices that are deficient in degree. Since the 2-colex is 3-edge colorable, with a little care we can ensure that the resulting hypergraph is also 3-edge-colorable.

\begin{construction}[h]
\caption{\ensuremath{\mbox{Hypergraph subsystem codes from color codes \cite{ps2012}}}}\label{alg:hg-tsc-construction}
\begin{algorithmic}
\STATE
Given a 2-colex $\Gamma_2$ and a subset of $r$-colored faces $\mathsf{F} \subseteq \mathsf{F}_r(\Gamma_2)$ such that $|f| \equiv 0 \text{ mod } 4$ and $|f|>4$ for all $f \in \mathsf{F}$. Inside each $f \in \mathsf{F}$, add a face $f'$ such that $f'$ has $|f|/2$ edges. Take an alternating set of edges in the boundary of $f$ and promote them to hyperedges such that (i) the third vertex of a hyperedge comes from the face $f'$ and (ii) the hyperedges do not cross each other.
\end{algorithmic}
\end{construction}

We consider two families of hypergraph subsystem codes: 
\begin{compactenum}[i)]
\item $\mathsf{F} = \mathsf{F}_r(\Gamma_2)$ 
\item $\mathsf{F} \subsetneq \mathsf{F}_r(\Gamma_2)$
\end{compactenum}

We call the codes from first choice as uniform rank-3 hypergraph subsystem codes and the codes from the latter as generalized five-squares subsystem codes. 
These codes are explored in Sections.~\ref{sec:TSScs}~and~\ref{sec:NUR3HSSCs} respectively.

\section{Decoding Cubic subsystem color codes}\label{sec:cssccodes}

The prototype cubic subsystem code is Kitaev's honeycomb model \cite{Kitaev2006}. 
This was not originally viewed as a subsystem code. Suchara et al. \cite{Suchara2010}
made this connection to subsystem codes clear. Although the cubic subsystem codes are $[[n,0]]$ quantum codes,  
it is instructive to study them. 
For instance, Kells et al.~\cite{Kells2008} showed that that the honeycomb model can be approximated by a toric code.
Recently, Lee et al. \cite{lee2017} showed that the honeycomb model can be viewed as an approximate error correcting code. 
Suchara et al. studied the hexagonal subsystem code to obtain valuable insights into syndrome measurement using two qubit operators.
These developments indicate that it is worthwhile to study cubic subsystem codes.

We restrict our attention to the cubic subsystem color codes i.e.  cubic subsystem codes based on 2-colexes, see 
Section~\ref{sec:bg} for a brief review.
We  study the decoding of these codes and propose a number of algorithms which have natural extensions to hypergraph subsystem codes which encode a nontrivial number of logical qubits. The key ideas are much more 
transparent in the context of cubic subsystem codes.

\subsection{A  decoder for cubic subsystem color codes} 

We propose  a  decoding algorithm for the cubic subsystem codes. 
This generalises the method proposed in \cite{Suchara2010} to cubic subsystem codes. 
In this method all the $X$ errors are corrected in the first step. This could introduce some new $Z$ type errors. In the next step all the 
$Z$ errors are mapped to 
a surface code. We then decode the errors on the surface code. 
The resulting error estimate (on the surface code) is then lifted to the subsystem code leading to an error estimate for the subsystem code. This completes the error correction cycle. 

Two errors that differ by an element of the gauge group cannot be distinguished by the subsystem code. 
We consider such errors equivalent. The next lemma shows that for every error there is an equivalent error that has a  structure which simplifies the error correction. 

\begin{lemma}[Error equivalence modulo gauge group]\label{lm:-cubic-reduction}
An error on a cubic subsystem code is equivalent to another error which has
\begin{enumerate}[(a)]
\item  at most one  $X$ or $Y$ error per $z$-face 
\item  $Z$ errors on the remaining qubits
\end{enumerate} 
\end{lemma}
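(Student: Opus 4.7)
The plan is to reduce the error one $z$-face at a time. In a 2-colex every vertex is incident to exactly one face of each colour, so the $z$-faces partition the qubits; moreover every boundary edge of a $z$-face $f_z$ is $x$- or $y$-coloured with both endpoints on $f_z$. Consequently the boundary link operators of $f_z$ act only on the qubits of $f_z$, and the reductions we perform on different $z$-faces do not interfere.

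Fix a $z$-face $f_z$ with boundary cycle $v_1, v_2, \ldots, v_{2k}$ (alternating $x$- and $y$-edges), and write the error on $v_i$ as $X^{a_i} Z^{b_i}$ up to phase. The link operator $X_{v_i} X_{v_{i+1}}$ on an $x$-edge flips $a_i$ and $a_{i+1}$ and leaves all $b$-coordinates alone, while $Y_{v_i} Y_{v_{i+1}}$ on a $y$-edge flips $a_i, a_{i+1}$ and additionally flips $b_i, b_{i+1}$. In either case the induced move on the $a$-string $a \in \mathbb{F}_2^{2k}$ flips a pair of adjacent coordinates. The $\mathbb{F}_2$-subspace spanned by these $2k$ pair-flips is precisely the even-weight subspace of $\mathbb{F}_2^{2k}$ (a standard cycle-space count: any $2k-1$ of them are linearly independent, and each has weight $2$), so $a$ may be reduced, by multiplication with boundary links, to the zero vector when $\sum_i a_i$ is even and to a single-coordinate indicator when $\sum_i a_i$ is odd.

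After this reduction at most one qubit of $f_z$ has $a = 1$, i.e.\ carries $X$ or $Y$, and every other qubit has $a = 0$ and therefore carries $Z$ or $I$. Performing the reduction on every $z$-face yields the equivalent error described in (a) and (b). The one point that needs attention is that multiplication by $y$-edge links perturbs the $b$-coordinates as a side effect; but since the lemma imposes no constraint on those $Z$'s this is harmless, and in particular no appeal to the $z$-edge link operators or to any stabilizer generator is needed in the argument.
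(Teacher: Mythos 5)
Your proof is correct and follows essentially the same route as the paper's: both reduce the error face-by-face using the $x$- and $y$-coloured boundary link operators of each $z$-face, which act only on that face's qubits and leave only $Z$'s behind. The paper makes the move explicit via the telescoping product $K_{e_1}K_{e_2}\cdots K_{e_i}$ that transports an $X$ to a reference qubit of the face, while you package the same step as the observation that the adjacent pair-flips span the even-weight subspace of $\mathbb{F}_2^{2k}$; the content is the same.
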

\begin{proof}
Let $\Gamma_2$ be the 2-colex on which the cubic subsystem code is defined. 
The $z$-colored faces of $\Gamma_2$  are disjoint and cover all the vertices of $\Gamma_2$, it suffices to show that the error on any such face is equivalent  to  a $Z$-only error. Let $f \in \mathsf{F}_z(\Gamma_2)$. 
Since all faces in 2-colex have even number of edges we can assume that 
the link operators for the edges in the  boundary of this face are given by $K_{e_1}$, \ldots, $K_{e_{2\ell}}$ where 
$e_{2i} \in \mathsf{E}_x (\Gamma_2)$ and $e_{2i+1} \in \mathsf{E}_y(\Gamma_2)$.
The product of the link operators of the first $i<2\ell$ edges will give us 
\begin{eqnarray}
K_{e_1} K_{e_2}\cdots K_{e_i} = \left\{\begin{array}{cl}X_1Z_2 \ldots Z_iX_{i+1} & i \text { odd }\\
X_1Z_2 \ldots Z_iY_{i+1} & i \text { even }\end{array} \right.
\end{eqnarray}
It follows from these equations that  $X_{i+1}$ is equivalent to $X_1Z_2\cdots Z_i$ if $i$ is odd, else 
$X_{i+1}$ is equivalent to $X_1Z_2\cdots Z_iZ_{i+1}$.
Therefore, an arbitrary error can be reduced to at most one $X$ error and a combination of $Z$ errors on the remaining qubits. 
\end{proof}

Figure \ref{ex:-cubic-reduction} shows an illustration of lemma \ref{lm:-cubic-reduction}. The blue-colored edges of the hexagon are associated with the link operator $YY$ and green-colored edges are associated with 
the link operators $XX$.The hexagon represents a $red$-colored face of a 2-colex on which the subsystem code is defined. An equivalent of an error is obtained by multiplying the error with the link operator associated 
with an adjoining edge.

\begin{figure}
\centering
  \includegraphics{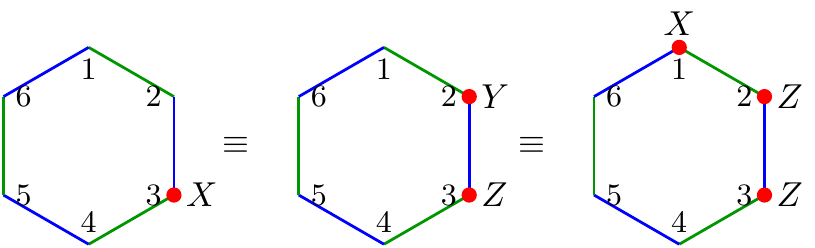}
\caption{(Color online) The green edges represent $XX$ operators and the blue edges represent $YY$ operators. Thus $X$ error on qubit 3 of the hexagonal face, ie. $X_3$ is equivalent to $Y_2Z_3$, 
which is in turn equivalent to $X_1Z_2Z_3$.}
\label{ex:-cubic-reduction}
\end{figure}

\begin{corollary}[Correcting $X$ errors up to gauge]
An $X$ error on a cubic subsystem code can be corrected up to a gauge operator by 
measuring the $Z$-stabilizers. 
\end{corollary}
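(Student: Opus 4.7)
The plan is to derive the corollary directly from Lemma~\ref{lm:-cubic-reduction} by exploiting the canonical form it provides. First, I would invoke Lemma~\ref{lm:-cubic-reduction} to assume without loss of generality that the error is already of the form it describes: at most one $X$ or $Y$ on each $z$-face, and $Z$-type errors on the remaining qubits. This substitution is legitimate because errors differing by a gauge operator are treated as equivalent, and since the $z$-faces of a 2-colex are disjoint and cover every vertex, the reduction can be carried out face by face.

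Second, I would verify that the syndromes of the $Z$-face stabilizer $B_f^Z=\prod_{v\in f}Z_v$ for $f\in\mathsf{F}_z(\Gamma_2)$ locate unambiguously the faces carrying an $X$ or $Y$ error in this canonical form. The operator $B_f^Z$ commutes with any $Z$ on its qubits and with any Pauli on qubits outside $f$, while it anticommutes with an $X$ or $Y$ on any qubit of $f$. Thanks to the ``at most one per face'' property, the measured value of $B_f^Z$ is nontrivial precisely on those $z$-faces carrying such an error, so the $Z$-stabilizer syndromes alone identify the set of affected faces.

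Finally, for each flagged $z$-face $f$ I would apply $X_{v_f}$ at a fixed vertex $v_f\in f$ as the correction. Denoting the single non-$Z$ error on $f$ by $P_v\in\{X_v,Y_v\}$, the residual on $f$ is $X_{v_f}P_v$. Reusing the chain-of-link-operators identity $K_{e_1}K_{e_2}\cdots K_{e_i}=X_{v_f}Z_{\cdot}\cdots Z_{\cdot}P_{v}$ derived in the proof of Lemma~\ref{lm:-cubic-reduction}, this residual equals an element of $\mathcal{G}$ times a string of $Z$-errors on the qubits of $f$ lying between $v_f$ and $v$. Hence the $X$/$Y$-portion of the original error is corrected up to a gauge operator, as claimed. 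The one subtlety, which is the main point to articulate carefully, is that ``up to a gauge operator'' here refers only to the $X$/$Y$ content of the residual: the correction may create fresh $Z$-residuals on the affected face, and this is precisely why the overall decoder is organized as a two-step procedure, with those $Z$-errors handled in the subsequent step.
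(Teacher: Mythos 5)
Your proposal is correct and follows essentially the same route as the paper: reduce to the canonical form of Lemma~\ref{lm:-cubic-reduction}, use the disjoint supports of the $z$-face $Z$-stabilizers to localize the flagged faces, and apply a single $X$ correction per flagged face. Your added care in tracking the residual $Z$-string produced by the correction (and noting it is deferred to the second decoding step) is a point the paper only states implicitly afterwards, but it does not change the argument.
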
\label{co:cubic-x-clean}
\begin{proof}
By Lemma~\ref{lm:-cubic-reduction}, we know that every error is equivalent to an error that contains at most one $X$ or $Y$ error on every $z$-face of $\Gamma_2$. 
By Eq.~\eqref{eq:cubic-stab-ops}, the stabilizer associated to $f$ is given by 
\begin{eqnarray}
B_f & = & \prod_{e\in \partial f} K_e = \prod_{v\in f} Z_v
\end{eqnarray}
The stabilizers of distinct $z$-faces have disjoint support, therefore we can correct each of the errors independently. 
If the stabilizer of an $r$-face gives a nontrivial syndrome then we can apply an $X$ error on one of the qubits otherwise we make no correction. This process completely estimates all the $X$ errors.
\end{proof}

After correcting the bit flip errors on the cubic subsystem code we are left with only phase  errors. Strictly speaking, the residual error is equivalent to a  $Z$-only error. The $Z$ errors affect only the syndromes measured on the $x$ and $y$ faces. (The stabilizers corresponding to these faces are  $\prod_{v\in f} X_v$  and $\prod_{v\in f} Y_v$ respectively.) 
We shall show the remaining  $Z$ errors can be estimated by projecting them onto a surface code.

The  stabilizers of trivial homology    correspond to vertices in $\Gamma^\ast$. 
These vertices can be partitioned as $\mathsf{V}_x(\Gamma^\ast)$, 
 $\mathsf{V}_y(\Gamma^\ast)$, and $\mathsf{V}_z(\Gamma^\ast)$. Once the bit flip errors are corrected the nonzero syndromes can be found only on the vertices in $\mathsf{V}_x(\Gamma^\ast)\cup \mathsf{V}_y(\Gamma^\ast)$. 
 So we might as well delete the vertices in 
$\mathsf{V}_z(\Gamma^\ast)$ and consider the resulting complex. 
 Denote by  $\Gamma^{\ast\setminus z}$ the complex obtained  by (i) taking the dual of $\Gamma$ and (ii) removing all the vertices in $\Gamma^\ast$ corresponding to $z$-colored faces of $\Gamma$ and the edges incident on these vertices. (The same complex can also be obtained by contracting all the $x$ and $y$ edges of $\Gamma$ and then taking the dual.)

If $e:=(u,v)$ is a $z$-edge, then it connects two $z$-colored faces and there is a gauge group operator acting on the qubits sitting on $u$ and $v$ given by $K_e = Z_uZ_v$. Hence, a $Z$ error on the qubit $u$ is equivalent to a $Z$ error on $v$. Therefore, it is sufficient to consider either $u$ or $v$ for correcting the $Z$ errors on $u$, $v$. 
Further, every qubit is incident on a unique $z$-edge. 
So every $Z$ error can be identified with a unique $z$-edge in $\Gamma$.  
A $Z$ error on a single qubit in  $\Gamma$, causes nonzero syndrome on  $x$ and $y$ faces containing it. 
Equivalently, it will cause nonzero syndrome on two adjacent $x$ and $y$ vertices. These vertices share a $z$-edge.  
Therefore a single $Z$ error corresponds to a $z$ edge in $\Gamma^\ast$. In fact this is the same unique $z$-edge associated with the $Z$ error.

Therefore, a collection of $Z$ errors  corresponds to a collection of edges in $\Gamma^{\ast\setminus z}$. This same collection of edges can be viewed as a set of paths and cycles in $\Gamma^{\ast\setminus z}$. The end points of these paths correspond to the nonzero syndromes. Decoding reduces to finding a set of paths which terminate on the vertices with nonzero syndrome. This is precisely, the problem of decoding $Z$ errors on the toric code defined by $\Gamma^{\ast 
\setminus z}$. Thus we have shown the following result. 

\begin{theorem}[Projecting $Z$ errors onto a surface code]\label{th:Z-error-cubic-codes}
A $Z$-type error, or an equivalent error,  on a cubic subsystem code can be projected to $Z$-type error on a toric code on $\Gamma^{\ast\setminus z}$.
\end{theorem}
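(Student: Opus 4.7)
The plan is to turn the informal discussion preceding the statement into a clean argument in three steps: a gauge reduction that brings every $Z$-error into canonical form, a combinatorial identification of its support with edges of $\Gamma^{\ast\setminus z}$, and a matching of the syndrome maps on the two sides. Throughout I use the fact, already isolated in Lemma~\ref{lm:-cubic-reduction}, that we may assume the error is $Z$-only on the qubits outside the at-most-one $X$ or $Y$ per $z$-face, which Corollary~\ref{co:cubic-x-clean} shows can be corrected first. Thus the problem genuinely reduces to a $Z$-only error on the full vertex set of $\Gamma$.

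First I would exploit the link operators $K_{(u,v)} = Z_u Z_v \in \mathcal{G}$ attached to $z$-edges of $\Gamma$. Because every vertex of the 2-colex lies on a unique $z$-edge, these link operators form a perfect matching in the gauge group, and multiplying by them I can represent every $Z$-error equivalence class by a subset $D \subseteq \mathsf{E}_z(\Gamma)$: a $z$-edge lies in $D$ if and only if the class supports an odd number of $Z$s on its two endpoints. Next I would identify this set with a subset of edges of $\Gamma^{\ast\setminus z}$. In $\Gamma^{\ast}$, each dual edge inherits a color from the pair of faces of $\Gamma$ it separates; a $z$-edge of $\Gamma$ separates an $x$-face from a $y$-face, so its dual has endpoints in $\mathsf{V}_x(\Gamma^{\ast})$ and $\mathsf{V}_y(\Gamma^{\ast})$. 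These are precisely the edges that survive deletion of $\mathsf{V}_z(\Gamma^{\ast})$, giving a bijection $\mathsf{E}_z(\Gamma) \leftrightarrow \mathsf{E}(\Gamma^{\ast\setminus z})$. Under this bijection $D$ becomes a $Z$-type error on the toric code defined by $\Gamma^{\ast\setminus z}$.

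The last step is to check that the two syndrome maps agree. On the subsystem side, after $X$-correction only the stabilizers $B_f$ for $f \in \mathsf{F}_x(\Gamma) \cup \mathsf{F}_y(\Gamma)$ can be excited, and these are indexed by $\mathsf{V}_x(\Gamma^{\ast}) \cup \mathsf{V}_y(\Gamma^{\ast}) = \mathsf{V}(\Gamma^{\ast\setminus z})$. A canonical single-qubit $Z$ error sitting on one endpoint of a $z$-edge $e = (u,v)$ anticommutes with exactly the two stabilizers of the $x$-face and $y$-face incident at that qubit; these are the two endpoints of the dual edge $e^{\ast}$ in $\Gamma^{\ast\setminus z}$. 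That is precisely the action of $Z_{e^{\ast}}$ on the vertex operators $A_v$ of the toric code on $\Gamma^{\ast\setminus z}$. By linearity in the error, the full syndrome map factors through the bijection of the previous step, completing the projection.

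The main obstacle I expect is not the bijection itself but verifying that $\Gamma^{\ast\setminus z}$ is a legitimate complex for a toric code: one must rule out the possibility that deletion of the $z$-colored vertices destroys enough structure to disturb the $A_v$ relations or to collapse independent syndromes, and one must argue that the plaquettes obtained by contracting $x$- and $y$-edges of $\Gamma$ are the correct duals. I expect this to follow from the bipartite $x$/$y$ vertex partition of $\Gamma^{\ast\setminus z}$ together with Euler's formula applied to the contracted embedding, but it is the one place where careful bookkeeping with the embedding is unavoidable.
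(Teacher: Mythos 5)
Your proposal follows essentially the same route as the paper's own argument (which appears as the discussion immediately preceding the theorem): reduce the $Z$-error modulo the gauge operators $Z_uZ_v$ attached to $z$-edges so that equivalence classes are parametrized by subsets of $\mathsf{E}_z(\Gamma)$, identify those $z$-edges with the surviving edges of $\Gamma^{\ast\setminus z}$ via duality, and verify that a single $Z$ error excites exactly the two incident $x$- and $y$-face stabilizers, i.e.\ the endpoints of the corresponding dual edge. The paper is no more careful than you are on the point you flag about $\Gamma^{\ast\setminus z}$ being a legitimate toric-code complex; it simply notes that the same complex arises by contracting the $x$- and $y$-edges of $\Gamma$ and dualizing, so your proof is correct and matches the paper's approach.
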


Once we estimate the error on the toric code on $\Gamma^{\ast\setminus z}$, we can lift this to an error on the subsystem code easily. An error on edge $(u,v)$ 
 corresponds to $Z_u$
or the equivalent error $Z_v$. 

After decoding on the surface code, we still might have some $Z$ errors left over since the toric code does not detect errors which form cycles on $\Gamma^{\ast\setminus z}$. The cubic subsystem color code can correct even such errors (modulo the gauge group) since it has $2g$ additional stabilizer generators (of nontrivial homology). 
Before we can correct such errors we must compute the syndrome on the non trivial stabilizer generators after accounting for the bit flip errors and the phase flip errors as estimate from the toric code. Then the residual error has no syndrome on the trivial stabilizers and can be  accounted by the nontrivial stabilizers alone. For each such stabilizer $S_i$ we find a $Z$ error $\hat{E}_i$ such that $\hat{E}_i$ has an even number of errors on every $x$ and $y$ face and anticommutes with just one nontrivial stabilizer. Given the syndrome $(s_1, \ldots s_{2g})$ we estimate the error as $\prod \hat{E}_i^{s_i}$. This completes the error correction on the cubic subsystem color code. We summarize the error correction procedure in the Algorithm \ref{alg:ktc-projection2}.

\begin{algorithm}[ht]
\caption{{\ensuremath{\mbox{Decoding cubic subsystem color codes by projections}}}}\label{alg:ktc-projection2}
\begin{algorithmic}[1]
\REQUIRE {A cubic subsystem code on a 2-colex $\Gamma$, syndromes on faces  $s_f,  f \in \mathsf{F}(\Gamma)$ and 
$s_{i}$ from stabilizers of nontrivial homology, $1\leq i\leq 2g$.}
\ENSURE {Error estimate $\hat{E}$ such that $\hat{E}$ has the input syndrome.}
\STATE $\hat{E}=I$
\FOR {$f\in \mathsf{F}_z(\Gamma)$}
\IF {$s_f\neq 0$}

\STATE $\hat{E}= \hat{E} X_v$ for some  $v\in f$ \COMMENT  Apply correction on exactly one vertex of $f$
\STATE Flip the syndrome of the $y$ face that  contains $v$.
\ENDIF
\ENDFOR
\STATE Project syndromes onto $\Gamma^{\ast\setminus z}$
\STATE Decode the errors on $\Gamma^{\ast\setminus z}$ using any 2D toric code decoder. Denote by $E'$, the estimate. 
\STATE Let $\Omega$ be the collection edges in $E'$.
\FOR { each edge $(u,v) \in \Omega$} \COMMENT Lift the error to subsystem code
\STATE $\hat{E} = \hat{E} Z_u$ 
\ENDFOR
\FOR {$1\leq i \leq 2g$}
\STATE Let $s_i'$ be the syndrome of $\hat{E}$ with respect to the  $i$th nonlocal stabilizer 
\STATE Update the syndrome $s_i =s_i\oplus s_i'$
\ENDFOR
\STATE Let $\hat{E}_i$ be a $Z$ error that anticommutes with all stabilizers except $i$th nontrivial stabilizer.
\STATE Return $\hat{E}=\hat{E} \prod_i \hat{E}_i^{s_i}$
\end{algorithmic}
\end{algorithm}

We now consider an example that illustrates the  complete decoding process.
Consider the subsystem code on honeycomb lattice on a torus with six hexagons, see Fig. \ref{Kitaev_honeycomb_c1}. 
This subsystem code  has seven independent stabilizers : five of trivial homology and two of nontrivial homology. 
\begin{align*}
&S_1 = Z_1 Z_5  Z_8 Z_{10} Z_7 Z_4 &S_2 = Z_2 Z_6 Z_3Z_{12} Z_9 Z_{11}\\
&S_3 = X_2 X_6  X_9 X_{11} X_8 X_5 &S_4 = X_1 X_4 X_3 X_{12} X_7 X_{10}\\
&S_5 = Y_3 Y_4  Y_7 X_{12} Y_9 Y_6 &\\
&S_6 = Y_2X_5X_8Y_{11} &S_7 = X_7 Z_{10} Y_8 X_{11} Z_9 Y_{12}
\end{align*} 
The cycles associated to  stabilizer generators of nontrivial homology are shown in the 
Fig.~\ref{Kitaev_honeycomb_c1}~and~\ref{Kitaev_honeycomb_c2}. 

\begin{center}
\begin{figure}[h!]
\centering
  \includegraphics{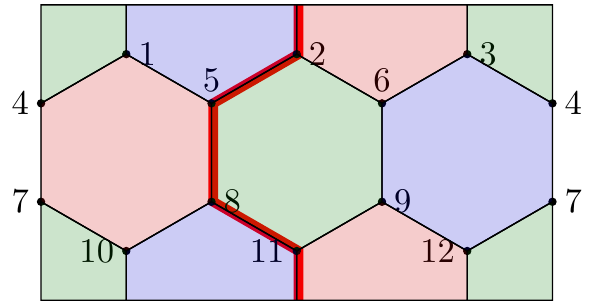}
\caption{(Color online) A nontrivial cycle $\sigma_1$ for a hexagonal subsystem code with 12 qubits}
\label{Kitaev_honeycomb_c1}
\end{figure}
\end{center}

\begin{center}
\begin{figure}[h!]
\centering
  \includegraphics{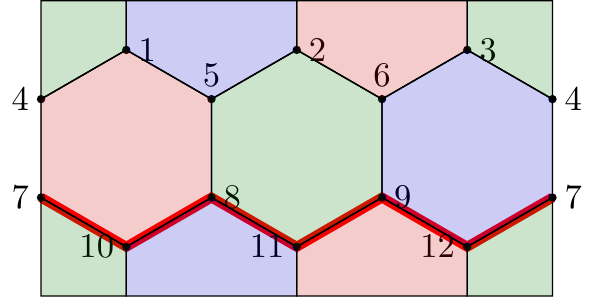}
\caption{(Color online) A nontrivial cycle $\sigma_2$ for a hexagonal subsystem code with 12 qubits}
\label{Kitaev_honeycomb_c2}
\end{figure}
\end{center}
 
 Suppose there is a $Z$-error on qubit 4 and an $X$-error on the qubit labeled 8, ie., $E=Z_4X_8$. 
 Let $s_{i}$ denote the syndrome for stabilizer $S_i$. Let the cycle associated to 
 $S_i$ be $\sigma_i$. 
 We measure the following syndromes caused by this error: 
 \begin{eqnarray*}
&&s_1 = 1 \quad s_2 = 0\\
&&s_3 = 0 \quad s_4 = 1\\
&&s_5 = 1 \\
&&s_6 = 0 \quad s_7 = 1
\end{eqnarray*} 
 
 We know that cycles $\sigma_1$ and $\sigma_2$ form $z$-faces. So we consider $s_1$ and $s_2$ to clean the $X$-errors. Since $s_1=1$, let $\hat{E}=X_1$. Remember that we can choose this $X$ error to be on any one of the six vertices forming the cycle $\sigma_1$. This makes $s_1$ equal to zero. There is no syndrome generated  on the other $z$-face, hence we do not estimate any $X$-error on $\sigma_2$. The updated syndromes are given below.
 \begin{eqnarray*}
&&s_1 = 0 \quad s_2 = 0\\
&&s_3 = 0 \quad s_4 = 1\\
&&s_5 = 1 \\
&&s_6 = 0 \quad s_7 = 1
\end{eqnarray*} 
Now, we project the syndromes
onto a surface code which is as shown in Figure \ref{fig:-Kitaev_honeycomb_2_surface} with triangular and hexagonal faces marked with black and white diagonal lines respectively. On the surface code, this set of syndromes is equivalent to the non-trivial syndromes on two triangular faces which share a common vertex 7, and zero syndrome on all the other faces. Since the traingular faces detect $Z$-type syndromes, one possible error estimate on the surface code will be $Z_7$. This error is equivalent to the error $Z_7$ or $Z_{12}$ on the cubic subsystem code. Let us consider $Z_7$. Hence the error estimate becomes $\hat{E}=X_1Z_7$. Now the syndromes are changed to 
 \begin{eqnarray*}
&&s_1 = 0 \quad s_2 = 0\\
&&s_3 = 0 \quad s_4 = 0\\
&&s_5 = 0 \\
&&s_6 = 0 \quad s_7 = 1
\end{eqnarray*} 

For correcting the error on the cycle $\sigma_7$ of nontrivial homology, we consider a $Z$-error on either qubit 7 or qubit 12. Let it be $Z_{12}$. To make the number of $Z$ errors even, on the $x$- and $y$-faces on which qubit 12 resides, let us also consider $Z_7$. So the final error estimate is $\hat{E}=X_1Z_{12}$. By inspection, we see that the errors $E$ and $\hat{E}$ are equivalent. that is, $\hat{E}$ is $E$ modulo gauge group.
 
\begin{figure}[h!]
\centering
  \includegraphics{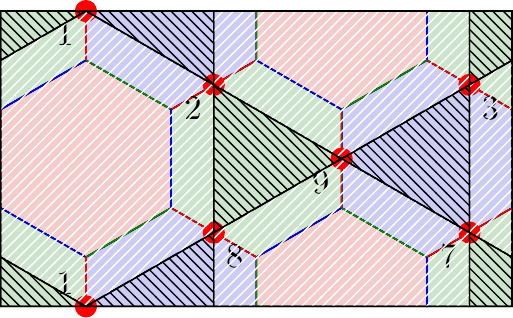}
\caption{(Color online) Mapping hexagonal subsystem code to surface code}
\label{fig:-Kitaev_honeycomb_2_surface}
\end{figure}

\subsection{A colored matching decoder for cubic subsystem color codes}

Instead of correcting the $X$ errors first and then correcting the $Z$ errors by projecting onto a surface code, we could attempt to 
decode them simultaneously. 
To present this method it is helpful to view the cubic subsystem code over the dual of 2-colex. We shall first show how to represent errors in the dual complex. 

\begin{lemma}\label{lm:error2edge}
Assuming the assignment of gauge operators as in Eq.~\eqref{eq:cubic-link-ops}, an $X$ error  corresponds to $x$-edge,
$Y$ error a $y$-edge and a $Z$ error $z$-edge  in $\Gamma^\ast$. 
\end{lemma}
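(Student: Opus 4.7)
The plan is a direct, local case analysis on a single-qubit Pauli error, mapping its syndrome pattern to an edge of $\Gamma^\ast$ whose color matches the Pauli type. Two facts from Section~\ref{sec:bg} do all the work: at every vertex of the trivalent 2-colex $\Gamma$ exactly three faces meet, one of each color $x$, $y$, $z$, together with three emanating edges of the three colors; and a $c$-colored face is bounded by an alternating sequence of edges of the two colors in $\{x,y,z\}\setminus\{c\}$.

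I would begin with the $X$ case. Fix a vertex $v$ and denote by $f_x(v)$, $f_y(v)$, $f_z(v)$ the three faces incident to it. The edge of $\Gamma$ at $v$ lying on the common boundary of $f_y(v)$ and $f_z(v)$ must, by the face-boundary rule, carry a color in $\{x,z\}\cap\{x,y\}=\{x\}$, so it is an $x$-edge. Next, among the face stabilizers of Eq.~\eqref{eq:cubic-stab-ops}, the only ones acting nontrivially at $v$ are $B_{f_x(v)}$, $B_{f_y(v)}$, $B_{f_z(v)}$, which use $X_v$, $Y_v$, $Z_v$ respectively; hence $X_v$ commutes with the first and anticommutes with the latter two. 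Its syndrome therefore lights up exactly the two dual vertices in $\mathsf{V}_y(\Gamma^\ast)$ and $\mathsf{V}_z(\Gamma^\ast)$ corresponding to $f_y(v)$ and $f_z(v)$, and these dual vertices are joined by the dual of the $x$-edge identified above. That endpoint pair, together with the connecting dual edge, is the $x$-edge of $\Gamma^\ast$ claimed in the lemma.

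The $Y_v$ and $Z_v$ cases follow by permuting colors: $Y_v$ anticommutes with $B_{f_x(v)}$ and $B_{f_z(v)}$, and the edge of $\Gamma$ between $f_x(v)$ and $f_z(v)$ is necessarily a $y$-edge; $Z_v$ anticommutes with $B_{f_x(v)}$ and $B_{f_y(v)}$, and the separating edge is a $z$-edge.

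I do not anticipate a substantive obstacle, since the proof is essentially bookkeeping on the local star of $v$. The one consistency check worth stating is the following: for a $c$-edge $e=(u,w)$ of $\Gamma$, the two single-qubit errors $P_{c,u}$ and $P_{c,w}$ (with $P_x=X$, $P_y=Y$, $P_z=Z$) produce different raw syndromes yet the lemma assigns both to the same dual $c$-edge. This is consistent because $P_{c,u}P_{c,w}=K_e\in\mathcal{G}$ by Eq.~\eqref{eq:cubic-link-ops}, so $P_{c,u}\equiv P_{c,w}$ modulo the gauge group, which matches the convention used throughout this section of identifying errors up to $\mathcal{G}$.
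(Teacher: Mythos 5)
Your proposal is correct and follows essentially the same route as the paper: a local analysis at a single vertex of $\Gamma$ (equivalently, a triangle of $\Gamma^\ast$), observing that a single-qubit Pauli anticommutes with exactly the two face checks whose types differ from it, and that the dual edge joining those two face-vertices carries precisely the matching color. Your added remark on gauge equivalence of the two endpoint errors of a $c$-edge is a nice consistency check that the paper defers to the discussion around Corollary~\ref{co:edges2error}, but it is not needed for the lemma itself.
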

\begin{proof}
As  $\Gamma$ is  trivalent,  every face is a triangle in the dual complex. The 3-face-colorability of $\Gamma$ translates to $3$-vertex colorability of $\Gamma^\ast$. While  $\Gamma^\ast$ is not 3-edge-colorable, we retain the coloring of edges based on $\Gamma$. Then every triangle in $\Gamma^\ast$ is bounded by edges of three distinct colors. Further, an $x$-edge connects $y$ and $z$ vertices of each triangle. Similarly, a $y$-edge connects $x$, $z$ vertices and $z$-edge connects $x$, $y$ vertices.

A single qubit error can flip only two checks,  therefore in $\Gamma^\ast$, a single qubit error leads to nonzero syndrome on just two vertices of a triangle. 
An $X$-error causes a nonzero syndrome on $y$ and $z$ vertices and can be identified with the $x$-edge connecting them. Similarly we can show that  $Y/Z$ errors 
can be identified with the $y/z$ edges of the triangle. 
\end{proof}
By Lemma~\ref{lm:error2edge} we can obtain the following correspondence between edges in $\Gamma^\ast$ and Pauli errors. 

\begin{corollary}\label{co:edges2error}
Any collection of edges in $\Gamma^\ast$ corresponds to an error on the cubic subsystem color code. 
\end{corollary}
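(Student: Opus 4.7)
The plan is to extend the single-edge/single-qubit dictionary of Lemma~\ref{lm:error2edge} to arbitrary subsets of $\mathsf{E}(\Gamma^\ast)$ multiplicatively in the Pauli group. Since a product of single-qubit Pauli operators is itself a Pauli operator on the data qubits, the corollary is essentially an immediate consequence of the preceding lemma; the content is just that the one-edge assignment is additive/multiplicative.

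Concretely, I would proceed as follows. For each $e \in \mathsf{E}(\Gamma^\ast)$, let $E_e$ denote the single-qubit representative supplied by Lemma~\ref{lm:error2edge}: if $e$ is $c$-colored for some $c \in \{x,y,z\}$, then $E_e$ is $X_v$, $Y_v$ or $Z_v$ according as $c = x, y, z$, where $v$ is the vertex of $\Gamma$ whose dual triangle in $\Gamma^\ast$ contains $e$. For any collection $\Omega \subseteq \mathsf{E}(\Gamma^\ast)$, I would then define the associated error by
\[
E_\Omega \;=\; \prod_{e \in \Omega} E_e ,
\]
which is a Pauli operator on the $n$ data qubits of the cubic subsystem color code, establishing the claimed correspondence.

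The only point of real substance, and the one I would highlight explicitly, is the ambiguity in the single-edge lift: every edge of $\Gamma^\ast$ lies on the boundary of two triangles, so there are two candidate qubits $v, v'$ one could use. Both choices give Paulis of the same type $P_c$ acting on the endpoints of the corresponding $c$-colored edge in $\Gamma$, and their product $P_c(v)\,P_c(v')$ is exactly the link operator $K_e$ from Eq.~\eqref{eq:cubic-link-ops}, which is a generator of $\mathcal{G}$. Hence the two natural lifts differ by a gauge element and $E_\Omega$ is well-defined modulo $\mathcal{G}$ — which is all that is required in the subsystem setting. I do not foresee a meaningful obstacle: once this gauge equivalence is noted, the proof reduces to a one-line multiplicative extension of Lemma~\ref{lm:error2edge}.
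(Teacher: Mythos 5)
Your proposal is correct and follows essentially the same route as the paper, which treats the corollary as an immediate multiplicative extension of Lemma~\ref{lm:error2edge} and, in the remark that follows it, likewise absorbs the non-uniqueness of the edge--error dictionary into equivalence modulo the gauge group. Your explicit observation that the two endpoint lifts of a $c$-colored dual edge differ by the link operator $K_e$ is a slightly different (and complementary) instance of the same gauge ambiguity the paper illustrates with the three edges of a triangle, so nothing further is needed.
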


Note that Corollary~\ref{co:edges2error} does not imply that this correspondence is unique. In fact two distinct collections of edges could correspond to the same error. For example, given any face in $\Gamma^\ast$, any pair of edges refer to the same error as the third edge. But these errors will be equivalent up to an element in the gauge group. Therefore, we can still refer to a collection of edges as an error, although strictly speaking it corresponds to an equivalence class of errors.

\begin{lemma}
Any path in $\Gamma^\ast$ connecting a pair of vertices  $u$ and $v$ generates nonzero syndrome on $u$ and $v$ and on no other vertices. Any cycle in $\Gamma^\ast$ generates zero syndrome. Conversely any error with nonzero syndrome on $u$ and $v$ alone must be a path connecting  $u$ to $v$ up to a cycle. 
\end{lemma}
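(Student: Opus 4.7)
The plan is to leverage Lemma~\ref{lm:error2edge} to translate the claim about syndromes into a pure combinatorial statement about degrees of vertices in $\Gamma^\ast$, and then invoke a standard decomposition of even-degree edge sets.

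First I would establish an additivity principle: if an error $E$ decomposes as a product $E = E_1 E_2 \cdots E_m$ of single-qubit Pauli errors, then the syndrome at a vertex $w \in \mathsf{V}(\Gamma^\ast)$ is the parity of the number of $E_i$ that anticommute with the stabilizer $B_w$ corresponding to the face $w$ of $\Gamma$. Combining this with Lemma~\ref{lm:error2edge}, a single edge $e = (w_1, w_2)$ in $\Gamma^\ast$ produces nonzero syndrome exactly at its two endpoints $w_1$ and $w_2$, and nowhere else. Thus, for a collection of edges $\Omega \subseteq \mathsf{E}(\Gamma^\ast)$ interpreted as an error, the syndrome at a vertex $w$ equals the parity of $\deg_\Omega(w)$, the number of edges of $\Omega$ incident on $w$.

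Next I would apply this to the two forward directions. For a path $(u = w_0, w_1, \ldots, w_k = v)$ in $\Gamma^\ast$, the degree of $\Omega$ at each interior vertex $w_i$ (with $1 \leq i \leq k-1$) is $2$, while $\deg_\Omega(u) = \deg_\Omega(v) = 1$. Hence the syndrome vanishes everywhere except at $u$ and $v$, where it is nonzero. For a cycle, every vertex on the cycle has degree $2$ in $\Omega$ and every other vertex has degree $0$; all parities are even, so the syndrome is zero. This also immediately verifies that the combined error associated to a path $u$-to-$v$ plus any cycle still produces nonzero syndrome only at $u$ and $v$.

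For the converse, suppose an error $E$ has nonzero syndrome precisely at $u$ and $v$. By Corollary~\ref{co:edges2error} we may identify $E$ with a set of edges $\Omega \subseteq \mathsf{E}(\Gamma^\ast)$. By the additivity principle, $\deg_\Omega(w)$ is odd if and only if $w \in \{u, v\}$. A standard result in graph theory then gives that any edge set whose odd-degree vertices are exactly $\{u, v\}$ can be written as the edge-disjoint union of a $u$-to-$v$ path $P$ and a collection of cycles $C_1, \ldots, C_t$ (proved by greedily extracting an Eulerian trail from $u$ to $v$ in the subgraph induced by $\Omega$, then repeatedly extracting cycles from what remains). Since cycles contribute trivial syndromes by the second part, $E$ agrees (as an error, up to gauge equivalence from the face relations noted after Corollary~\ref{co:edges2error}) with the path $P$ up to the product of the cycles $C_i$, which is the desired conclusion.

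The main obstacle I expect is not the combinatorial core but rather being careful about the equivalence between an \emph{edge set} and the corresponding \emph{error}: as the discussion following Corollary~\ref{co:edges2error} points out, the three edges of any triangular face of $\Gamma^\ast$ represent the same error up to a gauge operator, so the path/cycle decomposition of $\Omega$ is only unique up to such face relations. Phrasing the converse as an equality ``up to a cycle'' is exactly what absorbs this non-uniqueness, so the statement as given is in fact the cleanest formulation.
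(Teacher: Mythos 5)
Your proposal is correct and follows essentially the same route as the paper: both arguments reduce the syndrome at a dual vertex to the parity of the degree of that vertex with respect to the edge set of the error, deduce the forward claims from the degree pattern of paths and cycles, and obtain the converse from the fact that an edge set whose odd-degree vertices are exactly $\{u,v\}$ decomposes as a $u$--$v$ path plus cycles. Your version merely makes the final decomposition step more explicit than the paper does.
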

\begin{proof}
Any vertex in $\mathsf{V}(\Gamma^\ast)\setminus \{ u,v\}$ has an even degree with respect to the edges in the path. Therefore all checks except those associated to 
$u$ and $v$ have an even number of qubits in error. Therefore all these vertices will have zero syndrome. The checks on  $u$
and $v$  have an odd number of errors giving rise to a nonzero syndrome. Since all vertices have an even degree with respect to the edge of cycle, all checks  have zero syndrome.

The edges incident on a vertex are (single qubit) errors which anti commute with the stabilizer associated to that vertex. Hence an even 
degree with respect to a set of edges implies that with respect to the errors that have support on that set of edges, the vertices in the path except at the end have zero syndromes. If an error is not a path then $u$ (and/or $v$) must have even degree with respect the edges of the error. But then the syndrome would be zero on $u$ (and/or $v$). Therefore it must be a path up to cycles.
\end{proof}

Any  path could be augmented by a cycle and it will produce the same syndrome as the path. 

\begin{lemma}
Any combination of errors on a cubic subsystem code will always result in an even number of nonzero syndromes.
\label{Le:even_syn}
\end{lemma}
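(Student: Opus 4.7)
The plan is to reduce the parity of the syndrome pattern to the classical handshaking lemma applied to the dual complex $\Gamma^{\ast}$. First I would invoke Corollary~\ref{co:edges2error} to represent an arbitrary error $E$ on the cubic subsystem color code as a collection $\Omega$ of edges of $\Gamma^{\ast}$. By Lemma~\ref{lm:error2edge}, each single-qubit Pauli $X_v$, $Y_v$, or $Z_v$ is identified with exactly one edge of $\Gamma^{\ast}$, namely the edge joining the two face-vertices on which it produces a nonzero syndrome. A composite error is then represented by the multiset (equivalently, $\F_2$-sum) of these edges.

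Next I would argue that a vertex $u$ of $\Gamma^{\ast}$, which corresponds to a face $f$ of $\Gamma$ and hence to the stabilizer $B_f$, carries a nonzero syndrome if and only if its degree in $\Omega$ is odd. This is immediate from the definition of the syndrome: $s_f$ is the parity of the number of single-qubit factors of $E$ that anticommute with $B_f$, and by Lemma~\ref{lm:error2edge} this parity coincides with the parity of the number of edges of $\Omega$ incident on $u$.

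Finally, I would appeal to the handshaking lemma: since $\sum_{u \in \mathsf{V}(\Gamma^{\ast})} \deg_{\Omega}(u) = 2|\Omega|$ is even, the number of vertices of odd degree in $\Omega$ must itself be even. Together with the previous paragraph this shows that the total number of nonzero syndromes produced by $E$ is even. The only subtlety worth noting is that the identification of errors with edge-collections of $\Gamma^{\ast}$ is not one-to-one—distinct collections may describe errors differing by a gauge operator—but gauge operators produce the zero syndrome, so this ambiguity does not affect the count. I do not foresee a serious technical obstacle here; the statement is essentially the handshaking lemma translated through the dictionary between single-qubit Pauli errors and edges of the dual complex.
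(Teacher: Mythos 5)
Your proof is correct, but it takes a different route from the paper's. The paper proves this lemma by induction on the number of errored qubits: the base case observes that a single-qubit Pauli anticommutes with exactly two of the three face stabilizers containing that qubit, and the inductive step is a case analysis on how many checks the new error shares with the existing ones (zero, one, or two), verifying in each case that the parity of nonzero syndromes is preserved. You instead translate the whole error into a multiset of edges $\Omega$ of $\Gamma^{\ast}$ via Lemma~\ref{lm:error2edge} and Corollary~\ref{co:edges2error}, identify nonzero syndromes with odd-degree vertices of $\Omega$, and invoke the handshaking lemma. Both arguments are sound; since the lemma appears after Lemma~\ref{lm:error2edge} and Corollary~\ref{co:edges2error} in the paper, your reliance on them introduces no circularity. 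Your version is shorter, avoids the three-way case analysis, and ties the statement directly to the dual-graph dictionary that the paper immediately exploits in Corollary~\ref{co:syndrome2error} (pairing odd-degree vertices by paths), so it arguably fits the surrounding development more tightly. The paper's induction, on the other hand, is self-contained and does not presuppose the edge correspondence, which makes it portable to settings where no such dual complex is available. Your remark that the edge representation is only well defined up to gauge, and that this does not matter because gauge operators have trivial syndrome, is exactly the right caveat.
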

\begin{proof}
If there is no error then all syndromes are zero and we clearly have an even number of nonzero syndromes. 
Suppose that there is only one error $E$ on qubit $u$ where $E\in \{ X, Y, Z \}$.  Then $u$ participates in the syndrome measurement of exactly three faces, $f_r$, $f_b$ and $f_g$ whose stabilizers are given by $\prod_{v \in \partial(f_r)}X_v$, $\prod_{v \in \partial(f_b)}Y_v$ and $\prod_{v \in \partial(f_g)}Z_v$ respectively.  If $E$ is an $X$ error, then the faces $f_b$ and $f_g$ will have nontrivial syndromes; similarly if $E$ is $Y$ or $Z$  there will be two nonzero syndromes. 

Suppose that the hypothesis holds for errors up to $k$ qubits. If there is an additional error, then the following cases can arise. 
\begin{compactenum}[(i)]
\item No checks  are shared with other errors: The number of nonzero syndromes will increase by two. 
\item Exactly one check is shared: Then the check that is not shared will add one additional nonzero syndrome. The shared check 
could add one if it is nonzero or else it could reduce the nonzero syndromes by one if it is nonzero. In either case the nonzero syndromes are even. 
\item Exactly two checks are shared: If these are zero (one), then total number of nonzero syndromes will increase (decrease) by two. 
If only one of them is nonzero, the total number of nonzero syndromes will remain unchanged. 
\end{compactenum}
In all these cases the number of nonzero syndromes is even and by the principle of induction it holds for all errors on the subsystem code. 
\end{proof}

\begin{corollary}
Error estimation from syndrome on $\Gamma^\ast$ is equivalent to finding a collection of paths terminating on the vertices with nonzero syndromes.  
\label{co:syndrome2error}
\end{corollary}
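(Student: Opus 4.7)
The plan is to establish the claim in both directions by combining Lemma \ref{Le:even_syn} (parity of nonzero syndromes) with the preceding lemma characterizing the syndrome of paths and cycles in $\Gamma^\ast$, and Corollary \ref{co:edges2error} identifying errors with edge subsets.

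First, I would argue the ``easy direction.'' By Lemma \ref{Le:even_syn}, the set $N$ of vertices in $\Gamma^\ast$ carrying a nonzero syndrome has even cardinality $2k$. I would pair them arbitrarily as $\{(u_1,v_1),\ldots,(u_k,v_k)\}$ and, for each pair, invoke the preceding lemma to take a path $P_i$ from $u_i$ to $v_i$ in $\Gamma^\ast$. Viewing each $P_i$ as an error via Corollary \ref{co:edges2error}, its only nonzero syndromes are at $u_i$ and $v_i$. Taking the symmetric difference of all $P_i$'s (so that any multiply-covered edge cancels) yields an edge set whose syndrome vector is exactly the indicator of $N$. Hence any syndrome is realized as the syndrome of a suitable collection of paths.

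Next, for the converse, I would take an arbitrary candidate error $E$ with the observed syndrome and, using Corollary \ref{co:edges2error}, identify it with a subset of edges $\Omega \subseteq \mathsf{E}(\Gamma^\ast)$. A vertex $w \in \mathsf{V}(\Gamma^\ast)$ has nonzero syndrome under $E$ if and only if its degree in $\Omega$ is odd, precisely because each incident edge corresponds to a single-qubit error anticommuting with the stabilizer at $w$. Thus $\Omega$ has odd degree exactly at the vertices of $N$. A standard graph-theoretic fact then allows me to decompose $\Omega$ (up to edge cancellations) into $k$ edge-disjoint paths pairing the vertices of $N$, together with some leftover edges of even degree at every vertex, i.e.\ a union of cycles. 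By the preceding lemma these cycles contribute zero syndrome, and being cycles in $\Gamma^\ast$ they correspond to products of gauge operators (equivalently, stabilizer-like cycles) of the cubic subsystem color code. Therefore $E$ is equivalent, modulo the gauge group, to the path-union described in the first direction.

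Combining the two directions gives the stated equivalence. The main obstacle I anticipate is the converse: specifically, the decomposition of an even-subgraph-plus-odd-vertices into paths and cycles must be made rigorous, and I would need to argue carefully that the residual even subgraph corresponds to elements of $\mathcal{G}$ rather than merely to trivial syndromes. This follows because closed walks in $\Gamma^\ast$ at the level of edge-errors correspond to the face cycles in $\Gamma$ whose associated link-operator products generate the gauge group, but it is the step that deserves the most care when turning this sketch into a full proof.
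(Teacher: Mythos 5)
Your first direction is exactly the paper's proof: by Lemma~\ref{Le:even_syn} the nonzero syndromes come in even number, one pairs them arbitrarily, and each pair is realized (up to cycles) by a path, which via Corollary~\ref{co:edges2error} is an error with the given syndrome. The paper stops there; your converse is additional rigor that the paper delegates to the converse clause of the preceding lemma, and your odd-degree-vertex decomposition into paths plus an even (cycle) remainder is a standard and correct way to make it precise. The one point to be careful about is your claim that the residual cycles correspond to products of gauge operators: this is true only for homologically trivial cycles in $\Gamma^\ast$, which are sums of triangle boundaries and hence map to gauge elements (or the identity, as each triangle's three edges are the three single-qubit Paulis on one qubit). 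A homologically nontrivial cycle still gives zero syndrome on every vertex of $\Gamma^\ast$ — so it does not affect the corollary as stated, which concerns only the syndrome on $\Gamma^\ast$ — but it need not lie in $\mathcal{G}$, and it can anticommute with the $2g$ stabilizers of nontrivial homology. This is precisely why Theorem~\ref{th:colored-matching} treats those stabilizers separately in lines 14--19 of Algorithm~\ref{alg:ktc-projection2}; your proof would be fully correct if you weaken ``equivalent modulo the gauge group'' to ``has the same syndrome on $\Gamma^\ast$, up to cycles.''
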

\begin{proof}
By Lemma~\ref{Le:even_syn} there are an even number of nonzero syndromes in $\Gamma^\ast$, say $2m$. We can form $m$ pairs and each pair can be identified with a path up to cycles. 
This collection of paths corresponds to an error that has the given syndrome. 
\end{proof}

We are now ready to give the decoding algorithm. 
\begin{theorem}[Colored matching decoder]\label{th:colored-matching}
The decoder given in Algorithm~\ref{alg:colored-matching} returns the  minimum weight error for a given syndrome
on a cubic subsystem color code. 
\end{theorem}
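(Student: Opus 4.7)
The plan is to reduce the decoding problem to a minimum-weight perfect matching on the syndrome vertices of $\Gamma^\ast$ and then invoke the optimality of the matching subroutine. By Lemma~\ref{lm:error2edge}, Corollary~\ref{co:edges2error}, and Corollary~\ref{co:syndrome2error}, any error producing a given syndrome corresponds, up to cycles in $\Gamma^\ast$ (which are gauge equivalences), to a collection of paths in $\Gamma^\ast$ whose endpoints are precisely the nonzero-syndrome vertices; moreover, the weight of the error equals the total number of edges in the collection.

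First I would argue that a minimum-weight representative of the error is realized by an edge-disjoint collection of $m$ paths, where $2m$ is the (even, by Lemma~\ref{Le:even_syn}) number of nonzero syndromes. Any cycle appearing in such a representative can be deleted without changing the syndrome and strictly decreases the weight, so the optimum has no cycles. Similarly, any edge shared by two paths can be removed by taking their symmetric difference, which preserves the pair of endpoints while strictly decreasing the total edge count; iterating yields an edge-disjoint optimum. The decoding then decomposes into two subproblems: (i) pair the $2m$ nonzero-syndrome vertices into $m$ pairs, and (ii) choose a path in $\Gamma^\ast$ for each pair.

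With the pairing fixed, the optimal choice within each pair is a shortest path in $\Gamma^\ast$ between its two endpoints, so decoding reduces to a minimum-weight perfect matching on the complete weighted graph $K_{2m}$ whose vertices are the nonzero-syndrome vertices and whose edge weights equal the $\Gamma^\ast$-shortest-path distances. This is precisely the problem that Algorithm~\ref{alg:colored-matching} solves, so correctness of the output follows from the correctness of the underlying matching subroutine (for example, Edmonds' blossom algorithm).

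The main obstacle will be justifying the symmetric-difference reduction in the presence of the colored structure of $\Gamma^\ast$: edges come in three colors corresponding to $X$, $Y$, and $Z$ errors, and two overlapping paths cancel on a shared edge only when those edges have the same color (since $X^2=Y^2=Z^2=I$ but $XY\neq I$). I would address this by observing that the assignment in Lemma~\ref{lm:error2edge} gives every edge of $\Gamma^\ast$ a single, well-defined color inherited from the $3$-edge-coloring of $\Gamma$; hence any edge traversed by two paths is literally the same colored edge and does cancel in the Pauli operator. Consequently the matching weight exactly equals the minimum error weight modulo the gauge group, and the theorem follows.
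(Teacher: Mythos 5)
Your reduction to minimum-weight perfect matching on the complete graph of nonzero-syndrome vertices, weighted by shortest-path distances in $\Gamma^\ast$, is exactly the route the paper takes, and your justification of why the optimum is attained by (edge-disjoint) shortest paths between some pairing --- deleting cycles, taking symmetric differences, and noting that each edge of $\Gamma^\ast$ carries a single well-defined color so that overlapping edges genuinely cancel as Pauli operators --- is in fact more careful than the paper's own one-line assertion that ``a minimum weight matching will correspond to the error with the smallest weight.'' To that extent your argument is correct and an improvement in rigor.

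There is, however, one genuine omission. The syndrome of a cubic subsystem color code is not exhausted by the face stabilizers that live on the vertices of $\Gamma^\ast$: since $\Gamma$ is embedded on a surface of genus $g$, there are $2g$ additional stabilizer generators of nontrivial homology, and the algorithm's input explicitly includes their syndromes $s_i$. The matching step only guarantees that the estimate reproduces the syndrome on the trivial-homology (face) stabilizers; the residual error $E\hat{E}$ may still anticommute with some of the nonlocal generators. This is why Algorithm~\ref{alg:colored-matching} has the final loop (lines 12--18) that recomputes the syndromes $s_i$ relative to the matching estimate and multiplies in correction operators $\hat{E}_i^{s_i}$, and why the paper's proof explicitly defers to the corresponding lines of Algorithm~\ref{alg:ktc-projection2}. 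Your proof never mentions this step, so as written it establishes correctness only for the trivial-homology part of the syndrome, not for ``a given syndrome'' in the sense the theorem and the algorithm's input intend. Adding a sentence handling the nonlocal stabilizers (and, if you want to be scrupulous, noting that this last correction is about syndrome consistency rather than weight minimality) would close the gap.
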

\begin{proof}
The algorithm constructs a complete graph $\mathcal{K}$ with as many vertices as there are nonzero syndromes. 
The edge between any two vertices in this complete graph are given by the length of the shortest path between them in $\Gamma^\ast$.
We know that every such path corresponds to an error on $\Gamma^\ast$. A matching on $\mathcal{K}$ will correspond to an error with 
the same syndrome as observed. A minimum weight matching will correspond to the error with the smallest weight for this syndrome. 
Matching in $\mathcal{K}$  will lead to an error that has the same syndrome as observed on the  stabilizers of 
trivial homology. To  account for the syndrome on the stabilisers of nontrivial homology we need to proceed as in Algorithm~\ref{alg:ktc-projection2},  lines 14--19. 
\end{proof}

\begin{algorithm}
\caption{{\ensuremath{\mbox{Decoding  cubic subsystem color codes by colored matching}}}}\label{alg:colored-matching}
\begin{algorithmic}[1]
\REQUIRE {A 3-colex $\Gamma$, Syndromes on vertices  $s_v,  v \in \mathsf{V}(\Gamma^{*})$ and $s_i$ on stabilisers of nontrivial homology.}
\ENSURE {$\hat{E}$ with same syndrome as observed.}
\STATE $\mathcal{K}:=$ complete graph on vertices $u$ for $s_u\neq0$
\FOR {vertex $v$ with $s_v\neq0$}
\FOR {  $u$ with $s_u\neq0$  and $u\neq v$}
\STATE  Let $p_{uv}$ be the shortest path between $u$ and $v$ 
\STATE $d(u,v) = \#$  edges in  $p_{uv}$
\STATE Form an edge between $u'$ and $v'$ of  weight $d(u,v)$ 
\ENDFOR
\ENDFOR
\STATE Find a minimum weight matching on $\mathcal{K}$, denote it $\mathsf{E}_m$
\STATE ${\mathsf{P}}:=$ paths corresponding to $\mathsf{E}_m$
\STATE  $\hat{E}:=$ error corresponding to ${\mathsf{P}}$  
\FOR {$1\leq i \leq 2g$}
\STATE Let $s_i'$ be the syndrome of $\hat{E}$ with respect to the  $i$th nonlocal stabilizer 
\STATE Update the syndrome $s_i =s_i\oplus s_i'$
\ENDFOR
\STATE Let $\hat{E}_i$ be a $Z$ error that anticommutes with all stabilizers except $i$th nonlocal stabilizer.
\STATE Return $\hat{E}=\hat{E} \prod_i \hat{E}_i^{s_i}$

\end{algorithmic}
\end{algorithm}

\section{ Decoding  Topological subsystem color codes}\label{sec:TSScs}

Topological subsystem color codes are obtained by vertex expansion of a color code. We study the decoding of these codes and their relations to 
surface codes. First we propose a two step decoder for the topological subsystem color codes. 

Throughout this section we assume that the topological subsystem color code is built from a hypergraph $\mathcal{H}_\Gamma$ which in turn is obtained by vertex expansion of a 2-colex $\Gamma$.

\subsection{A two step decoder for topological subsystem color codes} 

We extend the two step decoder proposed for cubic codes to topological subsystem color codes. The basic idea is to correct $X$ errors first and then the $Z$ errors. The structure of the stabilizer generators then leads to a decoder that is local for $X$ errors but  global for $Z$ errors. 

\begin{lemma}\label{lm:tscc-x-errors}
 An $X$ error on a topological subsystem color code can be corrected up to a gauge operator by 
measuring the $Z$-stabilizers. The correction operator for a face $f$ with nonzero syndrome  is  
$X_v$ for some $v\in f$.
\end{lemma}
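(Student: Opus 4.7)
The plan is to adapt the proof of Corollary~\ref{co:cubic-x-clean} to the hypergraph $\mathcal{H}_\Gamma$ obtained by vertex expanding the 2-colex $\Gamma$. The central structural observation is that, after vertex expansion, each face $f \in \mathsf{F}(\Gamma)$ carries an internal rank-2 cycle whose edges alternate in color between red (link operator $X \otimes X$) and green (link operator $Y \otimes Y$). This is exactly the local alternation of link operators that appeared along a $z$-face of the cubic subsystem color code in Lemma~\ref{lm:-cubic-reduction}, so the same algebraic manipulation carries over inside each face.

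First I would establish a face-local analogue of Lemma~\ref{lm:-cubic-reduction} for the rank-2 cycle of $f$. Taking the telescoping product $K_{e_1}K_{e_2}\cdots K_{e_i}$ of link operators along this cycle and simplifying with the Pauli identity $XY = iZ$, one obtains a gauge-group element of the form $X_1 Z_2 Z_3 \cdots Z_i X_{i+1}$, or with $Y_{i+1}$ at the end depending on the parity of $i$. This identity witnesses that an $X$ error on any qubit of the cycle of $f$ is equivalent, modulo $\mathcal{G}$, to an $X$ or $Y$ error on any other qubit of the same cycle, with $Z$ residues on the intermediate qubits.

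Second I would note that the rank-2 cycles of distinct faces of $\Gamma$ are vertex-disjoint in $\mathcal{H}_\Gamma$, since each qubit of the hypergraph sits on a unique triangle of the expansion and participates in the rank-2 cycle of a single face of $\Gamma$. Hence the supports of $W_1^f = \prod_{v \in f} Z_v$ for different $f$ are pairwise disjoint, the syndrome $s_f$ equals the parity of $X$'s on the rank-2 cycle of $f$ after the reduction, and applying $X_v$ for any $v \in f$ with $s_f = 1$ flips only $s_f$ and leaves every other $s_{f'}$ untouched. The face-by-face correction $C = \prod_{f : s_f = 1} X_{v_f}$ therefore reproduces the observed syndrome.

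The main obstacle will be to verify that the residual operator $C \cdot E_X$ actually lies in $\mathcal{G}$ and not merely in the centralizer of the $Z$-stabilizers. I would discharge this exactly as in Corollary~\ref{co:cubic-x-clean}: the telescoping identity from the first step expresses $C \cdot E_X$ face by face as a product of link operators, with the $Z$ residues introduced by the rewriting absorbed into the $Z \otimes Z$ link operators coming from the split rank-3 edges of the triangles. Piecing these rewritings together across all faces gives $C \cdot E_X \in \mathcal{G}$, completing the argument.
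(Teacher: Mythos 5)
Your first two steps reproduce the paper's argument: the rank-2 cycles of distinct faces partition the qubits of $\mathcal{H}_\Gamma$, the stabilizers $W_1^f$ are $Z$-type with pairwise disjoint supports, and the alternating $XX/YY$ link operators along the cycle of $f$ make all single-qubit $X$ errors on that cycle equivalent modulo $\mathcal{G}$ up to residual $Z$'s, so a face-by-face correction by a single $X_v$ reproduces the observed $Z$-syndrome. That part is sound and is essentially the paper's proof.

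The third step, however, contains a genuine error. You claim that $C\cdot E_X$ lies in $\mathcal{G}$ because the $Z$ residues produced by the telescoping identity can be ``absorbed into the $Z\otimes Z$ link operators coming from the split rank-3 edges of the triangles.'' This absorption cannot work. The telescoping identity $K_{e_1}\cdots K_{e_i}\propto X_1Z_2\cdots Z_iX_{i+1}$ leaves a single $Z$ on one qubit of each intermediate triangle along the rank-2 cycle of $f$. The only $Z$-type gauge generators available are the intra-triangle pairs $Z_uZ_v$ with $(u,v)\subset(u,v,w)\in\mathsf{E}_3(\mathcal{H}_\Gamma)$; multiplying by these can only move a single $Z$ to a different qubit of the same triangle, never cancel it, and residues coming from different faces live on disjoint qubit sets, so they cannot pair up either. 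Indeed a single $Z$ on a triangle anticommutes with the rank-3 stabilizers $W_2^{f'}$ of the adjacent faces, so it is detectable and cannot be a gauge operator. The correct conclusion --- and the one the lemma needs in context --- is the weaker statement that $C\cdot E_X$ is equivalent modulo $\mathcal{G}$ to a $Z$-only error; this residual is precisely what the second stage of the decoder (Theorem~\ref{th:tscc-Z-errors}, projection onto the color code) is designed to handle. If your stronger claim were true, that entire second stage would be unnecessary.
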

\begin{proof}
From the construction of $\mathcal{H}_\Gamma$, we see that every face in $\Gamma$ gives rise to another face in 
$\mathcal{H}_\Gamma$. 
Every face of $\mathcal{H}_\Gamma$ contains exactly one vertex of each rank-3 edges of $\mathcal{H}_\Gamma$. 
Hence, the vertices belonging to each of these faces form a partition of all the vertices in $\mathcal{H}_\Gamma$. 
There are two linearly independent stabilisers that we can associate to each face a rank-2 stabilizer, 
$W_1^f$ and  a rank-3 stabilizer, $W_2^f$. 
By  Eq.~\eqref{eq:rank2-stab} the rank-2 stabilizer is a $Z$-type stabilizer, therefore it cannot detect $Z$-type errors. Furthermore the qubits in $W_1^f$ are disjoint from the qubits in $W_1^{f'}$.  
Therefore, if the  rank-2  stabilizer of $f$ has a nonzero syndrome, then we know that  $f$ has an odd number of $X$
errors. 
An $X$ error on any qubit of $f$ is equivalent to an $X$ error on some other qubit of the face upto $Z$ errors. 
Hence we can correct the $X$ errors by simply applying a correction to each face independently of others. 
\end{proof}
Note that for each face $f$ we can pick any one of its vertices to apply the error correction. 

\begin{center}
\begin{figure}[h!]
\centering
\resizebox{0.3\textwidth}{!}{%
  \includegraphics{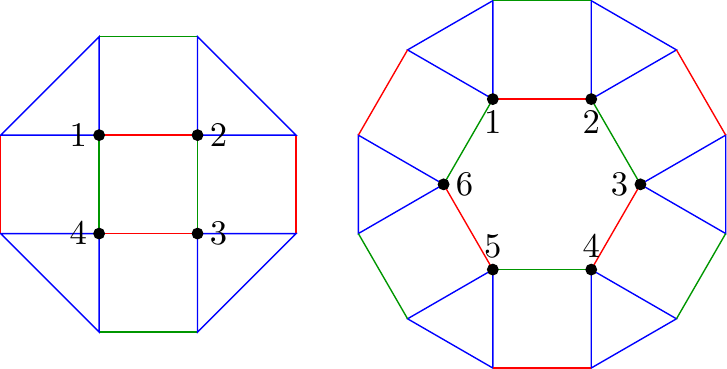}
}
\caption{Correcting $X$ errors on subsystem codes obtained by vertex expansion. If there are an odd number of $X$ errors then the $Z$ type stabilizer associated with the face will register a nonzero syndrome. We can then apply a correction $X_v$
for any one of the qubits in the face. }
\label{lemma_class1}
\end{figure}
\end{center}

Once the syndromes of all the rank-2 stabilizers i.e. the $Z$ type stabilizers  are cleared, the residual error is equivalent to a  $Z$ type error. 
We show that these errors can be corrected by  mapping  onto a color code on $\Gamma$. 

Let us define $\pi$ to be inverse of the vertex expansion of $\Gamma$ so 
that $\pi(\mathcal{H}_\Gamma)=\Gamma$. 
Let $v \in \mathsf{V}(\Gamma)$, then we can associate to $v$ a unique rank-3 edge 
$(v_1,v_2,v_3)\in \mathsf{E}_3(\mathcal{H})$.
  We extend $\pi$ to errors on $\mathcal{H}_\Gamma$ as 
\begin{eqnarray}
\pi(Z_{v_1}) =\pi(Z_{v_2})=\pi(Z_{v_3}) =Z_v \label{eq:vertex-projection}
\end{eqnarray}
It follows that a $Z$ error in the gauge group is mapped to identify under $\pi$.
For instance,  $\pi(Z_{v_1}Z_{v_2})=I$. 
Let $E'$ be an error that is equivalent to $E$. Then we define $\pi(E')= \pi(E)$.
Finally, we project the syndrome of rank-3 stabilizers from $\mathcal{H}_\Gamma$ to $\Gamma$ as follows. Let $s_f$ be the syndrome associated to the rank-3 stabilizer $W_2^f$. 
\begin{eqnarray}
\pi(s_f)  = s_f \label{eq:pi-syn-proj}
\end{eqnarray}
This is well defined because for every face of $\mathcal{H}_\Gamma$, we have a face in $\Gamma$.

\begin{theorem}[Projecting $Z$ errors onto a color code]\label{th:tscc-Z-errors} 
Let $E$ be a 
$Z$-type error or an equivalent error  on a subsystem  code on $\mathcal{H}_\Gamma$. 
Then the syndrome of $\pi(E)$
is identical to the syndrome of $E$.
\end{theorem}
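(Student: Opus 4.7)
The plan is to verify the claim face-by-face by computing both syndromes directly. Since gauge operators centralize every stabilizer, equivalent errors have the same syndrome, and $\pi$ is defined on equivalence classes, it suffices to treat $E$ as a pure $Z$-type error $E = \prod_a Z_a^{e_a}$ with $e_a \in \{0,1\}$. In this setting $E$ and the rank-$2$ stabilizers $W_1^f$ of \eqref{eq:rank2-stab} are both $Z$-type and hence commute; the resulting zero syndromes on the $W_1^f$ match the trivially zero syndromes of the $Z$-type color-code stabilizers $B_f^Z$ on $\pi(E)$, so only the rank-$3$ stabilizers need actual work.

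Next I would analyze the rank-$3$ stabilizer. By \eqref{eq:rank3-stab}, $W_2^f = \prod_{(u_i,v_i,w_i)\in f} X_{u_i} Y_{v_i} Y_{w_i}$ places a single-qubit $X$ on one vertex and a $Y$ on each of the remaining two vertices of every hyperedge around $f$. Because $Z$ anticommutes with both $X$ and $Y$, a $Z$ error on any one of the three vertices of such a hyperedge flips the $W_2^f$ syndrome exactly once. Consequently, the syndrome of $E$ on $W_2^f$ equals the parity of the total number of $Z$ errors on the vertices of the hyperedges around $f$.

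Finally, I would compute the color-code side. By \eqref{eq:vertex-projection}, $\pi$ collapses each hyperedge $h_v = (v_1,v_2,v_3)$ to the single vertex $v \in \mathsf{V}(\Gamma)$, so $\pi(E)$ carries $Z_v$ iff the parity of $Z$ errors of $E$ on $h_v$ is odd; hence $B_f^X = \prod_{v \in f} X_v$ anticommutes with $\pi(E)$ precisely when $\sum_{v \in f}(\text{number of } Z \text{ errors on } h_v)$ is odd. The vertex-expansion construction preceding Lemma~\ref{lm:tscc-params} provides a canonical bijection between the hyperedges indexing $W_2^f$ and the vertices of the corresponding face of $\Gamma$, so these two parity counts are identical and the two syndromes agree. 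The only step that requires real care is confirming this bijection together with the well-definedness of $\pi$ on equivalence classes (which holds because $\pi(Z_u Z_v) = Z_v \cdot Z_v = I$ for any pair $\{u,v\}$ inside a hyperedge); once those are in hand the result is a direct anticommutation tally.
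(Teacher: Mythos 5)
Your proof is correct and follows essentially the same route as the paper's: both arguments reduce to the observation that $W_2^f$ acts with $X$ or $Y$ (hence anticommutes with $Z$) on every vertex of every hyperedge adjacent to $f$, and then match syndromes through the canonical correspondence between those hyperedges and the vertices of the face in $\Gamma$; the paper states this for a single $Z_{v_i}$ and extends by linearity, while you perform the equivalent check-centric parity tally face by face. Your explicit handling of the $Z$-type stabilizers ($W_1^f$ versus $B_f^Z$, both trivially satisfied) is a small point the paper leaves implicit, but it does not change the substance of the argument.
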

\begin{proof}
Assume that $\Gamma$ has $n$ qubits.
By Lemma~\ref{lm:tscc-params}, the subsystem code has $3n$ qubits, and subsequent to correcting the $X$ errors we have to account for $2^{3n}$ phase errors. 
However, a $Z$ error on any  qubit of a rank-3 edge $(u,v,w)$ are  equivalent i.e.
  $Z_u,Z_v, Z_w, Z_uZ_vZ_w$ are equivalent errors. 
  Thus modulo the gauge group we have to account for only $2^n$ phase flip errors. This is precisely the same number of phase errors on $\Gamma$.

  Let $e=(v_1,v_2,v_3)\in \mathsf{E}_3(\mathcal{H}_\Gamma)$. Then a $Z$ error on  $v_i$ 
causes a nonzero syndrome on the three faces that contain the rank-3 edge. Specifically, the rank-3 stabilizers that contain  $e$. 
These   stabilizers are associated to three distinct faces in $\Gamma$. Now the error $\pi(Z_{v_i}) = Z_v$ has a nonzero syndrome on precisely  the same faces. 
By linearity of $\pi$, the statement holds for all the $Z$ type errors and errors equivalent to $E$ since
they result in the same syndrome as $E$. 
\end{proof}

An immediate consequence of Theorem~\ref{th:tscc-Z-errors} is that  we can 
decode a  $Z$ type error on 
the subsystem color code by  projecting it onto the underlying 2-colex and then lifting it back. 
Thus we have the following decoding algorithm. 

\begin{algorithm}[ht]
\caption{{\ensuremath{\mbox{Decoding topological subsystem color codes}}}}\label{alg:tcc-projection1}
\begin{algorithmic}[1]
\REQUIRE {A  substem color code on a hypergrah $\mathcal{H}_\Gamma$, where $\Gamma$ is a 2-colex and 
syndromes  $s_f^{i},  f \in \mathsf{F}(\mathcal{H}_\Gamma)$ and $1\leq i\leq 2$.}
\ENSURE {Error estimate $\hat{E}$ such that $\hat{E}$ has the input syndrome.}
\STATE $\hat{E}=I$
\FOR {$f\in \mathsf{F}(\mathcal{H}_\Gamma)$} \COMMENT  Correct $X$ errors locally 
\IF {$s_f^{1}\neq 0$} 
\STATE $\hat{E}= \hat{E} X_v$ for some  $v\in f$ \COMMENT  Apply correction on exactly one vertex of $f$
\STATE Flip $s_f^2$ and $s_{f'}^2$ where $f'$ is the face adjacent to $f$ and $\{ W_{f'}^2,X_v \}=0$ 
\ENDIF
\ENDFOR
\STATE Project syndromes $s_f^2$ onto $\Gamma $ for all $f\in \mathsf{F}(\mathcal{H}_\Gamma)$ \COMMENT Correct $Z$ errors globally
\STATE Decode the errors on $\Gamma $ using any 2D color code decoder. Denote by $E'$, the estimate. 
\STATE Let $\Omega$ be the vertices in   $E'$.
\FOR { each vertex $v \in \Omega$} \COMMENT Lift the error to subsystem code
\STATE $\hat{E} = \hat{E} Z_{v_1} Z_{v_2} Z_{v_3}$ where $(v_1,v_2,v_3)$ is the rank-3 edge associated to $v$ 
\COMMENT  It can also be lifted to $\hat{E} = \hat{E} Z_{v_1} $ because $Z_{v_2} Z_{v_3}$ is in gauge group.
\ENDFOR
\STATE Return $\hat{E}$ 
\end{algorithmic}
\end{algorithm}

\begin{theorem}
Subsystem color codes of Lemma~\ref{lm:tscc-params} can be decoded using Algorithm~\ref{alg:tcc-projection1}. 
\end{theorem}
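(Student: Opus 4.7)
The plan is to show correctness of Algorithm~\ref{alg:tcc-projection1} by verifying the two phases in turn and then checking that the combined estimate $\hat{E}$ reproduces the input syndrome. First I would observe that the two independent stabilizers attached to each face of $\mathcal{H}_\Gamma$, namely the rank-2 generator $W_1^f$ (a pure $Z$-type operator on the vertices of $f$) and the rank-3 generator $W_2^f$, decouple the detection of bit-flip and phase-flip components, as captured by Eqs.~\eqref{eq:rank2-stab}~and~\eqref{eq:rank3-stab}. Thus I would argue the algorithm mirrors this decoupling: lines 2--7 handle syndromes $s_f^1$ on the rank-2 generators, and lines 8--13 handle syndromes $s_f^2$ on the rank-3 generators.

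For the $X$-correction phase I would invoke Lemma~\ref{lm:tscc-x-errors} directly. The faces of $\mathcal{H}_\Gamma$ partition the qubits, and $W_1^f$ is disjointly supported on them, so applying $X_v$ for a single $v\in f$ suffices to clear $s_f^1=1$ up to a gauge operator and $Z$ errors. The subtle point that I would need to justify carefully is line 5: applying $X_v$ changes the effective residual error by a $Z$ operator (since $X_v$ and the true error differ by gauge elements that include $Z$ terms), and this residual $Z$ can anticommute with the rank-3 stabilizer $W_2^f$ of the face itself and with the rank-3 stabilizer $W_2^{f'}$ of one adjacent face sharing the rank-3 edge containing $v$. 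Tracking which pair of $s_f^2$ entries must be flipped amounts to the algebraic calculation already implicit in the proof of Lemma~\ref{lm:tscc-x-errors}; once this bookkeeping is correct, the $s_f^1$ syndromes are zero and the updated $s_f^2$ syndromes correspond exactly to the residual $Z$-equivalent error.

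For the $Z$-correction phase I would apply Theorem~\ref{th:tscc-Z-errors}. Projecting the syndrome via $\pi$ yields a valid color-code syndrome on $\Gamma$, so any standard 2-colex decoder returns an estimate $E'$ supported on vertices of $\Gamma$ reproducing that syndrome. Lifting each such vertex $v$ to $Z_{v_1}Z_{v_2}Z_{v_3}$ (or equivalently $Z_{v_1}$ modulo the gauge element $Z_{v_2}Z_{v_3}$, as noted in the algorithm's comment) gives an error on $\mathcal{H}_\Gamma$ whose syndrome agrees with $E'$ on $\Gamma$, and by Theorem~\ref{th:tscc-Z-errors} therefore agrees with the residual $s_f^2$ values on $\mathcal{H}_\Gamma$. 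Combining with the $X$-phase estimate, the returned $\hat{E}$ matches the full input syndrome, which is what the theorem requires.

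The main obstacle I anticipate is the bookkeeping in line 5: one must verify that flipping exactly $s_f^2$ and $s_{f'}^2$ accounts for all changes induced by the correction $X_v$, and in particular that no other rank-3 generator is affected. This follows because $X_v$ anticommutes with $W_2^{f''}$ iff $W_2^{f''}$ contains a $Y$ factor on $v$, and by construction of the vertex expansion there are exactly two such faces, namely $f$ and the unique neighbor $f'$ sharing a rank-3 edge at $v$. Once this local combinatorial fact is stated cleanly, the rest of the argument is a direct concatenation of Lemma~\ref{lm:tscc-x-errors} and Theorem~\ref{th:tscc-Z-errors}, so no further obstacles remain.
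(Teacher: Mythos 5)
Your proposal is correct and follows essentially the same route as the paper: correct $X$ errors locally via Lemma~\ref{lm:tscc-x-errors}, argue that $X_v$ flips exactly the two rank-3 syndromes $s_f^2$ and $s_{f'}^2$ (the two faces whose rank-3 stabilizers carry a $Y$ on $v$), and then decode the residual $Z$-equivalent error by projection onto $\Gamma$ via Theorem~\ref{th:tscc-Z-errors}. Your explicit statement of the anticommutation criterion for the syndrome-update step is, if anything, slightly more careful than the paper's own wording.
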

\begin{proof}
By Lemma~\ref{lm:tscc-x-errors} the $X$-type errors can be corrected locally by considering the $Z$ type (rank-2) stabilizer generators. 
Following the correction some syndromes will change. 
Every vertex $v$ is contained in exactly one rank-3 edge. This rank-3 edge is incident on three faces of $\mathcal{H}_\Gamma$, say $f,f'$ and $f''$. 
The error $X_v$ causes non zero syndrome on exactly two rank-3 stabilizer generators: $W_2^f$ and $W_2^{f'}$ 
where  $v\in f$ where $\{X_v, W_{f'}^2\}=0$ while $[X_v, W_{f'}^2]=0$. 
So before we can correct for the $Z$ errors we also update the syndrome corresponding to $f'$.
Following the correction of the $X$ errors, the residual error is equivalent to a $Z$ type error which 
by Theorem~\ref{th:tscc-Z-errors} can be decoded by  projecting onto $\Gamma$. 
\end{proof}

Color codes can be (efficiently) decoded using the methods in \cite{Bombin2012, Delfosse2014, Wang2010, ps2015, ps2017}, which implies that the proposed decoder is also efficient.

\subsection{Uniform rank-3 hypergraph subsystem codes}

In this section we show that the uniform rank-3 subsystem codes obtained from a bicolorable graph can also be decoded using the algorithm in the previous section. 
First we briefly review this construction, the reader is referred to \cite{ps2012} for more details.

\noindent
\hrulefill

\begin{itemize}
\item[] Consider a graph $\Gamma$ such that every vertex has an even degree greater than 2. Construct a 2-colex $\Gamma_2$ from $\Gamma$ using construction 1. Apply construction 2 with $\mathsf{F}$ being the set of $v$-faces of $\Gamma_2$ and the hyperedges on the boundaries of $e$-faces of $\Gamma_2$. 
Denote this hypergraph $\mathcal{H}$. 
\end{itemize}
\noindent
\hrulefill

An illustration of this construction is given in Fig.~\ref{fig:sq_oct_ur3_hsc_const}.
With this construction, we can define a unit cell of $\mathcal{H}$ to be a $v$-face of $\Gamma_2$ along with the hyperedges and the inner face introduced by construction 2. This ensures that the support of two unit cells will not overlap and the support of all the unit cells will cover the entire lattice.

The gauge group is generated by all the link operators of $\mathcal{\overline{H}}$.
Every face of $\mathcal{H}$ gives rise to two linearly independent cycles that are homologically trivial. The associated loop operators generate the stabilizer of the subsystem code on $\mathcal{H}$.

We now show that these subsystem codes can also be obtained by vertex expansion but from a different 2-colex. 
\begin{theorem}
The uniform rank-3 hypergraph subsystem codes obtained from a bicolorable graph $\Gamma$ can also be obtained from the vertex expansion of a 2-colex.
\end{theorem}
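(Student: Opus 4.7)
The plan is to exhibit an explicit 2-colex $\Gamma'$ such that its vertex expansion is the hypergraph $\mathcal{H}$ produced by the uniform rank-3 construction on $\Gamma$. I would construct $\Gamma'$ by reversing the vertex expansion operation applied to $\mathcal{H}$: contract each hyperedge of $\mathcal{H}$ together with its three incident vertices into a single vertex of $\Gamma'$, and identify each pair of rank-2 edges that runs between the same pair of hyperedges as a single edge of $\Gamma'$.

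The first thing to verify is that this definition is consistent. Every vertex of $\mathcal{H}$ belongs to exactly one hyperedge: a vertex inherited from $\Gamma_2$ sits on the boundary of a single $v$-face of $\Gamma_2$, and the alternating promotion performed in Construction~2 places each such vertex in precisely one of the hyperedges introduced inside that face, while the fresh vertices of each inner face $f'$ are by construction the third vertex of exactly one hyperedge. Thus the hyperedges partition the vertex set of $\mathcal{H}$, making the contraction to vertices of $\Gamma'$ well defined. Next I would pair up the rank-2 edges of $\mathcal{H}$ that connect the same two hyperedges: within a single $v$-face, the non-promoted outer-boundary edge between two consecutive hyperedges pairs with the corresponding edge of the inner face $f'$; between adjacent $v$-faces, the two parallel edges of each $e$-face digon form a pair. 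A direct degree count then shows that each hyperedge receives exactly three incident pairs, so $\Gamma'$ is trivalent.

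The main technical step, and the primary obstacle, is to verify that $\Gamma'$ is $3$-face-colorable and therefore qualifies as a 2-colex. I would identify the three face classes of $\Gamma'$ explicitly: (i) the inner $f'$-faces of the $v$-faces of $\Gamma_2$, (ii) the $f$-faces of $\Gamma_2$ inherited from the faces of $\Gamma$, and (iii) the faces of $\Gamma'$ arising from combining the $e$-faces of $\Gamma_2$ with the non-promoted boundary structure, contributing one face per edge of $\Gamma$. The bicolorability hypothesis on $\Gamma$ is used here to guarantee that no two adjacent faces of $\Gamma'$ receive the same color, yielding a consistent 3-face-coloring; working this case analysis out cleanly is the main effort. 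Once $\Gamma'$ is established as a 2-colex, vertex expansion of $\Gamma'$ reproduces $\mathcal{H}$ by construction: each vertex of $\Gamma'$ expands into the hyperedge it was contracted from, and each edge of $\Gamma'$ splits back into the pair of rank-2 edges it identifies.
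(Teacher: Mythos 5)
Your overall strategy --- undo the vertex expansion by contracting each hyperedge to a single vertex, identify the resulting parallel rank-2 edges, and then verify that the quotient graph is a 2-colex --- is exactly the route the paper takes, and your preliminary checks (that the hyperedges partition the vertex set of $\mathcal{H}$, and that each contracted vertex receives exactly three merged edges, hence trivalence) are sound.

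There is, however, a concrete error in your face inventory, and it sits precisely in the step you defer as ``the main effort.'' Your class (iii) does not exist: once you identify the two non-promoted edges of each $e$-face (which, after contraction, run between the same pair of hyperedges), the $e$-face collapses to a single edge of $\Gamma'$ and bounds no face of its own. An Euler count confirms this: $\Gamma'$ has $2|\mathsf{E}(\Gamma)|$ vertices (one per promoted edge) and, being trivalent, $3|\mathsf{E}(\Gamma)|$ edges, so it has exactly $|\mathsf{V}(\Gamma)|+|\mathsf{F}(\Gamma)|$ faces --- accounted for entirely by your classes (i) and (ii); a third class with one face per edge of $\Gamma$ would overshoot by $|\mathsf{E}(\Gamma)|$. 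This matters for the coloring argument: with three classes one is tempted to assign one color per class, but two $f$-type faces arising from adjacent faces of $\Gamma$ are themselves adjacent in $\Gamma'$ across the collapsed $e$-face edge, so that assignment fails. The correct assignment uses one color for the $v$-type faces (which are pairwise non-adjacent, since every edge of $\Gamma'$ either separates two $f$-type faces or separates a $v$-type face from an $f$-type face) and the remaining two colors for the $f$-type faces according to the 2-coloring of the faces of $\Gamma$ --- this is exactly where the bicolorability hypothesis enters, as in the paper. With the face inventory corrected, the rest of your argument goes through and coincides with the paper's proof.
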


\begin{proof}
Let $\Gamma$ be a bicolorable graph, with each vertex having an even degree, 
 from which the hypergraph $\mathcal{H}$ representing the uniform rank-3 hypergraph subsystem code 
is constructed. 
Let $\Gamma_2$ be the intermediate 2-colex from which the hypergraph is constructed. 
(Refer to the section \ref{ssec:hg-subsys-codes}.)
Recall that the 2-colex from $\Gamma$ has three types of faces: $e$-faces which are 4-sided, $f$-faces and $v$-faces. 
Now contract the rank-3 edges added inside the $v$-faces. 
This causes all the rank-3 edges to become vertices and the $e$-faces become parallel edges between two such vertices. 
Replace these parallel edges by a single edge. 
The vertices now become trivalent. 
The vertices of this new graph are exactly the vertices obtained from the rank-3 edges and the edges from the $e$-faces and
the faces from the $v$ and $f$-faces of $\Gamma_2$. 
By assumption $f$-faces are bicolorable, together with the $v$-faces the new graph is a 3-face-colorable
and trivalent. 
Therefore it is a 2-colex. 
But the operations performed are exactly the reverse of the vertex expansion operation. 
In other words, the subsystem code on $\mathcal{H}$  could have been obtained by vertex expansion also.
\end{proof}

From this result it follows that the uniform rank-3 subsystem codes from bicolorable graphs can be decoded by 
Algorithm~\ref{alg:tcc-projection1}.

\section{Generalized Five-Squares Subsystem Codes}\label{sec:NUR3HSSCs}

In this section we study a generalization of the Five squares code proposed in \cite{Suchara2010} .
These codes were  proposed in \cite[Theorem~6]{ps2012}. We call them nonuniform rank-3 subsystem codes.
They also build on the Construction~\ref{alg:hg-tsc-construction}.
Recall that there are three main ingredients to this construction. 
We need to pick  i) a 2-colex $\Gamma_2$ ii) a collection of faces $\mathsf{F} \subset \mathsf{F}_r(\Gamma_2)$
and iii) edges in $\Gamma_2$ which are promoted to rank-3 edges.

\begin{construction}[h]

\caption{{\ensuremath{\mbox{Generalized five-squares subsystem codes \cite{ps2012}}}}}\label{alg:gen-5s-tsc}
\begin{algorithmic}[1]
\REQUIRE {A bipartite graph $\Gamma$  in which all the vertices have even degrees greater than 2.}
\ENSURE { Subsystem code on  a  hypergraph $\mathcal{H}$.}
\STATE Choice of 2-colex:  
Let $\Gamma_m$ be the medial graph of a bicolorable graph$\Gamma$. Using construction 1 on $\Gamma_m^*$, the dual of $\Gamma_m$,
 obtain a three face colorable graph $\Gamma_2$.

\STATE Choice of $\mathsf{F}$:  Let the set of $v$-faces of $\Gamma_2$ be $F_v \cup F_f$ 
where $F_v $ is the set of faces obtained from the vertices of $\Gamma$ and $F_f$ from the faces of $\Gamma$.
Apply construction 2 with $ \mathsf{F} = F_v $ and let 
the hyperedges not be in the boundaries of the $e$-faces of $\Gamma_2$.

\STATE Choice of edges to be promoted: Outgoing edges of $e$-faces that also lie in $\mathsf{F}$. 
Denote  the resulting graph as $\mathcal{H}$ and construct the subsystem code on $\mathcal{H}$.

\end{algorithmic}
\end{construction}

The hypergraph obtained above is 3-edge-colorable hypergraph. 
We choose the following coloring scheme for the edges similar to that in \cite{ps2012}.
This coloring can be derived from the edge coloring of the 2-colex $\Gamma_2$.
Assume that the $v$-faces of $\Gamma_2$ are colored red, $e$-faces blue and
$f$-faces green. This implies that the $e$-faces which are all 4-sided must be bounded by red and green edges. 
Retain the edge coloring of the 2-colex whenever that edge is present in $\Gamma_2$.
Promoting the edges not in the boundaries of the $e$-faces implies that the edges  promoted to rank-3 edges are blue. 
The edges of the faces newly introduced  are colored red and green.
We  summarize this scheme directly with respect to the hypergraph as follows. 
\begin{compactenum}[i)]
\item Rank-3 edges are colored blue. 
\item Unpromoted edges of faces in $\mathsf{F}$ are colored green.
\item Outgoing edges of faces in $\mathsf{F}$ are colored red.
\item Remaining edges of $e$-faces green.
\item Every cycle of new rank-2 edges i.e. in $\mathcal{H}$ but not in $\Gamma_2$, are colored green and red in an alternating manner.
\item Rest of edges are colored blue. 
\end{compactenum}
This particular edge coloring scheme simplifies the decoding procedure we proposed for these subsystem codes. 
 A unit cell for this graph can be a 
face $f \in F_v$ along with the newly introduced hyperedges and the inner face and the surrounding $e$-edges 
of $\Gamma_2$.

The subsystem code is now defined by the resulting hyper graph. The gauge operators and the stabilizers are defined as in 
Section~\ref{ssec:hg-subsys-codes}.
The following five types of cycles generate the stabilizer of the subsystem code, see Fig.~\ref{fig:class2TSC-stabilizers}.
for an illustration:

\begin{enumerate}[(S1)]
\item A cycle consisting of simple edges surrounding a face $f$ where $f \in F_f$. This corresponds to stabilizer of the form 
$\prod_{v\in \sigma} X_v$.

\item A cycle formed by alternating simple and hyperedges on the outer boundary of a face $f \in F_v$ and alternating simple edges on the boundary of its inner face introduced by construction 2. This stabilizer takes the form $\prod_{(u,v,w)\in f}X_uX_vX_w$.
\item A loop consisting of simple and hyperedges around a face $f \in F_f$which  are on the boundaries of the surrounding faces. Let  $\Omega_o$ and $\Omega_i$ be the qubits in the outer and inner boundary of $f$. Then with respect to our coloring scheme, the stabilizer can be written as $\prod_{u\in \Omega_o}Y_u\prod_{v\in \Omega_i}X_v$. 
\item A loop of simple edges forming the inner face of a face $f \in F_v$. 
 This corresponds to stabilizer of the form 
$\prod_{v\in \sigma} Z_v$.
\item A loop of simple edges forming the $e$-faces  of $\Gamma_2$. This corresponds to stabilizer of the form 
$\prod_{v\in \sigma} Z_v$.
\end{enumerate}

\begin{figure*}[t!]
\centering

\begin{subfigure}[t]{0.5\textwidth}
\centering

\includegraphics[width=70mm]{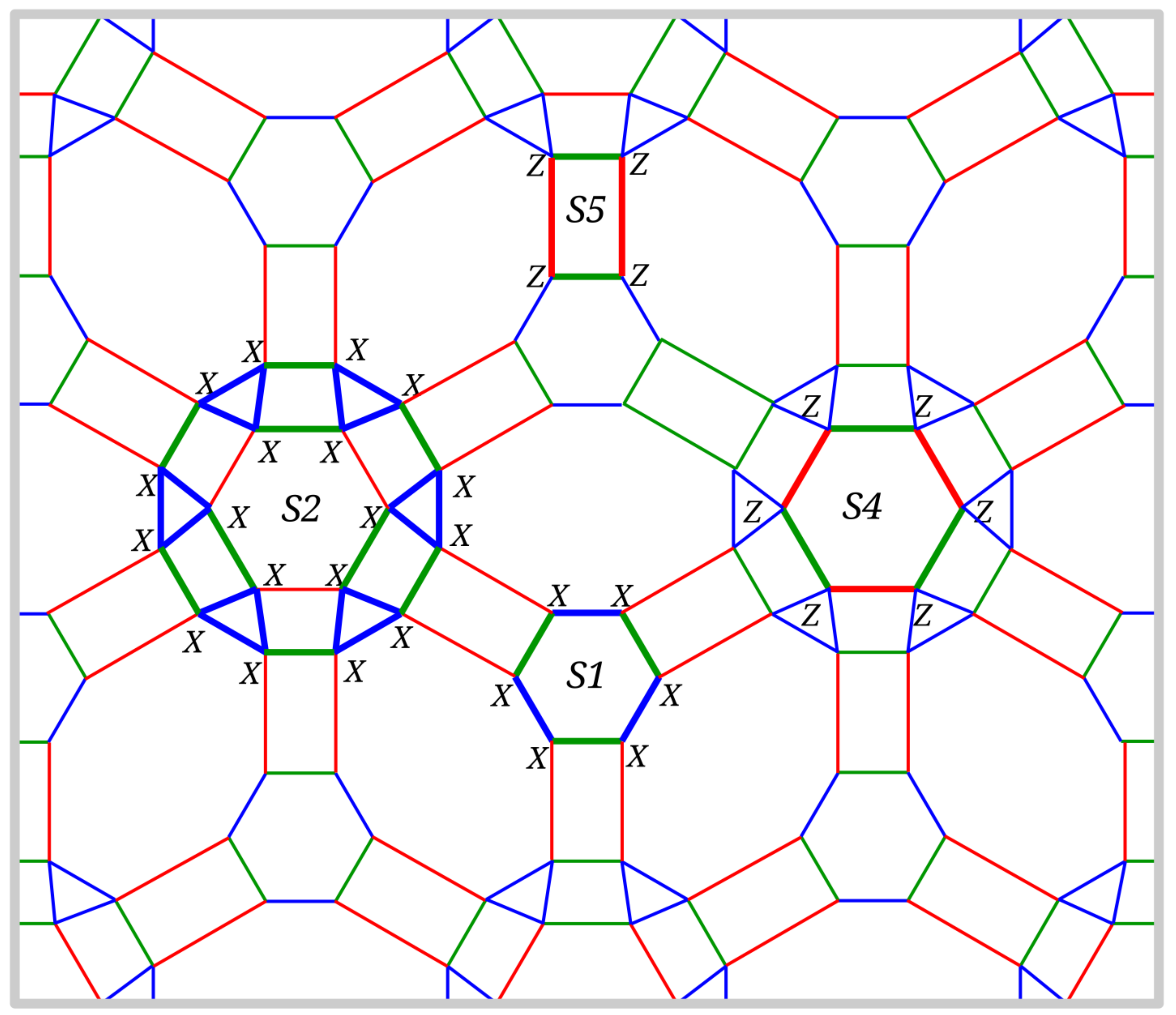}

\captionof{figure}{Stabilizers of type S1, S2, S4 and S5}
\end{subfigure}%
~
\begin{subfigure}[t]{0.5\textwidth}
\centering

\includegraphics[width=70mm]{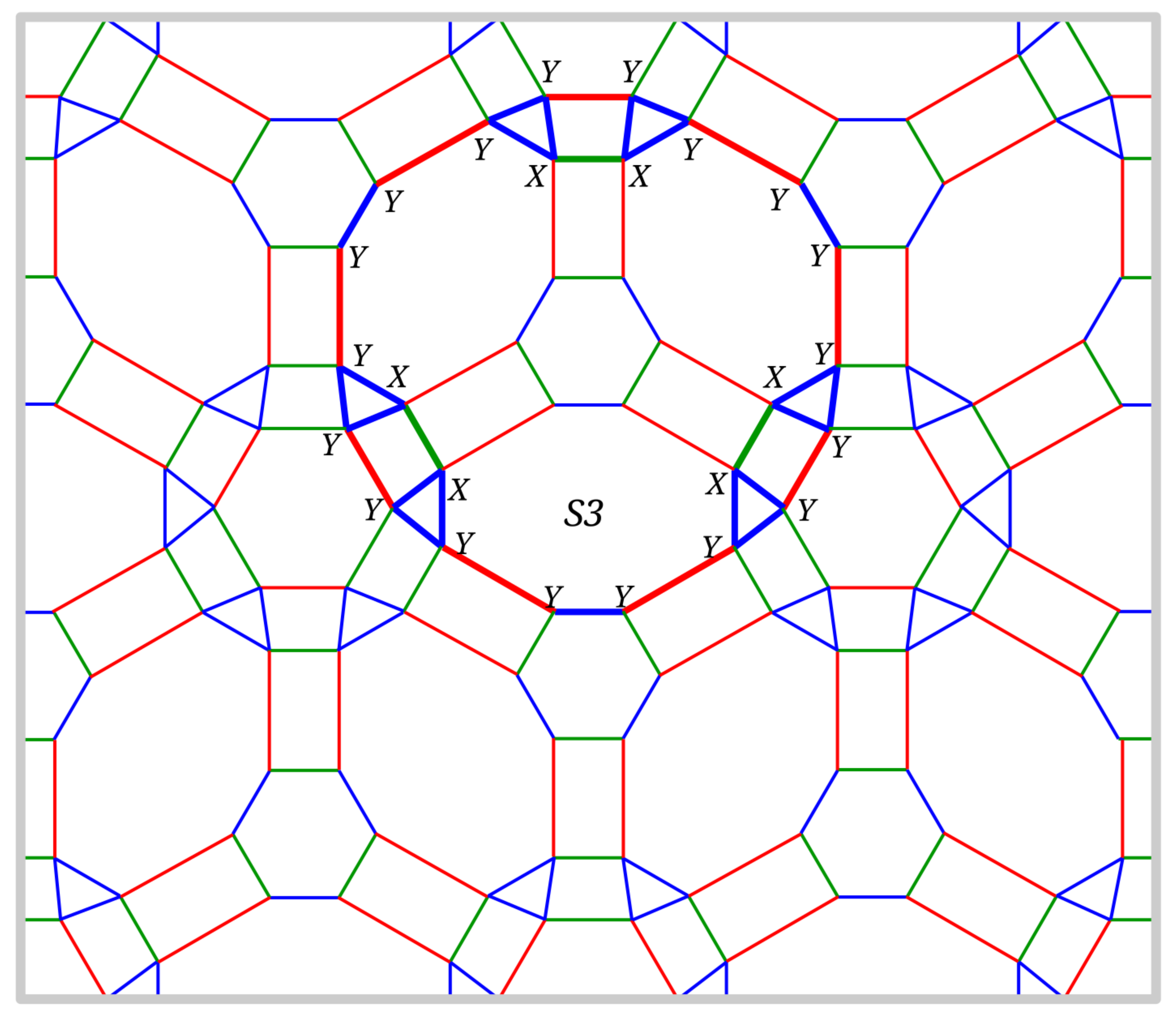}

\captionof{figure}{Stabilizers of type S3}
\end{subfigure}
\caption{(Color online) Stabilizers for nonuniform rank-3 hypergraph subsystem codes}
\label{fig:class2TSC-stabilizers}
\end{figure*}

In the following subsections an algorithm is presented to decode these subsystem codes.

\subsection{Correcting  $X$ errors on generalized five-squares codes}

As in the previous cases we shall first correct for the $X$ errors on the generalized five squares codes. 
This is bit more complex than the algorithm for topological subsystem color codes.
Let $f \in \mathsf{F}$. 
Then we define a unit cell as $f$ and the $e$-faces adjacent to it including the rank-3 edges and the inner face of
$f$. Examples are shown in Fig.~\ref{lemma_class2}.
These unit cells form a partition of the vertices (qubits) of the hypergraph. 
Then we can show the following result.

\begin{lemma}\label{lm:g5s-x-errors}
Bit flip errors on 
the generalized five-squares codes
can be corrected up to a gauge operator by measuring 
$Z$ only stabilizer generators. The residual error is equivalent to a $Z$ only error.  
\end{lemma}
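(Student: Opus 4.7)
The plan is to extend the two-step strategy of Lemma~\ref{lm:tscc-x-errors} to the richer unit-cell structure of the generalized five-squares codes. First, I would verify that the unit cells (a face $f \in F_v$ together with the adjacent $e$-faces, the incident rank-3 edges, and the inner face of $f$) partition the vertex set of $\mathcal{H}$. This follows from Construction~\ref{alg:gen-5s-tsc}: the hyperedges are introduced only inside faces of $F_v$, so every qubit of $\mathcal{H}$ either sits on a unique red face $f \in F_v$ or on one of its surrounding $e$-faces, and the newly introduced inner-face vertices are enclosed in the same $f$. Hence the Z-only stabilizers of types S4 and S5 in a given unit cell have support disjoint from those in any other cell, which is exactly the structural property that makes the cubic argument work.

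Next, I would show that inside a fixed unit cell any Pauli error is equivalent, modulo the gauge group $\mathcal{G}$, to an error with at most one $X$ (or $Y$) on a distinguished qubit and $Z$'s elsewhere. The edge-coloring scheme laid out just before the lemma assigns link operators of the form $XX$, $YY$, or $ZZ$ to the rank-2 edges according to Eq.~\eqref{eq:cubic-link-ops}, and for a rank-3 blue edge the gauge generators $Z_uZ_v$ between any pair of its vertices are available. Starting from an arbitrary error on the cell, I would propagate $X$'s and $Y$'s along the alternating red/green boundary of the inner face and then out along the rank-3 edges to the outer boundary, exactly as in the ``rolling'' computation at the end of Lemma~\ref{lm:-cubic-reduction}. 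Each multiplication by a link operator leaves a trail of $Z$'s on the qubits already visited and moves the single non-$Z$ operator one step further, reducing the cell's error to at most one residual $X$ modulo $\mathcal{G}$.

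Third, I would identify the $Z$-only stabilizer generators within a single unit cell whose combined support covers the cell and detects that residual $X$. The S4 generator $\prod_{v \in \text{inner}(f)} Z_v$ covers the inner face of $f$, and the S5 generators on the surrounding $e$-faces cover the outer qubits of $f$ together with the rank-3 vertices inherited from $f$. A suitable product of these is supported on every qubit of the cell and anticommutes precisely with the residual $X$ if one is present. Applying $X_v$ on any one of the vertices of the cell whose $Z$-syndrome indicates an odd bit-flip parity then cancels the residual $X$ up to gauge operators; because Z-only stabilizers are blind to $Z$ errors, the remaining $Z$-only part of the error is untouched. Doing this independently on every unit cell (permitted since the cells partition the qubits and the relevant stabilizers have disjoint support) yields an equivalent error that is $Z$-only.

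The main obstacle is verifying the third step carefully: unlike the cubic case, the Z-only stabilizer types S4 and S5 are \emph{different} generators whose supports intersect (at the qubits shared between $f$ and its surrounding $e$-faces), and an $X$ correction on such a shared qubit flips syndromes in more than one cell. The careful bookkeeping is to show that for each cell there is a canonical choice of $v$ whose $X_v$ either lies in a single S5 face or can be absorbed into $\mathcal{G}$ after propagation, so that the global syndrome update is consistent. Once this local $X$-correction step is established, the claim that the residual error is $Z$-only modulo gauge follows immediately from the propagation argument, just as in Lemma~\ref{lm:tscc-x-errors} and Corollary~\ref{co:cubic-x-clean}.
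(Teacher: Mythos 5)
Your overall shape (partition into unit cells, use gauge equivalence to reduce bit flips, correct locally with the $Z$-only stabilizers) matches the paper, but two of your central claims are wrong, and they are exactly where the paper's proof is more careful. First, you cannot ``propagate $X$'s and $Y$'s \ldots out along the rank-3 edges to the outer boundary'': a rank-3 (blue) edge $(u,v,w)$ contributes only the link operators $Z_uZ_v$, $Z_vZ_w$, $Z_wZ_u$ to $\mathcal{G}$, and multiplying a non-$Z$ error by a $ZZ$ operator never moves its non-$Z$ part to a different qubit. Since the inner face of $f$ is connected to the boundary of $f$ \emph{only} through hyperedges, a bit flip on an inner-face qubit cannot be pushed onto the outer boundary, and your reduction to ``at most one $X$ per unit cell'' fails. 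The paper instead applies the rolling argument of Lemma~\ref{lm:-cubic-reduction} separately within each $Z$-stabilizer support --- the inner face of $f$ and each adjacent $e$-face, each of which is a cycle of alternating red/green rank-2 edges --- obtaining at most one residual $X$ \emph{per face}, not per cell, and correcting each face independently with its own S4 or S5 syndrome bit. Your plan to detect the residual $X$ with ``a suitable product'' of the S4 and S5 generators would consequently lose information: one $X$ on the inner face and one on an $e$-face of the same cell would cancel in the product.

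Second, the ``main obstacle'' you flag is not an obstacle at all: the supports of S4 and S5 do \emph{not} intersect. The S4 generator is supported on the inner face $f'$, whose vertices are the newly added third vertices of the hyperedges and hence are not vertices of $\Gamma_2$ at all, so they lie on no $e$-face; and distinct $e$-faces, being same-colored faces of a 2-colex, are also pairwise disjoint. This disjointness is precisely what the paper's proof invokes (``all these stabilizers have disjoint support'') to justify correcting each face independently, with the correction $X_v$ applied to any designated qubit of the flagged face. So the bookkeeping you propose to resolve a phantom overlap is unnecessary, while the genuine gap --- the impossibility of moving a bit flip across a hyperedge --- is left unaddressed by your argument.
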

\begin{proof}
In each unit cell associated to a face $f\in \mathsf{F}$, there are $Z$-only stabilizers associated to the $e$-faces and the inner face of $f$ ie stabilizers
of type (S4) and (S5) listed above. 
All these stabilizers have disjoint support. 
In each set of qubits,  as we have already seen by Lemma~\ref{lm:-cubic-reduction}, all single qubit $X$ errors are equivalent 
modulo the gauge group (up to additional $Z$ errors). 
So we only need to correct for one bit flip error for each of these faces. 
The error correction can be applied on any one designated qubit for each of the faces. 
\end{proof}

\subsection{Correcting $Z$ errors on generalized five-squares codes}
After the correction of the bit flip errors modulo the gauge group, we are left with $Z$ only errors or errors equivalent to them. 
Then modulo the gauge group, only a subset of qubits from each unit cell can be considered to correct the $Z$ errors as will be shown in the following lemma.

\begin{lemma}
After all the $X$ errors are corrected on a generalized five-squares code, any combination of $Z$ errors on a unit cell, can be reduced to one  $Z$ error for each $e$-face of the unit cell and exactly one $Z$ error on one rank-3 edge as per the pattern shown in Fig.~\ref{lemma_class2}. The $Z$ error on each $e$-face can be chosen to be on the outer boundary of the unit cell.
Further, the $Z$ error on the rank-3 edge can be corrected by measuring the stabilizer of type S2 supported on the unit cell. 
\end{lemma}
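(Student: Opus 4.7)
The plan is to use the $Z$-type gauge operators together with the $Z$-only stabilizers of types S4 and S5 to reduce an arbitrary $Z$ error pattern on the unit cell to the claimed canonical form, and then exploit the disjoint supports of the $e$-face outer boundary qubits and the hyperedge vertices of $f$ to argue that S2 isolates the rank-3 edge component.

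First I would catalog the $Z$-equivalences available modulo gauge inside a unit cell. Each rank-3 edge $(u,v,w)$ contributes gauge operators $Z_uZ_v,\ Z_vZ_w,\ Z_uZ_w$, hence the equivalences $Z_u\sim Z_v\sim Z_w$. For each $e$-face with vertices $\{a,b,c,d\}$, where $(a,b,w)$ is the promoted hyperedge and $c,d$ lie on the outer boundary of the unit cell, combining $Z_a\sim Z_b$ (from the rank-3 gauge) with the stabilizer $W_{S5}=Z_aZ_bZ_cZ_d$ (which is in the gauge group) yields $Z_cZ_d\sim I$, and hence $Z_c\sim Z_d$. Thus the $2^4$ possible $Z$-error patterns on a single $e$-face collapse to four equivalence classes, represented by $\{I,\ Z_a,\ Z_c,\ Z_aZ_c\}$, i.e.\ at most one $Z$ on the rank-3 edge $(a,b,w)$ together with at most one $Z$ on the outer boundary pair $\{c,d\}$.

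Next I would apply this per-$e$-face reduction across the whole unit cell. For each $e$-face, the outer-boundary contribution is fixed to a single designated vertex ($c$, say), giving the "one $Z$ per $e$-face on the outer boundary" part. The residual rank-3 contributions collected across $f$'s $k=|f|/2$ hyperedges form a binary vector in $\mathbb{F}_2^{k}$ modulo gauge. Applying the $Z$-only stabilizer $W_{S4}=\prod_{w\in f'}Z_w$, which by the rank-3 equivalences is gauge-equivalent to a single $Z$ on one vertex of every hyperedge of $f$, performs the global flip $v\mapsto v+\mathbf{1}$ on this binary vector. Together with the specific polygon shape of $f$ and the pattern dictated by Construction~3 (as illustrated in Fig.~\ref{lemma_class2}), this allows the residual rank-3 pattern to be reduced to at most a single $Z$ on one fixed designated hyperedge.

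Finally, for the correction claim, observe that $W_{S2}=\prod_{(u,v,w)\in f}X_uX_vX_w$ is supported entirely on the hyperedge vertices of $f$ and is therefore disjoint from the outer-boundary qubits $c,d$ of the $e$-faces. Hence the outer-boundary $Z$s produced in the reduction above do not contribute to the S2 syndrome. A $Z$ on any single vertex of the designated rank-3 edge anticommutes with exactly one factor $X_uX_vX_w$ of $W_{S2}$ and commutes with the rest, so S2 is a faithful indicator of the presence of the residual rank-3 $Z$; applying $Z$ on any chosen vertex of the designated rank-3 edge completes the correction modulo gauge. The main obstacle will be the second-to-last reduction: $W_{S4}$ alone only provides the global flip $v\mapsto v+\mathbf{1}$, leaving $2^{k-1}$ classes in general, so closing the gap to a single designated hyperedge requires a careful case analysis tied to the specific polygon structure of $f$ and, where necessary, the matching of $Z$ patterns across adjacent unit cells through shared $e$-face qubits, following the pattern of Fig.~\ref{lemma_class2}.
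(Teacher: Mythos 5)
There is a genuine gap, and it sits exactly where you flag it. Your reduction never actually gets from ``at most one $Z$ per rank-3 edge'' (an $l$-dimensional space of classes, $l$ the number of hyperedges of the cell) down to ``at most one $Z$ on one rank-3 edge of the whole cell.'' The operator $W_{S4}$ only supplies the all-ones flip, as you note, and the ``careful case analysis tied to the polygon structure'' that you defer is the entire content of the lemma. The missing idea is a \emph{transfer move} built from the S5 stabilizers: the two hyperedge-incident qubits of an $e$-face lie on two \emph{distinct, adjacent} rank-3 edges (the promoted edges alternate with the $e$-face edges around the boundary of $f$), so multiplying by the $Z$-only stabilizer $Z_aZ_bZ_cZ_d$ of that $e$-face converts a $Z$ on one rank-3 edge into a $Z$ on the neighbouring rank-3 edge, at the cost of $Z_cZ_d$ on the two outer qubits. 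Iterating this slides rank-3 $Z$'s around the ring of hyperedges so they cancel in pairs, leaving at most one (fixed by parity); the byproduct $Z$'s on outer qubits are exactly what the ``one $Z$ per $e$-face'' clause absorbs, and are normalized to a single outer-boundary qubit per $e$-face by pushing through the outgoing $Z$-type gauge edges into adjacent cells, consistently because the underlying 2-colex is bipartite.

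This same structural point invalidates your per-$e$-face step. You assume the $e$-face is $\{a,b,c,d\}$ with $a$ and $b$ on the \emph{same} promoted hyperedge $(a,b,w)$, deduce $Z_a\sim Z_b$ from the rank-3 gauge, and conclude $Z_cZ_d\sim I$. But the promoted edges are outgoing edges of the $e$-faces, so $a$ and $b$ belong to two different rank-3 edges; $Z_a\sim Z_b$ does not follow from the gauge group, and $Z_cZ_d\sim I$ is false in general (indeed, if it were true, single $Z$ errors on outer qubits would be undetectable gauge-trivial only in pairs within one $e$-face, which is not the geometry of Fig.~\ref{lemma_class2}). Your final step --- that $W_{S2}=\prod_{(u,v,w)\in f}X_uX_vX_w$ is disjoint from the outer-boundary qubits and anticommutes with a single residual rank-3 $Z$ --- is correct and matches the paper, but it only becomes usable once the preceding reduction to a single rank-3 $Z$ has actually been established.
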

\begin{proof}
Let us consider the set of qubits on a unit cell to be a union of disjoint subsets of qubits residing on the inner face
and the $e$-faces, 
say $e_1,e_2,\dots e_l$ when the number of edges on the inner face is $l$. Each $e$-face consists of four qubits two of which are connected to the inner face by hyperedges. 

Now we make the following observations. Any set of $Z$ errors on a rank-3 edge $(u,v,w)$ can be reduced to identity or a single qubit $Z$ error, say $Z_u$ using the gauge operators $Z_uZ_v$, $Z_vZ_w$, and $Z_wZ_u$.
Let us assume that this reduced error is on a qubit which is common to the rank-3 edge and an $e$-face.
This means we have at most $l$ such $Z$ errors corresponding to one per rank-3 edge.

Consider a single $Z$ error on a rank-3 edge $(u,v,w)$. This edge shares a vertex, say  $u$, with an  $e$-face. This face also shares a vertex $u'$ with another rank-3 edge $(u',v',w')$. 
We can transfer $Z$ error on $(u,v,w)$ to $(u',v',w')$ using the stabilizer of the $e$-face.
 This could add some new $Z$ errors on  the qubits  of $e$-face which are not incident on any rank-3 edge.
By repeated application of this reduction we can reduce a $Z$ error on the rank-3 edges  to a $Z$ error pattern that is supported on exactly one qubit of one rank-3 edge and the remaining qubits of the $e$-faces of the unit cell. 

Then, we can push one of the $Z$ errors on any $e$-face to another unit cell. 
This is possible because the outgoing edges of the $e$-faces carry $Z$-type gauge operators.
This can be done consistently for all the unit cells to obtain the pattern in Fig.~\ref{lemma_class2},
because these qubits belong also to the 2-colex and an inconsistent assignment will happen only if there is an odd cycle. 
But 2-colexes are bipartite and this is not possible.

Finally note that  the stabilizer of type S2 which is supported in the unit cell is of the form $\prod_{(u,v,w)\in f}X_uX_vX_w$. 
It clearly anticommutes with the $Z$ error that is supported on the rank-3 edge and can detect it. 
Since the other qubits do not have any support on the stabilizer, we can correct the any single qubit errors on the rank-3 edges. 
\end{proof}
Figure \ref{lemma_class2} shows two examples of unit cells of nonuniform rank-3 hypergraph subsystem codes with the highlighted qubits being the ones chosen for correcting the $Z$ errors.

\begin{center}
\begin{figure}[h!]
\centering
\resizebox{0.4\textwidth}{!}{
  \includegraphics{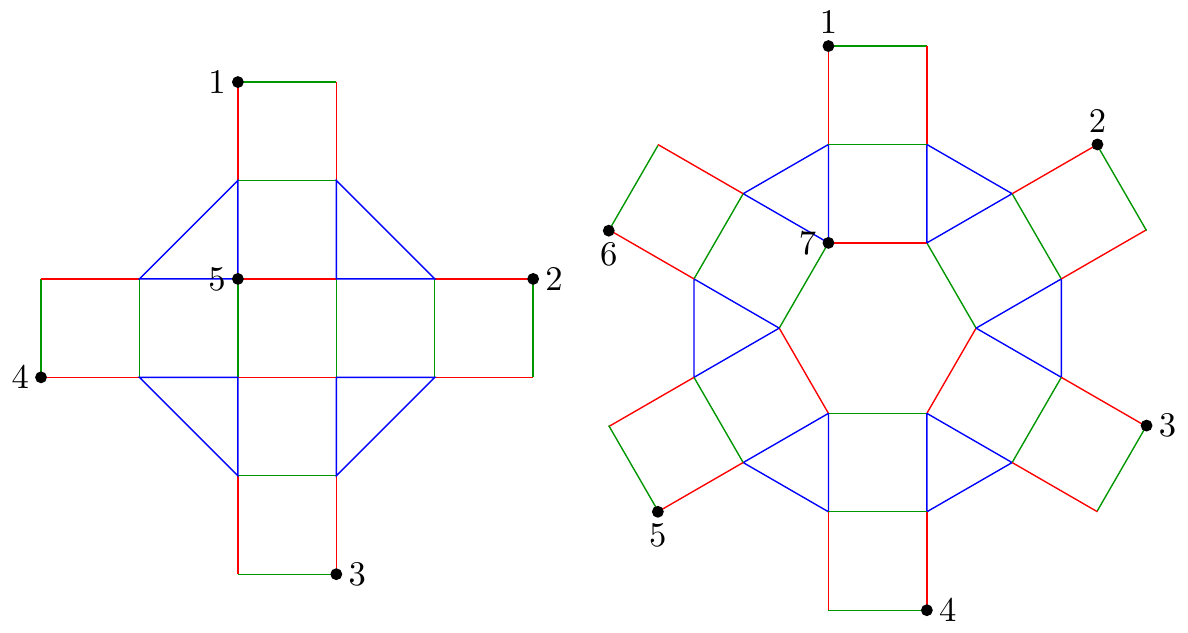}
}
\caption{Examples of Unit cells for nonuniform rank-3 hypergraph subsystem codes. The qubits highlighted are considered for $Z$-error corrections} 
\label{lemma_class2}
\end{figure}
\end{center}

It remains now to correct for the remaining $Z$ errors on the $e$-faces. These errors are detected by the 
stabilizers of type S1 and S3.
We correct these errors by mapping them onto a pair of surface codes.

\begin{lemma}[Mapping to surface codes]
Consider a subsystem code constructed from $\Gamma$ via Construction~\ref{alg:gen-5s-tsc}. Following the correction of $X$ errors and errors on
rank-3 edges, the residual $Z$ errors can be corrected by mapping to a pair of surface codes.
\end{lemma}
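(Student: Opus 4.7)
The plan is to repackage the residual decoding problem into two independent matching problems by exploiting the colouring scheme defined on the hypergraph and the structure of the unit cells.

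First I would pin down the residual configuration. By the preceding lemma, after the reductions the surviving $Z$ errors sit only on the outer boundaries of the $e$-faces, with one designated qubit per $e$-face, chosen in a pattern that is globally consistent because $\Gamma_2$ is bipartite. Since the residual error is $Z$ only, the stabilizers of types S4 and S5 (all-$Z$) and S2 (already handled during the rank-3 edge correction) produce no new syndrome information, so only S1 and S3 generators can register nontrivial syndromes. Both families are attached to faces $f\in F_f$, so the stabilizer generators that matter are indexed by the $f$-faces of $\Gamma_2$.

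Next I would reorganize these generators using the edge colouring. A single-qubit $Z$ error on a residual qubit lies on exactly one red edge and one green edge of $\mathcal{H}$ incident on two distinct $f$-faces of $\Gamma_2$. A pure $X$-type stabilizer of type S1 anticommutes with a residual $Z$ error iff the corresponding qubit lies on the boundary of that S1 face. The S3 stabilizer of a neighbouring face acts on the same qubit as $Y$ on the outer boundary; multiplying S3 by S1 and the appropriate product of S4 generators (which is trivial on the residual error) converts the $Y$ contributions into $X$'s and cancels the $X$ contribution on the inner boundary. Thus from S1 and S3 one extracts two separate families of $X$-type checks, one associated with the red edges and one with the green edges of the residual qubit set. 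Each family has disjoint support on the residual qubits, and each residual qubit appears in exactly two checks of each family, namely the two $f$-faces whose boundaries it sits on.

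Finally I would verify that each family is the plaquette operator set of a surface code. The residual qubits become edges, the reorganized checks become vertex (or plaquette) operators, and the underlying graph is obtained from $\Gamma$ via the dual/medial construction used to build $\Gamma_2$. Each of the two graphs inherits the homology of the original embedding surface, so the two families define honest surface codes, not trivial codes, and a residual $Z$ error on $\mathcal{H}$ projects to a pair of edges, one in each surface code, whose syndromes are exactly the S1 and reorganized S3 syndromes. Matching can therefore be performed independently on each copy and lifted back to a $Z$ correction on $\mathcal{H}$. The main obstacle is the bookkeeping in the reorganization step: one must check that taking products of S1 and S3 with the already-trivial stabilizers S2, S4, S5 genuinely decouples the two families on the residual qubits and that no logical $Z$ operator is created by a wrong choice of representative, so the decomposition has to be justified by a dimension count together with a careful tracking of cycles in $\Gamma_2$.
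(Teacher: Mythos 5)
Your proposal follows the same skeleton as the paper's argument: you correctly identify that after the local clean-up only the S1 and S3 generators can register syndrome, that a residual $Z$ error flips exactly one S1 and one S3 check attached to two \emph{distinct} faces of $F_f$, and that these faces should become the vertices of a surface code on (a graph derived from) $\Gamma^\ast$. Where you diverge is in how the single syndrome graph is split into \emph{two} surface codes, and this is exactly the step you do not complete. The paper's mechanism is concrete: it observes that a residual qubit lying in the support of the S1 check of a face $f$ does not lie in the support of the S3 check of that same face, and then partitions $F_f$ into two subsets $F_a,F_b$ so that one copy of the surface code uses the S1 checks of $F_a$ together with the S3 checks of $F_b$ (and the complementary pairing gives the second copy). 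Your substitute mechanism --- multiplying S3 by S1 and a product of S4 generators to turn it into an $X$-type check and then separating the two families by the red/green edge colours --- is left unverified, and you yourself flag it as ``the main obstacle'' to be settled later by a dimension count. Since the entire content of the lemma is precisely this decoupling into two codes, deferring it leaves a genuine gap rather than a routine verification.

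Two further points. First, your description of the outcome is internally inconsistent: the two families cannot simultaneously ``have disjoint support on the residual qubits'' and have ``each residual qubit appear in exactly two checks of each family.'' What you need (and what the paper's pairing delivers) is that each residual qubit appears in exactly two checks of \emph{one} family --- the S1 check of one face and the S3 check of an adjacent face, both assigned to the same copy --- and in no check of the other family; as stated, your bookkeeping would not yield two independent matching problems. Second, the rewriting of S3 into a pure $X$-type operator is both unsubstantiated (it presumes the inner boundary $\Omega_i$ of the S3 loop coincides with the support of S1 of the same face, and that a product of S4/S5 generators converts $Y_u$ to $X_u$ only on $\Omega_o$, neither of which you check) and unnecessary: since the residual error is $Z$-only, all that matters for the syndrome is which residual qubits lie in the support of each check, not whether the check acts there by $X$ or $Y$. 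Dropping that detour and instead establishing the support-disjointness of S1 and S3 on a common face, together with the $F_a/F_b$ partition, would close the argument along the paper's lines.
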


\begin{proof}
The residual $Z$-errors on each unit cell can be detected by stabilizers of type S1 and S3. 
Observe that any single $Z$ error on an unit cell
 flips exactly one S1 type stabilizer and one S3 type stabilizer. 
These two stabilizers belong to two distinct faces $f,f'$ in $F_f$ of $\Gamma_2$.
From Consruction~\ref{alg:gen-5s-tsc}, we see that these faces correspond to adjacent  vertices of $\Gamma^\ast$.
We can therefore project these syndromes to the vertices  on $\Gamma^\ast$ and the error to the edge shared by them. 
In this manner we obtain an error pattern on a surface code on $\Gamma^\ast$. (The stabilizers will  be
 mapped to the vertex operators of $\Gamma^\ast$.) 

Observe that  if a qubit participates in the  S1 type check 
on a face $f$, then it does not participate in the check  of the type S3 on the same face and vice versa. 
Hence we can partition $F_f$ into 
two subsets $F_a$, $F_b$ where one set of qubits which participate in S1 stabilizers of $F_a$ and S3 stabilizers of $F_b$
while the other set of qubits participates in S3 stabilizers of $F_a$ and S1 stabilizers of $F_b$.
Thus we can project onto two copies of surfaces codes on $\Gamma^\ast$.
\end{proof}

With these results we obtain the complete decoding algorithm for the generalized five-squares codes
as given in Algorithm~\ref{alg:dec-gen-5s} extending the algorithm proposed for five squares codes in \cite{Suchara2010}.
The proof follows by putting together all the results proved in this section.  We omit the details. 
Efficiency of the decoding algorithm follows from the existence of efficient decoders for  surface codes.

\begin{theorem}\label{th:dec-gen-5s}
Generalized five-squares codes can be efficiently decoded using Algorithm~\ref{alg:dec-gen-5s}.
\end{theorem}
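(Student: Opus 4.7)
The plan is to assemble the three ingredients established in this section into a correctness and efficiency argument for Algorithm~\ref{alg:dec-gen-5s}, proceeding unit cell by unit cell first and then globally. The structure mirrors the two-step philosophy used earlier for the cubic and topological subsystem color codes, but with an extra intermediate local correction for the rank-3 edges and with the ``$Z$ onto color code'' projection replaced by a projection onto two surface codes.

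First, I would invoke Lemma~\ref{lm:g5s-x-errors} to argue that after measuring the $Z$-only stabilizers of types (S4) and (S5), which have pairwise disjoint support within each unit cell, every bit-flip error pattern is equivalent modulo the gauge group to a single $X$ correction applied on a designated qubit of each such face. Because the supports are disjoint, these corrections can be applied independently in constant time per unit cell, and the residual error is $Z$-only (up to gauge). Second, I would apply the $Z$-reduction lemma to rewrite this residual into the canonical pattern of Fig.~\ref{lemma_class2}: one $Z$ error per $e$-face sitting on the outer boundary of the cell, plus at most one $Z$ error on one rank-3 edge of the cell. The rank-3 part is then detected and corrected locally by the (S2) stabilizer supported inside the unit cell, since (S2) anticommutes with exactly that single-qubit $Z$ and commutes with the remaining representatives of the pattern.

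Third, for the $Z$-errors remaining on the $e$-faces of $\Gamma_2$, I would invoke the mapping-to-surface-codes lemma. Each such single-qubit $Z$ flips exactly one (S1) and one (S3) syndrome, which by Construction~\ref{alg:gen-5s-tsc} correspond to adjacent vertices of $\Gamma^\ast$, so the error can be projected onto an edge of $\Gamma^\ast$. Using the partition of $F_f$ into $F_a$ and $F_b$ dictated by the (S1)/(S3) participation pattern, the decoding problem splits into two independent surface-code decoding instances on $\Gamma^\ast$. Any standard surface-code decoder (for instance minimum-weight perfect matching on the syndrome graph) can be applied, and the estimated edge errors are lifted back to single-qubit $Z$ operators on the hypergraph; correctness of the lift follows because the projection $\pi$ respects syndromes by the analogue of Theorem~\ref{th:tscc-Z-errors}.

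The main obstacle is verifying that the intermediate reduction step is globally well-defined, i.e.\ that sliding $Z$-errors along rank-3 edges and through $e$-face stabilizers, and then pushing one representative per $e$-face to the chosen side of the boundary, can be done consistently across \emph{all} unit cells without producing spurious syndromes on (S1), (S2), or (S3). This is precisely where the coloring scheme introduced after Construction~\ref{alg:gen-5s-tsc} and the bipartiteness of $\Gamma_2$ enter: bipartiteness forbids the odd cycles that would obstruct a consistent outer-boundary assignment. Once this consistency is in place, efficiency is immediate: the first two phases are local and run in time linear in the number of unit cells, and the third phase inherits the polynomial-time complexity of the underlying surface-code decoder, yielding an overall efficient decoding algorithm as claimed.
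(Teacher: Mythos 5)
Your proposal is correct and follows essentially the same route as the paper, which itself proves the theorem simply by assembling Lemma~\ref{lm:g5s-x-errors}, the $Z$-reduction lemma, and the mapping-to-surface-codes lemma, and appealing to efficient surface-code decoders. In fact you supply more detail than the paper does (which omits the details entirely), including the correct identification of where bipartiteness of $\Gamma_2$ is needed for the global consistency of the $e$-face reduction.
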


\begin{algorithm}[ht]
\caption{{\ensuremath{\mbox{Decoding generalized five-squares subsystem codes }}}}\label{alg:dec-gen-5s}
\begin{algorithmic}[1]
\REQUIRE {A  generalized five squares substem color code on a hypergrah $\mathcal{H}_\Gamma$, where $\Gamma$ is the bicolorable graph used to construct $\mathcal{H}_\Gamma$ via Construction~\ref{alg:gen-5s-tsc}. }
\ENSURE {Error estimate $\hat{E}$.}
\STATE $\hat{E}=I$
\FOR { each $e$-face  and $f$-face of $\mathcal{H}_\Gamma$} \COMMENT Correct $X$ errors locally 
\STATE Measure rank-2  stabilizer of the face ie stabilizer of type S4 or S5. Let the syndrome be  $s$
\STATE $\hat{E} = \hat{E} X_v^{s}$
\ENDFOR
\FOR {each $f\in v$-face of $ \mathcal{H}_\Gamma$} \COMMENT Correct some $Z$ errors locally
\STATE Measure rank-3 stabilizer of the face ie stabilizer of type S2. Let the syndrome be $s$.
\STATE $\hat{E} = \hat{E} Z_v$ for some $v\in f$.
\ENDFOR
\FOR {for $f$ in $f$-faces of $\mathcal{H}_{\Gamma}$} \COMMENT Correct remaining $Z$ errors globally
\STATE Measure S1 and S3 stabilizer.
\ENDFOR
\STATE Project the S1 syndromes of $f\in F_a$ and S3 syndromes of $f\in F_b$ on one copy of $\Gamma^\ast$.
\STATE Project the S3 syndromes of $f\in F_a$ and S1 syndromes of $f\in F_b$ on one copy of $\Gamma^\ast$.
\STATE Decode the errors on each copy of $\Gamma $ using any 2D surface code decoder. Denote by $E_1$, $E_2$
the estimates on each of copies. 
\STATE Each qubit on the surface codes corresponds to a unique qubit on the $\mathcal{H}_{\Gamma}$.
Denote this set of qubits by $\Omega = \text{supp}(E_1)
\cup \text{supp}(E_2)$.
\STATE Return $\hat{E}= \hat{E} \prod_{v\in \Omega} Z_v$ 
\end{algorithmic}
\end{algorithm}

\begin{figure*}[h!]
\centering
\begin{subfigure}[t]{\textwidth}
\centering

\includegraphics[width=130mm]{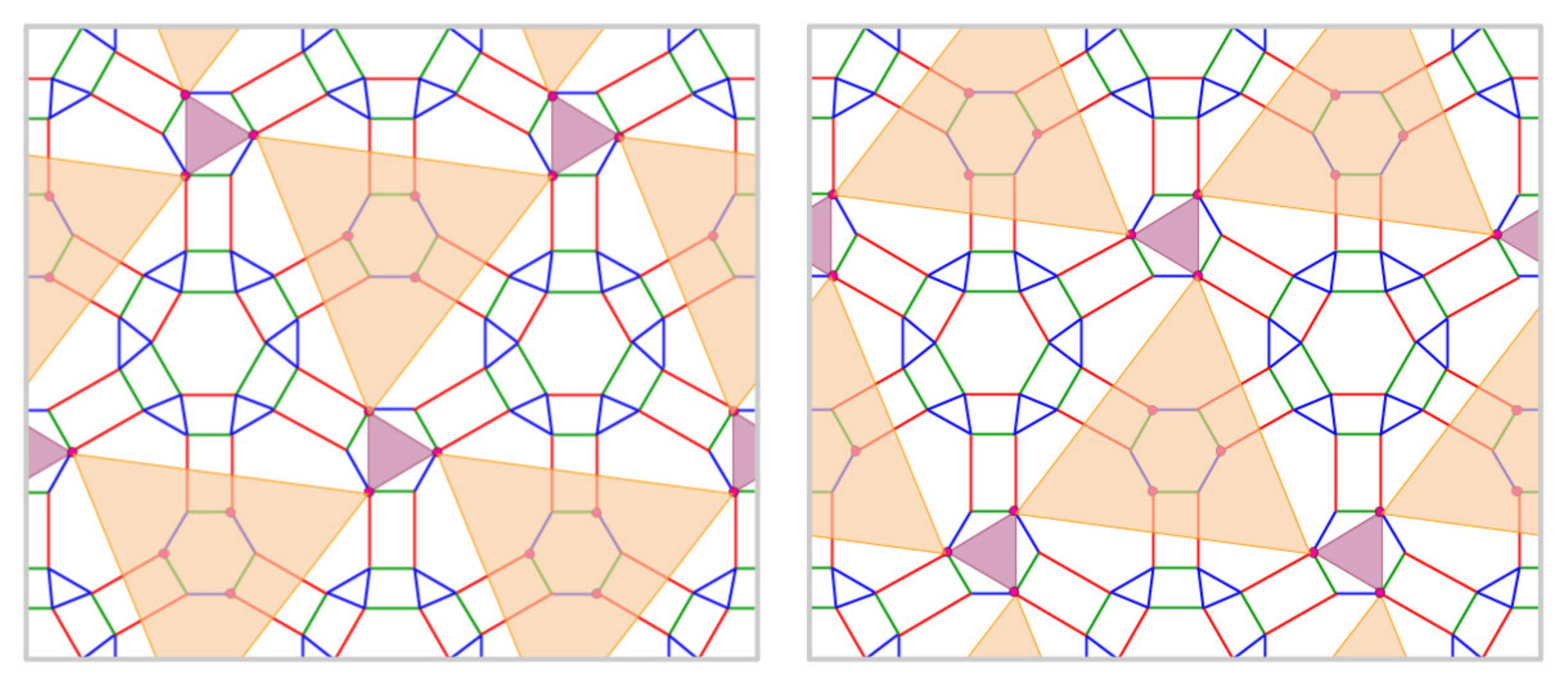}

\captionof{figure}{Mapping to two surface codes}
\label{fig:class2TSC-mapping5sq2surfcodes_1}
\end{subfigure}

\begin{subfigure}[t]{\textwidth}
\centering

\includegraphics[width=130mm]{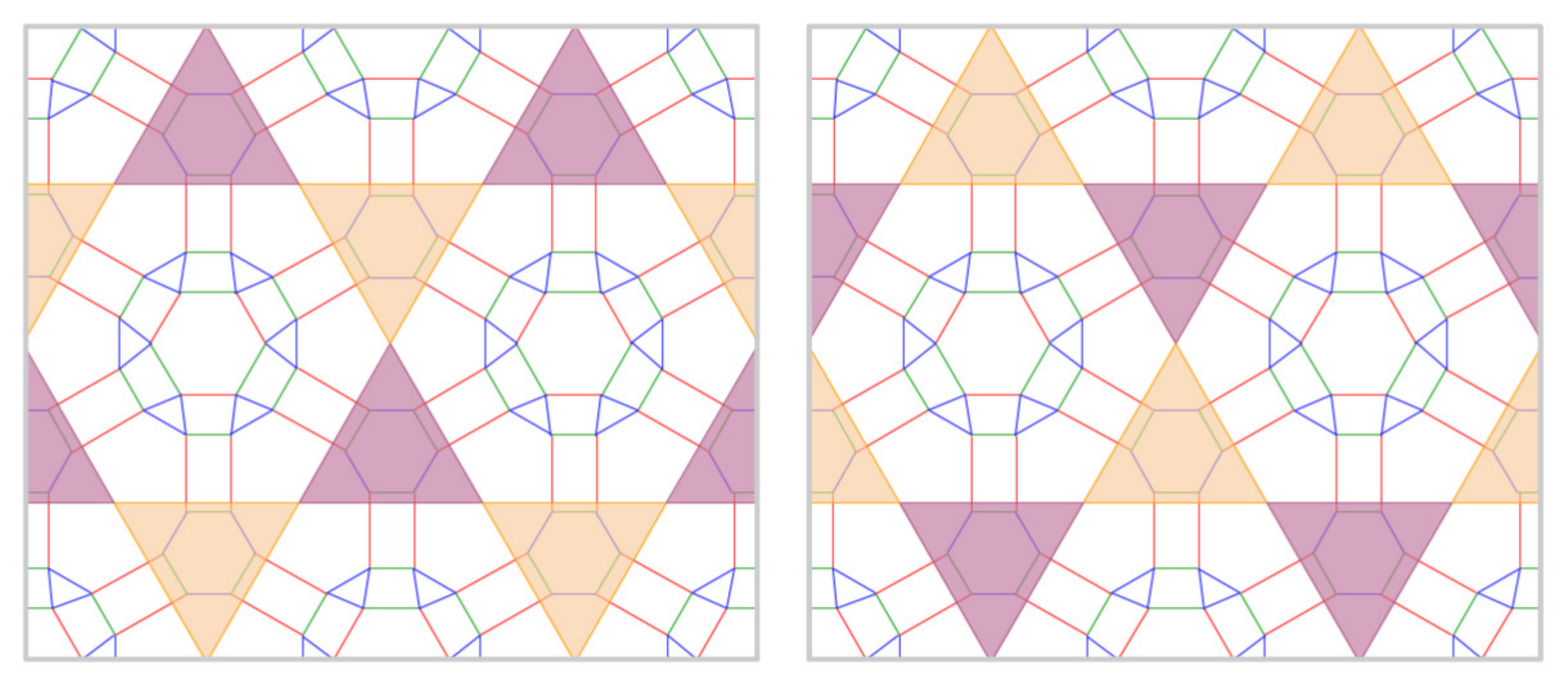}

\captionof{figure}{Redrawing the surface codes}
\label{fig:class2TSC-mapping5sq2surfcodes_2}
\end{subfigure}

\begin{subfigure}[t]{\textwidth}
\centering

\includegraphics[width=130mm]{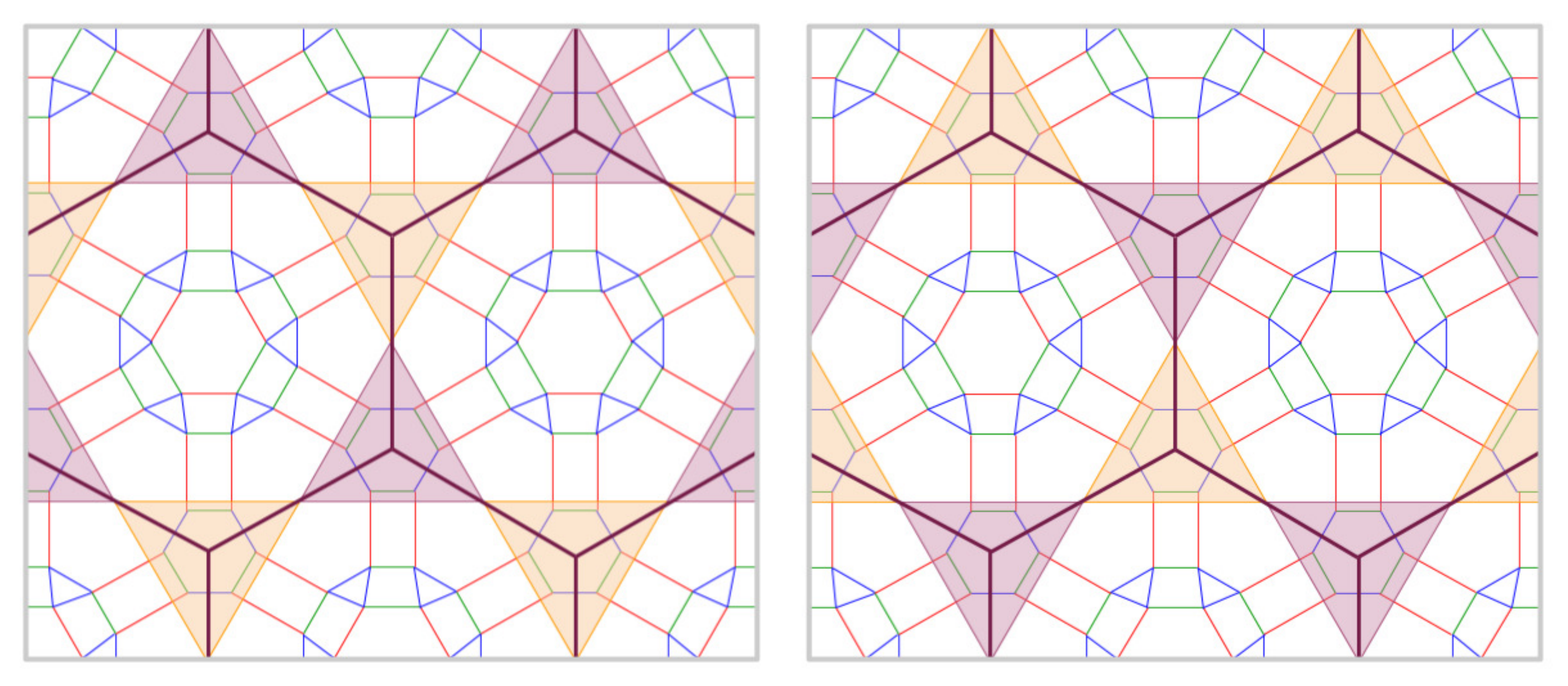}

\captionof{figure}{Representing each qubit by an edge instead of vertex will result in a surface code based on $\Gamma^*$ as shown in this figure by the maroon colored solid edges}
\label{fig:class2TSC-mapping5sq2surfcodes_3}
\end{subfigure}

\caption{(Color online) Example of mapping the generalized five-squares subsystem code to two surface codes constructed from the dual graph of $\Gamma$}
\label{fig:class2TSC-mapping5sq2surfcodes}
\end{figure*}

Fig.~\ref{fig:class2TSC-mapping5sq2surfcodes} illustrates an example of mapping the generalized five-squares subsystem code to two copies of a surface code. This is the same hypergraph subsystem code
in Fig.~\ref{fig:class2TSC-stabilizers} which is constructed from graph $\Gamma$ based on triangular lattice.
In Fig.~\ref{fig:class2TSC-mapping5sq2surfcodes_1} the highlighted qubits on which we correct the $Z$-errors are mapped to two copies of a surface code formed by triangles. The triangles in 
Fig.~\ref{fig:class2TSC-mapping5sq2surfcodes_1} are resized, the orientations are changed and redrawn as in the Fig.~\ref{fig:class2TSC-mapping5sq2surfcodes_2}. The triangles in 
Fig.~\ref{fig:class2TSC-mapping5sq2surfcodes_2} represent a suface code which can also be represented as shown in Fig.~\ref{fig:class2TSC-mapping5sq2surfcodes_2} by the dark colored edges
which is the dual graph of $\Gamma$.

\subsection{A structural result}
Before, we end this section, we prove a structural result on the hypergraph subsytem codes obtained by a variation on  
Construction~\ref{alg:gen-5s-tsc}.  
 Recall that in this construction the rank-3 edges are not in the boundaries of the
$e$-faces. 
We consider a variation where the rank-3 edges are in the boundary of $e$-faces. 
These codes are not good in the sense of having a good distance, but they give an insight into some of the structures that must be avoided in order to obtain good hypergraph subsystem codes.

\begin{theorem}\label{th:tsc-low-d}
Consider a subsytem code obtained via Construction~\ref{alg:gen-5s-tsc}, where in  step 3  the hyperedges are inserted on the $e$-faces of  the 2-colex. Then the resulting subsystem code has $O(1)$ distance independent of the length of the code. 
\end{theorem}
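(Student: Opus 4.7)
The plan is to exhibit an explicit constant-weight element of $C(S)\setminus \mathcal{G}$, which by the definition of distance forces $d=O(1)$ regardless of the size of the lattice.

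I would start by zooming in on a single modified $e$-face $f_e$ of $\Gamma_2$. Since $e$-faces are four-sided and alternating boundary edges are promoted, at most two edges of $\partial f_e$ become hyperedges; call them $(u_1,u_2,w_1)$ and $(u_3,u_4,w_2)$, with the remaining boundary edges $(u_2,u_3)$ and $(u_4,u_1)$ staying as rank-2 edges. The third vertices $w_1, w_2$ are introduced inside inner faces of adjacent $v$-faces, so they can be connected by a short path $P$ in $\overline{\mathcal{H}}$ of length bounded by a constant depending only on the maximal face-degree of $\Gamma_2$.

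Next, I would build a short hypercycle $\sigma$ by combining the four boundary edges of $f_e$ with $P$. The four boundary edges alone are not a hypercycle because $w_1$ and $w_2$ appear with odd degree, but augmenting with $P$ balances every vertex to even degree, so the union is a valid hypercycle whose length is $O(1)$ independent of the lattice size. Consequently $W_\sigma\in C(\mathcal{G})\subseteq C(S)$ and $\wt{W_\sigma}=O(1)$.

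The decisive step, which I expect to be the main obstacle, is to verify $W_\sigma\notin \mathcal{G}$. Here one must show that no product of the two-qubit link operators $\overline{K}_{u,v}$ reproduces $W_\sigma$. Using the edge-coloring scheme described in the previous subsection and the fact that each hyperedge contributes a $Z_{w_i}$ at a vertex whose incident rank-2 links carry only compatible $ZZ$ action, a local case analysis shows that any $g\in\mathcal{G}$ matching $W_\sigma$ on $\mathrm{supp}(W_\sigma)$ must acquire additional support outside $\mathrm{supp}(W_\sigma)$, a contradiction. Combined, these observations produce $W_\sigma\in C(S)\setminus \mathcal{G}$ with $\wt{W_\sigma}=O(1)$, so the distance is bounded by a constant independent of the lattice size.
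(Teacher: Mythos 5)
There is a genuine gap at exactly the step you flag as ``the decisive step'': you never actually prove $W_\sigma\notin\mathcal{G}$, you only assert that ``a local case analysis shows'' it. This assertion is not obviously justifiable, because membership in $\mathcal{G}$ is not a local question: an element of $\mathcal{G}$ is a product of link operators $\overline{K}_{u,v}$ taken over the \emph{entire} lattice, and distant factors can cancel, so restricting attention to gauge generators near $\mathrm{supp}(W_\sigma)$ requires an argument you do not supply. Worse, your candidate is at real risk of actually lying in $\mathcal{G}$: in the unmodified construction the four boundary edges of an $e$-face generate a type-(S5) \emph{stabilizer}, i.e.\ an element of $\mathcal{G}\cap C(\mathcal{G})$, so the burden of showing that your augmented version escapes $\mathcal{G}$ is substantial and is the whole content of the theorem. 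There is also a secondary imprecision: you take the balancing path $P$ in $\overline{\mathcal{H}}$, but a hypercycle (and hence the loop operator $W_\sigma=\prod_{e\in\sigma}K_e$, and the claim $W_\sigma\in C(\mathcal{G})$) must be built from edges of the hypergraph $\mathcal{H}$ itself; the rank-2 edges of $\overline{\mathcal{H}}$ obtained by splitting hyperedges are gauge links, not hypergraph edges, so a path through them does not yield a valid hypercycle without further justification.

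The paper's proof avoids this obstruction entirely with a different and cleaner device: it exhibits \emph{two} constant-weight hypercycles (each formed from a hyperedge paired with one of its neighbours together with the adjacent $e$-faces and a connecting rank-2 edge) whose loop operators anticommute. Since both operators lie in $C(\mathcal{G})$, and any element of $\mathcal{G}$ must commute with every element of $C(\mathcal{G})$, the anticommutation immediately certifies that neither operator can belong to $\mathcal{G}$; hence both are bounded-weight logical operators and $d=O(1)$. This anticommuting-pair trick is precisely the missing ingredient in your argument: it converts the hard, global non-membership claim into a one-line commutation check. To repair your proof you would either need to adopt this device or give a genuine (and necessarily global) argument that no product of link operators equals your $W_\sigma$.
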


\begin{proof}
The hypergraph from this construction contains unit cells of the form shown in Figure \ref{lemma_bad_code}.
\begin{center}
\begin{figure}[H]
\centering
\resizebox{0.4\textwidth}{!}{
  \includegraphics{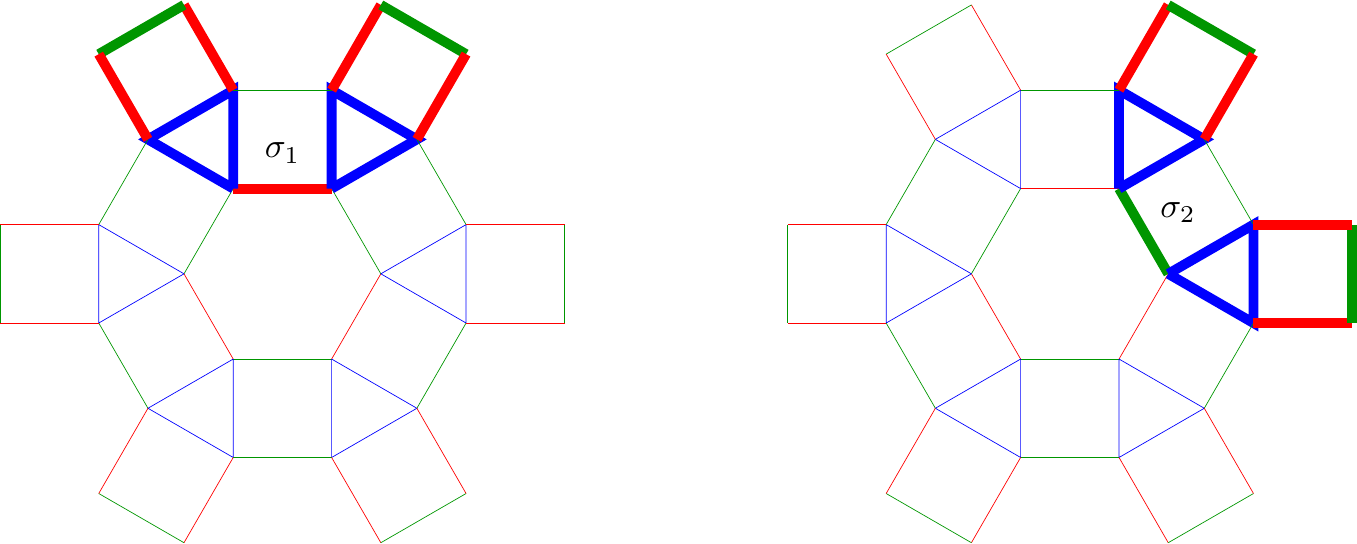}
}
\caption{(Color online) Examples of logical operators with $O(1)$ weight independent of the size of the code.}
\label{lemma_bad_code}
\end{figure}
\end{center}
We can easily check that this hypergraph supports a hypercycle consisting of two neighboring hyperedges and
the adjacent $e$-faces along with a rank-2 edge connecting the hyperedges. (Every vertex in the support of these edges
has degree two with respect to these edges.) The associated loop operator $W(\sigma)$ is in the centralizer of the gauge group. 
Observe that we can also pair a hyperedge with another neighboring hyperedge to give another hypercycle. 
It can be easily verified that the operators associated to these two hypercycles anticommute. Therefore they cannot be in the
center of the gauge group and hence there are not stabilizers of the subsystem code. In other words they are logical operators
with a finite weight. Thus the distance of the code is $O(1)$. 
\end{proof}
In the preceding theorem we assume that the code is sufficiently big enough to contain structures of the form shown in 
Fig.~\ref{lemma_bad_code}. 
Although we only showed that these structures are to be avoided in the context of Construction~\ref{alg:gen-5s-tsc},
these structures should be avoided in any hypergraph subsystem code.

\section{Subsystem surface codes}\label{sec:SSSCs}

The subsystem codes studied so far have been based on hypergraphs
where the gauge group was generated by two body operators. 
In this section we study subsystem codes based on surfaces.
These codes were proposed in \cite{Bravyi2012}.
In these codes the gauge group is generated by weight three generators. 
These codes are amenable for a planar implementation and enable fault tolerant quantum computation
through code deformation techniques unlike the hypergraph subsystem codes which do not allow 
for code deformation through introduction of boundaries.

Recently, a new topological subsystem code based on a surface code was proposed by 
Bravyi et al. These codes have 3-qubit gauge operators and they enable fault tolerant quantum computing  
through code deformation techniques, a feature that was absent in the previously known classes of subsystem 
codes. We generalize this construction to a large class of surface codes. Consequently, we 
are able to obtain a code which has a lower overhead compared to the codes \cite{Suchara2010}.

\subsection{Constructions of subsystem surface codes}
In this section we propose a general method to construct subsystem surface codes. 
Our method enables us to construct codes of lower overhead. 

\renewcommand{\algorithmicrequire}{\textbf{Input:}}
\renewcommand{\algorithmicensure}{\textbf{Output:}}
\begin{construction}[h]
\caption{{\ensuremath{\mbox{ Generalized subsystem surface codes}}}}\label{proc:tssc}
\begin{algorithmic}[1]
\REQUIRE {A 4-valent graph $\Gamma$ embedded on a surface such that all faces are even.}
\ENSURE {A subsystem surface code on a graph $\mathcal{H}$.}
\STATE Find the medial graph of $\Gamma$. Denote it by $\Gamma_m$. Each 4-valent vertex gives rise to a new face. 
Let us call such a face a $v$-face.
\begin{center}
  \includegraphics{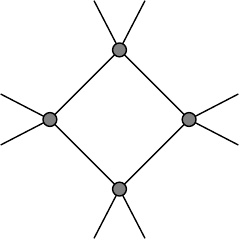}
\end{center}

\STATE In each $v$-face, add a new vertex and connect it to all the vertices in the 
face. Denote this as $\mathcal{H}_\Gamma$. 
\begin{center}
  \includegraphics{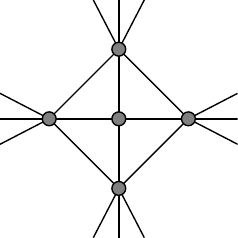}
\end{center}

\STATE Color the resulting complex such that the $f$-faces are all of the same color
and the triangles in the faces are 2-colorable.

\begin{center}
  \includegraphics{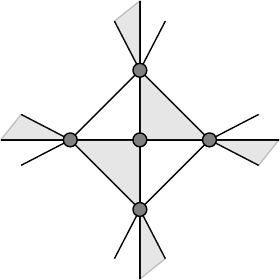}
\end{center}
\STATE Assign a gauge operator $Z_uZ_vZ_w$ or
$X_uX_vX_w$ depending on the color of the triangular face. 

\STATE Let $\mc{G} = \langle g_e \mid e \mbox{  is a triangle in }\mathcal{H}_{\Gamma} \rangle $.
\end{algorithmic}
\end{construction}

\begin{theorem}\label{th:ssc-new}
Let $\Gamma$ be a 4-valent graph embedded on closed surface of genus $g$ such that every face is even sided. 
Then Construction~\ref{proc:tssc}
gives a
subsystem surface code encoding $2g$ qubits into $3v$ qubits,   where $v=|V(\Gamma)|$ and $g$ is the genus of surface.
The code has $v-2g-2$ gauge qubits.
\end{theorem}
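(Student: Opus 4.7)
The plan is to count the parameters of the subsystem code by identifying the physical qubits, gauge generators, and stabilizer generators, using the Euler characteristic of the underlying surface.

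First, I count the physical qubits. Since $\Gamma$ is $4$-valent with $v$ vertices, the handshake lemma gives $|\mathsf{E}(\Gamma)| = 2v$, and Euler's formula $v - 2v + f = 2 - 2g$ yields $f = v + 2 - 2g$ faces for $\Gamma$. The medial graph $\Gamma_m$ therefore has $2v$ vertices (one per edge of $\Gamma$), $v$ $v$-faces (each a $4$-cycle, one per vertex of $\Gamma$), and $f$ $f$-faces. Step~2 of Construction~\ref{proc:tssc} adjoins a single central vertex to each $v$-face, so $\mathcal{H}_\Gamma$ has $n = 2v + v = 3v$ vertices, which are the physical qubits. Each $v$-face is subdivided into four triangles, giving $4v$ triangle gauge generators, namely $2v$ of $X$-type and $2v$ of $Z$-type after the $2$-coloring.

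Second, I identify the stabilizer group $S = C(\mc{G}) \cap \mc{G}$ (modulo phases). For each $f$-face of $\Gamma_m$, I would exhibit one $X$-type and one $Z$-type stabilizer, constructed as a product of triangle operators winding around the boundary of the $f$-face together with central vertices of the surrounding $v$-faces. A careful enumeration then yields $s = 2v + 2$ independent stabilizer generators; the offset from the naive count $2f = 2v + 4 - 4g$ reflects the two obvious global relations (the product of all $X$-triangles and the product of all $Z$-triangles each being the identity) balanced against the further relations contributed by the topology of the surface.

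Third, the $2g$ independent first-homology classes on the genus-$g$ surface each support a pair of anticommuting logical operators in $C(S) \setminus \mc{G}$, so the code encodes $k = 2g$ logical qubits. Using $n = k + r + s$ I then solve $r = 3v - 2g - (2v + 2) = v - 2g - 2$, which is the claimed number of gauge qubits.

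The main obstacle will be the explicit construction and verification of the $2v + 2$ stabilizer generators in the second step. Plaquette-like products supported on a single $v$-face boundary or on a single $f$-face boundary do not commute with every triangle operator, because a triangle in a neighboring $v$-face can overlap such a plaquette in a single qubit and hence anticommute. The correct stabilizer generators therefore have to weave boundary qubits of an $f$-face together with central qubits of each surrounding $v$-face in a way that cancels all stray single-qubit anticommutations, and verifying this cancellation and the claimed independence count is the technical heart of the argument.
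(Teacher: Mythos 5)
Your overall strategy (count qubits, gauge generators, and stabilizer generators, then use Euler's formula) is the same as the paper's, and the first step is fine: $n=3v$ qubits and $4v$ triangle gauge operators, with the two global product relations leaving $4v-2$ independent gauge generators. The gap is in your stabilizer count. The naive count you state, $2f = 2v+4-4g$ face stabilizers minus the two relations $\prod_f s_f^X = \prod_f s_f^Z = I$, gives $s = 2(f-1) = 2v+2-4g$, and that is where the count should stop: there are no further independent stabilizers "contributed by the topology" that would raise $s$ to $2v+2$. (Additional \emph{relations} could only lower the count, and additional homologically nontrivial elements of $C(\mathcal{G})\cap\mathcal{G}$ would have to be exhibited and would eat into $k$, contradicting your own third step.) With the correct $s = 2v+2-4g$ and the gauge-generator count $2r+s = 4v-2$, one gets $r = v+2g-2$, not $v-2g-2$, and then $k = n-r-s = 2g$ as claimed. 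A concrete sanity check: for the square lattice on the torus ($v=d^2$, $g=1$) the construction must reproduce Bravyi et al.'s $[[3d^2,2,d^2,d]]$ code, which has $r=d^2 = v+2g-2$ gauge qubits and $s=2d^2-2=2(f-1)$ stabilizers; your values $s=2d^2+2$ and $r=d^2-4$ are inconsistent with this.

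Two further remarks. First, the theorem statement itself contains a sign slip -- it reads $v-2g-2$ where the paper's own proof (and the Bravyi et al.\ benchmark) gives $v+2g-2$ -- and your proposal appears to have reverse-engineered the stabilizer count to match the misprinted statement rather than deriving it; this is worth being alert to. Second, the "main obstacle" you flag about commutation of the plaquette-like stabilizers is not really an obstacle in the paper's formulation: the stabilizer $s_f^\sigma$ is \emph{defined} as $\prod_{e\in\partial f} g_e^\sigma$, a product of whole triangle operators, so it automatically includes the central qubits of the surrounding $v$-faces, and the commutation with an arbitrary triangle $g_e^{\sigma'}$ follows from a short overlap-parity argument rather than any delicate weaving. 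Finally, note that your route to $k=2g$ (asserting logical operators from first homology and back-solving for $r$) inverts the paper's logic, which pins down $r$ from the gauge-generator count and only then concludes $k=n-s-r=2g$; your route additionally requires showing that the homology classes exhaust $C(S)$ modulo $\mathcal{G}$, which you do not address.
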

\begin{proof}
The number of vertices present in the medial graph $\Gamma_m$ is equal to the number of edges present in $\Gamma$ which is equal to $2v$ owing to the 4-valent nature of $\Gamma$. In addition, after Construction~\ref{proc:tssc}, $v$ vertices are added. So the total number of qubits in the resulting  subsystem surface code is $3v$. 

The construction introduces four gauge operators for each vertex in $\Gamma$. Hence we have $4v$ gauge operators.
Since the product of all the $X$ or $Z$ type gauge operators is identity, we have  $4v - 2$ are independent generators for the
gauge group.

From each face of $\Gamma$, we obtain two stabilizers $s_f^X$ and $s_f^Z$, where 
\begin{eqnarray}
s_f^\sigma = \prod_{e \in \partial f } g_e^\sigma
\end{eqnarray}
where $\partial f$, is the boundary of $f$.
Since $s_f^\sigma $ is generated by the gauge operators, it is in the gauge group. 
We can also check that $s_f^\sigma$ commutes with all the gauge operators and therefore it is in $C(\mathcal{G})$. 
It is evident that $s_f^\sigma $ commutes with the gauge operators of the same type ie $g_e^\sigma$.
Suppose $g_e^{\sigma'}$ is such that $\sigma\neq \sigma'$. 
If $e\in \partial f$, then $g_e^{\sigma'}$ and $s_f^\sigma$ overlap in exactly two vertices and they commute. 
If $e\not\in \partial f$, then it must overlap with $s_f^\sigma$ in at most two  vertices. If it overlaps in two or zero locations,  then again  they commute. 
Gauge operators $g_e$ which overlap exactly once with $s_f^\sigma$ must be in $\mathcal{G}_\sigma$, and hence  commute.
Observe that 
\begin{eqnarray}
\prod_f s_f^\sigma = I.
\end{eqnarray}
Thus, the number of independent stabilizer generators is $2(f-1)$ where $f$ is the number of faces in $\Gamma$. But in $\Gamma$, $v - 2v + f = 2 - 2g$ and therefore $f = 2 + v - 2g$. So the number of independent stabilizer generators is $s=2(v - 2g + 1)$. If $r$ is the number of gauge qubits, $2r =  4v - 2 - 2(v - 2g + 1) $ which gives $r = v + 2g - 2$. The dimension of the subsystem code is $k = n - s - r = 2g$. 
\end{proof}

Note that the faces of $\Gamma_m$ correspond to the vertices and faces of $\Gamma$. 
A face in $\Gamma_m$ which is derived form a face (vertex) in $\Gamma$ is called an $f$-face ( $v$-face). 
In  $\mathcal{H}_\Gamma$, the triangles come from the $v$-faces of $\Gamma_m$ and each $v$-faces gives rise to 
four such triangles. Of more importance are the faces in $\mathcal{H}_\Gamma$ which are derived from the 
$f$-faces of $\Gamma_m$. 
We shall call these faces of $\mathcal{H}_\Gamma$ also as $f$-faces. 
Each of these correspond to faces in $\Gamma$ so without ambiguity we can label them by the faces in $\Gamma$.

If $\chi = 2 - 2g$, the dimension of the code is $2 - \chi$ and the number of gauge qubits is $v - \chi$.
Note that  $\Gamma$ gives a  $[[2v, 2g]]$ surface code. The subsystem code  
adds   50\% overhead. 
The natural question is if we can lower these overheads. 
On the torus, Kitaev's toric code on the square lattice has the parameters $[[2n^2,2]]$. 
This code can be improved to give lower overheads, see \cite{bombin06x}.
Motivated by this construction we 
propose a class of subsystem surface codes that have lower overheads.
We consider a different tiling of the torus proposed in \cite{bombin06x}, see Fig.~\ref{fig:opt-surface-code}.
We observe that this graph meets the conditions of Construction~\ref{proc:tssc}  is applicable. 
Then by Theorem~\ref{th:ssc-new}, we can obtain 
a $[[3(d^2+1)/2,2, (d^2+1)/2,d]]$ subsystem code.
With respect to the $[[3d^2,2,d^2,d]]$ subsystem codes of Bravyi et al., they offer a reduced overhead. 
We conjecture that these are optimal. 

\begin{center}
\begin{figure}[htb]
\centering
  \includegraphics{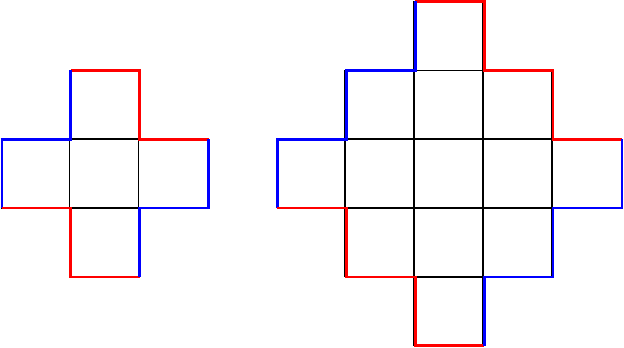}
\caption{(Color online) A  $[[d^2+1,2,d]]$ surface code on a torus for $d=3,5$. Boundaries of same color are to be identified. }\label{fig:opt-surface-code}
\label{fig-opt-surface-code}
\end{figure}
\end{center}

\subsection{Decoding subsytem surface codes}
\begin{theorem}
The subsystem surface codes of Theorem~\ref{th:ssc-new} can be efficiently decoded using Algorithm~\ref{alg:surf-ss-code-decode}.
\end{theorem}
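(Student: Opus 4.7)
The plan is to mirror the two-step strategy developed for the generalized five-squares codes in Section~\ref{sec:NUR3HSSCs}, exploiting the fact that the gauge group here has both $X$-type and $Z$-type triangle generators, so by symmetry we can handle $X$ and $Z$ errors in parallel rather than sequentially. First I would observe that, by construction, the $X$-type and $Z$-type gauge operators supported on the four triangles of each $v$-face of $\Gamma_m$ allow any single-qubit $X$ (resp.\ $Z$) error on a qubit incident to the central vertex of that $v$-face to be pushed to any other qubit of the same $v$-face, up to elements of $\mathcal{G}$. The first step of the reduction is therefore to argue, in the manner of Lemma~\ref{lm:-cubic-reduction}, that every Pauli error is equivalent modulo $\mathcal{G}$ to a pair consisting of an $X$-only error and a $Z$-only error, each supported on a canonical choice of one qubit per edge of $\Gamma$ (i.e., one qubit per vertex of $\Gamma_m$).

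Next I would show that these residual errors project naturally onto a pair of surface codes on $\Gamma^\ast$. The key structural fact, inherited from Theorem~\ref{th:ssc-new}, is that each $X$-type stabilizer $s_f^X$ and each $Z$-type stabilizer $s_f^Z$ corresponds to a face $f$ of $\Gamma$, that is, to a vertex of $\Gamma^\ast$. A single residual $X$ error lives on a qubit that participates in exactly two $Z$-type face stabilizers (the two $f$-faces of $\mathcal{H}_\Gamma$ meeting at that qubit), so it flips two syndromes sitting on adjacent vertices of $\Gamma^\ast$ and can be identified with the edge between them. The symmetric statement holds for residual $Z$ errors via the $X$-type syndromes. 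In this way the $X$ syndrome and $Z$ syndrome each induce a matching problem on $\Gamma^\ast$, exactly as in the proofs of Theorem~\ref{th:Z-error-cubic-codes} and of the mapping lemma in Section~\ref{sec:NUR3HSSCs}.

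The heart of the proof, and the main obstacle, will be verifying that this two-sheet projection is well defined and consistent, in two senses. First, one must check that the canonical representative of an equivalence class produced by step one is independent of the arbitrary choices made while pushing errors around a $v$-face (the analogue of the bipartiteness argument used in Section~\ref{sec:NUR3HSSCs}); here one uses that each $v$-face has exactly four triangles in $\mathcal{H}_\Gamma$ with the $X$- and $Z$-types alternating, so $X$-equivalences and $Z$-equivalences commute cleanly. Second, one must check that any residual error forming a cycle in $\Gamma^\ast$ either lies in the stabilizer or contributes a logical operator, which follows from the parameter count in Theorem~\ref{th:ssc-new} together with the standard homological description of logical operators on $\Gamma^\ast$.

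Finally, efficiency is immediate once the mapping is in place: the two resulting decoding subproblems on $\Gamma^\ast$ are independent $2$D surface code decoding problems, which can be solved in polynomial time by minimum-weight perfect matching (or any of the decoders cited earlier in the paper). Lifting the estimated surface-code errors $E_1, E_2$ back to Pauli operators on $\mathcal{H}_\Gamma$ through the canonical qubit correspondence yields the final estimate $\hat{E}$, and the correctness of the lift is guaranteed by the equivalence established in step one. I would close by remarking, as in Theorem~\ref{th:dec-gen-5s}, that the total running time is dominated by the two matching computations and is therefore polynomial in the number of qubits.
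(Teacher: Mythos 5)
There is a genuine gap in your first reduction step. You claim that the gauge operators on the four triangles of a $v$-face let you ``push'' a single-qubit $X$ (or $Z$) error to any other qubit of the same $v$-face, in the manner of Lemma~\ref{lm:-cubic-reduction}. That lemma relies on the gauge generators being \emph{two-body}: multiplying $X_u$ by a weight-2 link operator containing $u$ relocates the error to a single other qubit. Here the gauge generators are weight-3 triangles, so multiplying $X_u$ by $X_uX_vX_w$ produces the \emph{two}-qubit error $X_vX_w$; a single-qubit error is never equivalent to a single-qubit error elsewhere. Consequently your proposed canonical form --- an error supported on one qubit per edge of $\Gamma$, with the $v$ central qubits eliminated --- is not obtained by the mechanism you describe, and your subsequent projection onto a plain surface code on $\Gamma^\ast$ (which has no edges for the central qubits) does not match what Algorithm~\ref{alg:surf-ss-code-decode} actually does. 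The graphs $\Gamma_X$ and $\Gamma_Z$ in the algorithm are \emph{not} $\Gamma^\ast$: they contain one edge per qubit of $\mathcal{H}_\Gamma$, including a ``diagonal'' edge for each central qubit joining the two opposite faces whose stabilizers that qubit flips (with complementary diagonals in $\Gamma_X$ and $\Gamma_Z$). Since the theorem asserts decodability \emph{by that algorithm}, a proof that silently replaces its graphs by $\Gamma^\ast$ is proving a different statement.

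The paper's proof is in fact simpler than your plan: it performs no gauge reduction at all. Because the stabilizers split into pure $X$-type and pure $Z$-type, bit flips and phase flips are handled separately, and one checks directly that every single-qubit $X$ error --- whether on a boundary qubit or on the central qubit of a $v$-face --- anticommutes with exactly two $Z$-type face stabilizers. Identifying each $f$-face with a vertex and each qubit with the edge joining the two faces it flips yields $\Gamma_X$ (and symmetrically $\Gamma_Z$), and the resulting one-to-one correspondence between qubits and edges makes the lift from the surface-code estimate back to the subsystem code unambiguous. Your closing points about efficiency and the homological treatment of residual cycles are fine, but they rest on the flawed reduction; if you drop the gauge-reduction step and instead verify the two-checks-per-qubit property for all $3v$ qubits, your argument collapses onto the paper's.
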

\begin{proof}
As can be seen from the proof of Theorem~\ref{th:ssc-new}, there are 
two types of stabilizers. 
Since they are either $X$-type or $Z$-type we can correct the bit flip and phase flip errors separately. 
Suppose we have a bit flip error on a qubit in the boundary of an $f$-face, such as qubit $v$ in line 2 of Algorithm~\ref{alg:surf-ss-code-decode}. 
This leads non zero syndrome for  exactly two stabilizer generators $s_{f_1}^Z$ and $s_{f_2}^Z$. 
Similarly, bit flip errors on qubits $w$, cause nonzero syndrome  for
$s_{f_2}^Z$ and $s_{f_3}^Z$,   $X_s$ causes nonzero syndrome on $s_{f_3}^Z$ and $s_{f_4}^Z$ and $X_t$ 
on  $s_{f_1}^Z$ and $s_{f_4}^Z$.
A bit flip error on the qubit in the interior of a $v$-face, such as $u$ causes nonzero syndrome with respect
to the $s_{f_2}^Z$ and $s_{f_4}^Z$. 
Note that the stabilizers $s_{f_1}^Z$ and $s_{f_2}^Z$ do not have any overlap with $X_u$ and are unaffected.
We can capture all this information by representing i) each $f$-face by a vertex ii)  each qubit by a edge 
and iii) associate the stabilizer $s_f^Z$ to the corresponding vertex,  as shown in line 2 of
Algorithm~\ref{alg:surf-ss-code-decode}.  
With this mapping we can project the syndromes on the subsystem surface code to the vertices of $\Gamma_X$.
We can then decode the errors on the surface code and lift it back to the subsystem code because there is a one to 
one correspondence between the errors on the surface code and the subsystem code. 
This allows us to lift the errors from the surface code to the subsystem code unambiguously. 
A similar reasoning for phase flip errors leads to the mapping shown in line 3.
We omit the details. 
\end{proof}

\begin{algorithm}[hb!]
\caption{{\ensuremath{\mbox{Decoding subsystem surface code}}}}\label{alg:surf-ss-code-decode}
\begin{algorithmic}[1]
\REQUIRE {A  subsystem surface code on $\mathcal{H}_\Gamma$ constructed from a graph $\Gamma$ according to Construction~\ref{proc:tssc} and syndromes  $s_f^{i},  f \in \mathsf{F}(\Gamma)$ and $i \in \{X, Z\}$.}
\ENSURE {Error estimate $\hat{E}$ such that $\hat{E}$ has the input syndrome.}
\STATE $\hat{E}=I$
\STATE Construct $\Gamma_X$ with vertices labeled by $f$-faces of $\mathcal{H}_\Gamma$ as shown below. 

\begin{center}
	\includegraphics{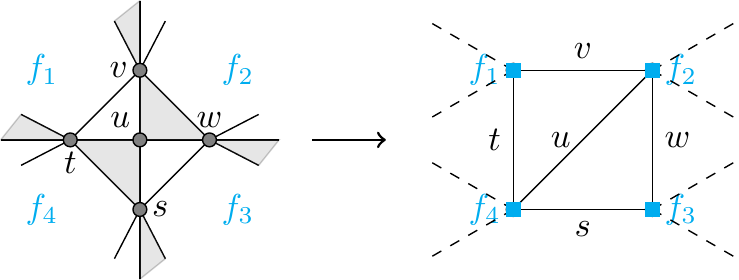}
\end{center}

\STATE Construct $\Gamma_Z$ with vertices labeled by $f$-faces of $\mathcal{H}_\Gamma$ as shown below. 
\begin{center}
	\includegraphics{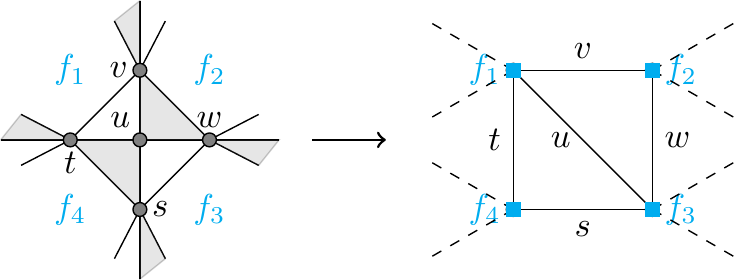}
\end{center}

\STATE Project syndromes $s_f^X$ and $s_f^Z$ onto vertices of $\Gamma_X$ and $\Gamma_Z$ for all $f\in \mathsf{F}(\Gamma)$ where $\Gamma_X$ and $\Gamma_Z$ are constructed from $\mathcal{H}_\Gamma$ as above. 
\STATE Decode the errors on $\Gamma_X$ and $\Gamma_Z$ using any 2D surface code decoder. Denote by $E'_X$ and $E'_Z$, the estimates. 
\STATE Lift the errors $E_X'$, $E_Z'$ to subsystem code. Denote them  $\bar{E}_X$ and $\bar{E}_Z$ respectively.
\STATE Return $\hat{E}=\bar{E}_X \bar{E}_Z$
\end{algorithmic}
\end{algorithm}

\section{Simulation Results}
\label{sec:results}

\begin{figure*}[ht!]
\centering
\includegraphics[width=90mm]{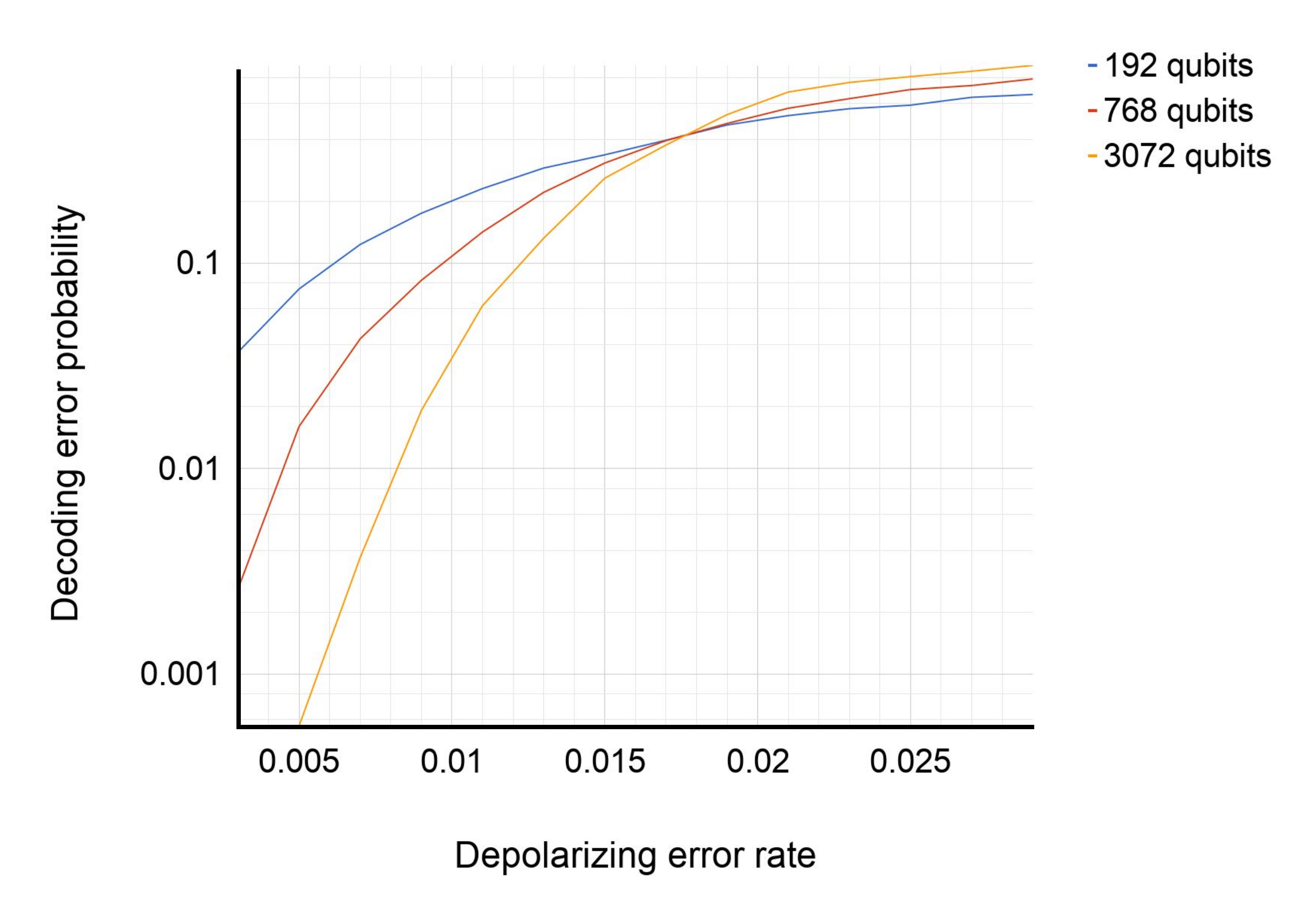}
\captionof{figure}{Performance of the two step decoding algorithm for the hypergraph subsystem code on the square octagon lattice. Noise threshold is about $1.75\%$.}
\label{fig:res-allX-cleanup}
\end{figure*}

We have performed simulations for decoding the subsystem codes obtained from vertex expansion of color codes on square octagon 
lattice, for lattices of different sizes and for different depolarizing error rates. 
We generate a large number of error samples for each error rate and each lattice. For each error sample $E$, we then find an error estimate $\hat{E}$ using one of our decoding algorithms, (specifically Algorithm~\ref{alg:tcc-projection1}). 
If the effective error $E_{eff} = E\hat{E}$ anti-commutes with any of the logical operators of the code, 
then we increment the number of 
failed decoding attempts. Thus obtaining the probability of failure for different error rates, we plot probability of failure against 
the depolarizing noise rate for lattices of different sizes. 

We simulated for depolarizing noise rates ranging from 0 to 0.029 with a step size of 0.002 and for lattices with 192, 768 and 3072 qubits. For each data point 2500 samples were considered.  Fig. \ref{fig:res-allX-cleanup}
shows the result of the simulation. We obtained a noise threshold of about $1.75\%$ for the algorithm using the two step decoding algorithm. 
involving the cleaning up of $X$-errors on each unit cell.
 This is comparable to the threshold of 2\% obtained in \cite{Bombin2012}.
 Since our algorithms do not exploit the correlations between the $X$ and $Z$ errors, it is possible to improve our decoders. 
For the five squares code Bravyi et al. \cite{Suchara2010} obtained a noise threshold of about $2\%$.

\section{Conclusion}
\label{sec:conclusion}

Fault tolerant systems are a necessity without which no quantum computing systems can become a reality. 
Quantum codes are essential for quantum fault tolerance. 
For a quantum code to be useful it is important to have a low complexity decoder. 
Since subsystem codes require only two or three qubit measurements they could be more amenable for experimental realizations. 

In our work, we have proposed algorithms for decoding different families of subsystem codes such as cubic subsystem color codes, topological subsystem color codes, hypergraph subsystem codes and subsystem surface codes.
These decoding algorithms are applicable to a large classes of codes without requiring individual optimization for each code. 
We have shown how to decode cubic subsystem color codes  by mapping them onto a surface code after a preprocessing step. Similarly subsystem surface codes can be decoded by mapping onto other topological codes such as color codes and surface codes.
The advantage of this approach is that there are well established algorithms for decoding surface and color codes. 
We evaluated our decoding algorithms by numerical  simulations and show that they achieve performance comparable to previously known decoders. 

In addition we showed some structural results on subsystem codes which are of independent interest. 
We showed that certain classes of hypergraph subsytem codes have poor distance. 
These results hint that some configurations to be avoided for good hypergraph subsystem codes. 
We also gave a new construction for subsystem surface codes. This gave a family of subsystem surface codes with lower overheads.  


\appendix

\section{Uniform rank-3 hypergraph subsystem codes}
\label{app:sq_oct_code}
We give an example of the construction of subsystem codes from Construction~\ref{alg:hg-tsc-construction} in Fig.~\ref{fig:sq_oct_ur3_hsc_const}. 
The 2-colex is obtained in this figure is obtained from Construction~\ref{alg:tcc-construction}.
The assignment of the gauge operators differs from a subsystem code obtained from vertex expansion, see Fig.~\ref{tsc2color}. 

The resulting subsystem code has  four types of stabilizers. 
These are shown in Fig.~\ref{fig:ur3hsc_sq_oct}. 
\begin{compactenum}[(i)]
\item  A cycle consisting of simple edges surrounding an $f$-face of $\Gamma_2$.
\item  A cycle formed by alternating simple and hyperedges on the outer boundary of a $v$-face and alternating simple edges on the boundary of its inner face introduced by construction~\ref{alg:hg-tsc-construction}.
\item A cycle  consisting of alternating simple edges around an $f$-face and the simple and hyperedges belonging to the surrounding $v$- and $f$- faces. 
\item  A loop of simple edges forming the inner face of a $v$-face.
\end{compactenum}
As in Algorithm~\ref{alg:tcc-projection1}, the bit flip errors are corrected first using the stabilizers from rank-2 cycles.
This leaves $Z$ errors on the hyperedges. 
To correct the phase flip errors, the subsystem code is then mapped to the color code based on square octagon lattice as shown in Fig.~\ref{fig:tsc2color_sqoct} underlying the subsystem code. 
The error estimated after decoding the color code is then 
lifted to the subsystem code. For every $Z$ error estimated in the color code the corresponding error estimate on the subsystem code will be the $Z$ errors on the three qubits of the corresponding hyperedge.
\begin{figure*}[h!]
\centering
  \includegraphics{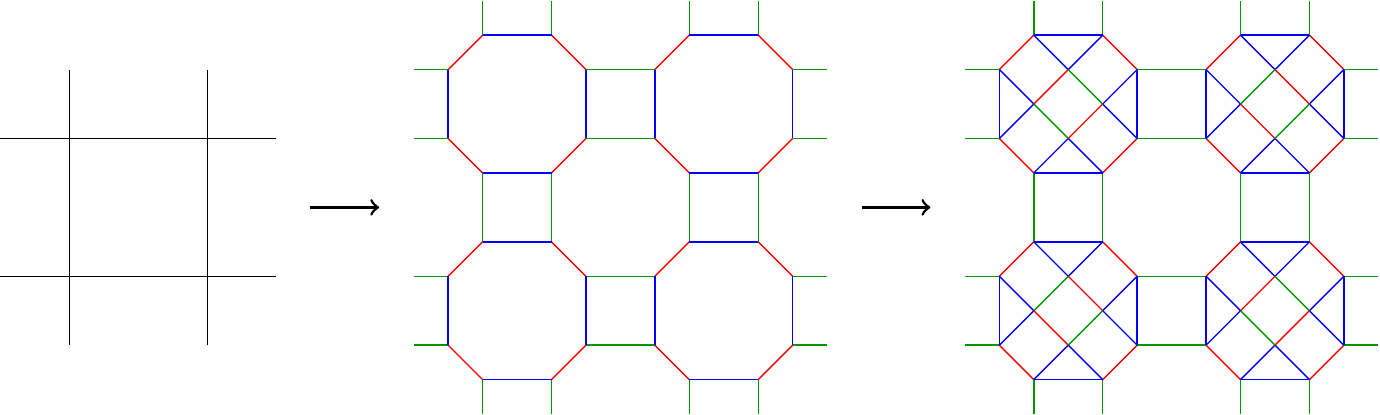}
\caption{Construction of uniform rank-3 hypergraph subsystem codes based on square lattice: A 2-colex is constructed from the square lattice using Construction~\ref{alg:tcc-construction}.
The subsystem code is constructed from the 2-colex using Construction~\ref{alg:hg-tsc-construction}}
\label{fig:sq_oct_ur3_hsc_const}
\end{figure*}

\begin{figure*}[h!]
\centering
\begin{subfigure}[t]{0.3\textwidth}
\centering
  \includegraphics{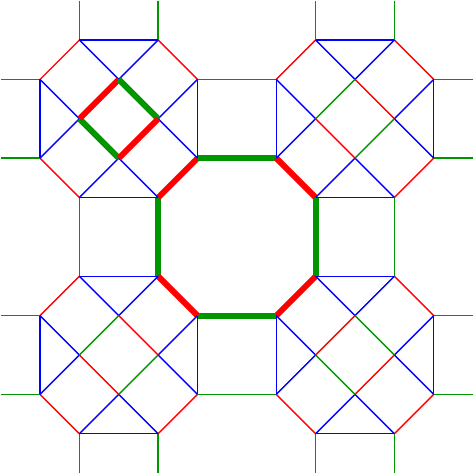}
\caption{The cycles formed by thick edges show the simple cycles around respective faces}
\end{subfigure}%
~
\begin{subfigure}[t]{0.3\textwidth}
\centering
  \includegraphics{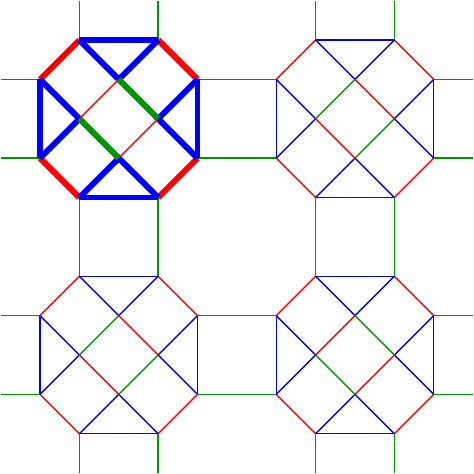}
\caption{The thick simple edges and thick hyper edges constitute a hyper cycle around the square face}
\end{subfigure}
~
\begin{subfigure}[t]{0.3\textwidth}
\centering
  \includegraphics{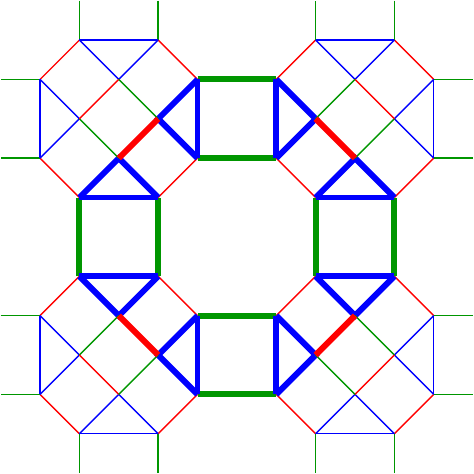}
\caption{The hyperedges highlighted with thick blue edges and thick simple edges form a hypercycle around the octagon face}
\end{subfigure}

\caption{(Color online) Illustrating two types of cycles associated with the stabilizers of uniform rank-3 hypergraph subsystem codes constructed from square lattice.}
\label{fig:ur3hsc_sq_oct}
\end{figure*}

\begin{figure}[h]
\centering
  \includegraphics{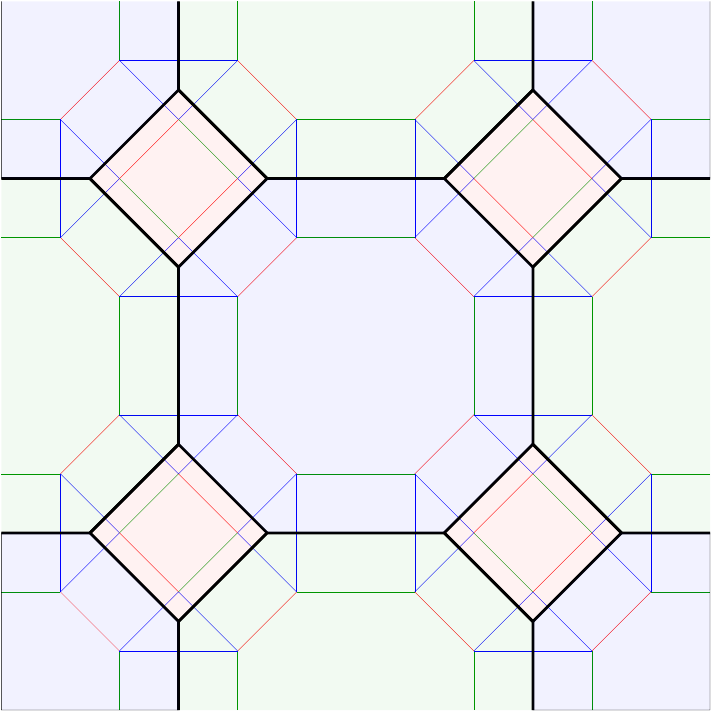}
\caption{(Color online) Mapping the uniform rank-3 hypergraph  constructed in Fig.~\ref{fig:sq_oct_ur3_hsc_const} to a 2-colex, by contracting the rank-3 edges and removing parallel edges.}
\label{fig:tsc2color_sqoct}
\end{figure}

\section*{Acknowledgment}
The authors would like to thank Amit Anil Kulkarni for assistance with the simulations. PS would like to thank David Poulin for helpful discussions on subsystem codes. 

\ifCLASSOPTIONcaptionsoff
  \newpage
\fi

\end{document}